\numberwithin{equation}{section}
\theoremstyle{plain}
\newtheorem{theorem}{\indent Theorem}[section]
\newtheorem{proposition}[theorem]{\indent Proposition}
\newtheorem{lemma}[theorem]{\indent Lemma}
\newtheorem{corollary}[theorem]{\indent Corollary}
\theoremstyle{definition}
\newtheorem{definition}[theorem]{\indent Definition}
\newtheorem{remark}[theorem]{\indent Remark}
\newtheorem{stat}[theorem]{Statement}
     \newcommand{\cl}{{\cal L}}
\newcommand{\ch}{{\cal H}}
\newcommand{\ca}{{\cal A}}     \newcommand{\cm}{{\cal M}}
\newcommand{\cb}{{\cal B}}     \newcommand{\cc}{{\cal C}}
\newcommand{\cu}{{\cal U}}     
\newcommand{\up}{\raisebox{0.7mm}{$\upharpoonright \,$}}%
\newcommand{\mult}{{\scriptstyle \Box}}
\newcommand{\al}{{\alpha}}     \newcommand{\be}{{\beta}}
     \newcommand{\la}{{\lambda}}
\newcommand{\ve}{{\varepsilon}}
     \newcommand{\La}{{\varLambda}}
    \newcommand{\vF}{{\varPhi}}
  \newcommand{\1}{\mathit{1}}
  \newcommand{\C}{{\mathbb C}}
\newcommand{\N}{{\mathbb N}}  \newcommand{\R}{{\mathbb R}}
\def\x{\relax\ifmmode {\mbox{*}}\else*\fi}
\newcommand{\fd}{\mathcal{D}}\newcommand{\ff}{\mathfrak{F}}
\newcommand{\fm}{\mathfrak{M}}
\newcommand{\lto} {{\longrightarrow}}
\newcommand{\ol}{\overline}
\newcommand{\norm}{{\| \cdot \|}}
\newcommand{\cs}{{C^*}}
\newcommand{\A}{\ca}
\newcommand{\bfn}{\mathbf{n}}
\newcommand{\DD}{\mathcal D}
\newcommand{\BB}{\mathcal B}
\newcommand{\LL}{\mathcal L}
\newcommand{\ip}[2]{\langle {#1}|{#2}\rangle}
\begin{document}
\pagestyle{myheadings} \markboth{\centerline{\small{\sc
            F.\;Bagarello, M.\;Fragoulopoulou, A.\;Inoue and C.\;Trapani}}}
         {\centerline{\small{\sc Locally convex quasi $C^*$-normed algebras}}}
 \title{\bf Locally convex quasi $C^*$-normed algebras}
 \footnotetext{Keywords: Locally convex quasi $C^*$-normed algebra,
 regular locally convex topology, strong commutatively
 quasi-positive element, commutatively quasi-positive element.}
  \footnotetext{Mathematics Subject
 Classification (2000): 46K10, 47L60.}
 \footnotetext{The second author thankfully acknowledges partial support of this research by the
Special Research Account: Grant Nr 70/4/5645, University of Athens.}

\author{\bf F.\;Bagarello, M.\;Fragoulopoulou, A.\;Inoue and C.\;Trapani}

\date{}

\maketitle
%%%%%%%%%%%%%%%%%%%%%%%%%%%%%%%%%%%%%%%%%%%%%%%%%%%%%%%%%%%%%%%%%%%%%%%%%%%
\begin{abstract}
If $\ca_0[\| \cdot \|_0]$ is a $\cs$-normed algebra and $\tau$ a
locally convex topology on $\ca_0$ making its multiplication
separately continuous, then $\widetilde{\ca_0}[\tau]$ (completion of
$\ca_0[\tau]$) is a locally convex quasi $*$-algebra over $\ca_0$,
but it is not necessarily a locally convex quasi $*$-algebra over
the $\cs$-algebra $\widetilde{\ca_0}[\| \cdot \|_0]$ (completion of
$\ca_0[\| \cdot \|_0]$). In this article, stimulated by physical examples,
we introduce the notion of
a locally convex quasi $\cs$-normed algebra, aiming at the
investigation of $\widetilde{\ca_0}[\tau]$; in particular, we study
its structure, $*$-representation theory and functional calculus.
\end{abstract}
%%%%%%%%%%%%%%%%%%%%%%%%%%%%%%%%%%%%%%%%%%%%%%%%%%%%%%%%%%%%%%%%%%%%%%%%%%%%%
\setcounter{section} {0}
\section{Introduction}

In the present paper we continue the study introduced in
\cite{ba-maria-i-t} and carried over in \cite{maria-i-kue} and
\cite{bmit}. At this stage, it concerns the investigation of the
structure of the completion of a $\cs$-normed algebra $\ca_0[\|
\cdot \|_0]$, under a locally convex topology $\tau$
$``$compatible$"$ to $\| \cdot \|_0$, that makes the multiplication
of $\ca_0$ separately continuous. The case when $\ca_0[\| \cdot
\|_0]$ is a $\cs$-algebra and $\tau$ makes the multiplication
jointly continuous was considered in \cite{ba-maria-i-t,
maria-i-kue}, while the analogue case corresponding to  separately
continuous multiplication was discussed in \cite{bmit}, where the
so-called \emph{locally convex quasi $\cs$-algebras} were
introduced. In this work, prompted by examples that one meets in
physics, we introduce the notion of \emph{locally convex quasi
$\cs$-normed algebras}, which is wider than that of locally convex
quasi $\cs$-algebras, starting
 with a $\cs$-normed algebra $\ca_0[\| \cdot \|_0]$ and a locally convex topology
 $\tau$,
$``$compatible$"$ to $\| \cdot \|_0$, making the multiplication of
$\ca_0$ separately continuous. For example, let $\cm_0$ be a
$\cs$-normed algebra of operators on a Hilbert space $\ch$, endowed
with the operator norm $\| \cdot \|_0$, $\fd$ a dense subspace of
$\ch$ such that $\cm_0 \fd \subset \fd$ and $\tau_{s^*}$ the
strong$^*$-topology on $\cm_0$ defined by $\fd$. Then, the
$\cs$-algebra $\widetilde{\cm_0}[\| \cdot\|_0]$ does not leave $\fd$
invariant, in general, and so the multiplication $ax$ of $a \in
\widetilde{\cm_0}[\tau_{s^*}]$ and $x \in \widetilde{\cm_0}[\|
\cdot\|_0]$ is not necessarily well-defined, therefore
$\widetilde{\cm_0}[\tau_{s^*}]$ is not a locally convex quasi
$\cs$-algebra over the $\cs$-algebra $\widetilde{\cm_0}[\|
\cdot\|_0]$. Hence, it is meaningful to study not only locally
convex quasi $\cs$-algebras, but also locally convex quasi
$\cs$-normed algebras.

For locally convex quasi $``$$\cs$-normed algebras$"$ we obtain
analogous results to those in \cite{bmit} for locally convex quasi
$``$$\cs$-algebras$"$ despite of the lack of completion and of
weakening the condition $\rm (T_3)$ of \cite{bmit}.

In Section 3 we consider a $\cs$-algebra $\ca_0[\| \cdot \|_0]$ with
a $``$regular$"$ locally convex topology $\tau$ and show that every
unital pseudo-complete symmetric locally convex $*$-algebra
$\ca[\tau]$ such that $\ca_0[\| \cdot \|_0] \subset \ca[\tau]
\subset \widetilde{\ca_0}[\tau]$ is a $GB^*$-algebra over the unit
ball $\cu(\ca_0)$ of $\ca_0[\| \cdot \|_0]$. The latter algebras
have been defined by G.R. Allan \cite{allan} and P.G. Dixon
\cite{dixon} and play an essential role in the unbounded
$*$-representation theory. In Section 4 we define the notion of
locally convex quasi $\cs$-normed algebras and study their general
theory, while in Section 5 we investigate the structure of
commutative locally convex quasi $\cs$-normed algebras. In the final
Section 6 we present locally convex quasi $\cs$-normed algebras of
operators and then we study questions on the $*$-representation
theory of locally convex quasi $\cs$-normed algebras and functional
calculus for the
 $``$commutatively quasi-positive$"$ elements of
 $\widetilde{\ca_0}[\tau]$.

Topological quasi $*$-algebras were introduced in 1981 by G. Lassner
\cite{lass1, lass2}, for facing solutions of certain problems in
quantum statistics and quantum dynamics. But only later (see
\cite[p. 90]{schmudgen}) the initial definition was reformulated in
 the right way, having thus included many more interesting examples.
 Quasi $*$-algebras came in light in 1988 (see \cite{trapani1}, as
 well as \cite{trapani2, ba-t1, ba-t2}), serving as important
 examples of partial $*$-algebras initiated by J.-P. Antoine and W.
 Karwowski in \cite{antoine-karwo1, antoine-karwo2}. A lot of works
 have been done on this topic, which can be found in the treatise
 \cite{ait3}, where the reader will also find a relevant rich
 literature. Partial $*$-algebras and quasi $*$-algebras keep a very
 prominent place in the study of unbounded operators, where the
 latter are the foundation stones for mathematical physics and
 quantum field theory (see, for instance, \cite{ait3, inoue, bag_rev,
 trapani2}).

 Our motivation for such studies comes, on the one hand, from the preceding
 discussion and the promising contribution of the powerful tool
 that the $\cs$-property offers to such studies and, on the other hand, from the
 physical examples of locally convex quasi $C^*$-normed algebras in
 ``dynamics of the BCS-Bogolubov model" \cite{lass2} that will be shortly discussed in Section 7.

%%%%%%%%%%%%%%%%%%%%%%%%%%%%%%%%%%%%%%%%%%%%%%%%%%%%%%%%%%%%%%%%%%%%%%%%%%%$$$
\setcounter{section} {1}
\section{Preliminaries}
Throughout the whole paper we consider complex algebras and we
suppose that all topological spaces are Hausdorff. If an algebra
$\ca$ has an identity element, this will be denoted by $\mathit{1}$,
and an algebra $\ca$ with identity $\mathit{1}$ will be called
\emph{unital}.

Let $\ca_0[\| \cdot \|_0]$ be a $\cs$-normed algebra. The symbol $\|
\cdot \|_0$ of the $\cs$-norm will also denote the corresponding
topology. Let $\tau$ be a topology on $\ca_0$ such that
$\ca_0[\tau]$ is a locally convex $*$-algebra. The topologies
$\tau$, $\| \cdot \|_0$ on $\ca_0$ are called \textit{compatible},
whenever for any  Cauchy net $\{x_\alpha\}$ in $\ca_0[\| \cdot \|_0]$ such that
$x_\alpha\rightarrow 0$ in $\tau$, $x_\alpha\rightarrow 0$ in $\| \cdot \|_0$
\cite{bmit}. The completion of $\ca_0$ with respect to $\tau$ will
be denoted by $\widetilde{\ca_0}[\tau]$. In the sequel, we shall
call a directed family of seminorms that defines a locally convex
topology $\tau$, \emph{a defining family of seminorms}.

A \textit{partial $*$-algebra} is a vector space $\ca$ equipped with
a vector space involution $*: \ca \rightarrow \ca: x \mapsto x^*$
and a partial multiplication defined on a set $\Gamma \subset \ca
\times \ca$ such that:

(i) $(x,y) \in \Gamma$ implies $(y^*, x^*) \in \Gamma$;

(ii) $(x, y_1), (x, y_2) \in \Gamma$ and $\lambda, \mu \in
\mathbb{C}$ imply $(x, \lambda y_1+\mu y_2) \in \Gamma$;

(iii) for every $(x,y) \in \Gamma$, a product $xy \in \ca$ is
defined, such that $xy$ depends linearly on $x$ and $y$ and satisfies  the
equality $(xy)^* = y^*x^*$.

Given a pair $(x,y) \in \Gamma$, we say that $x$ is a \textit{left
multiplier} of $y$ and $y$ is a \textit{right multiplier} of $x$,

Quasi $*$-algebras are essential examples of partial $*$-algebras.
If $\ca$ is a vector space and $\ca_0$ a subspace of $\ca$, which is
also a $*$-algebra, then $\ca$ is said to be a \textit{ quasi
$*$-algebra over $\ca_0$} whenever:

(i)$'$ The multiplication of $\ca_0$ is extended on $\ca$ as
follows: The correspondences

$\ca \times \ca_0 \rightarrow \ca : (a,x) \mapsto ax \text{ (left
multiplication of $x$ by $a$) } $ and

$ \ca_0 \times \ca \rightarrow \ca: (x,a) \mapsto xa
\text{ (right multiplication of $x$ by $a$) }$\\
are always defined and are bilinear;

(ii)$'$ $x_1(x_2 a) = (x_1 x_2) a, (a x_1)x_2= a(x_1 x_2)$ and $x_1
(a x_2) = (x_1 a ) x_2$, for all $x_1, x_2 \in \ca_0$ and $a \in
\ca$;

(iii)$'$ the involution $*$ of $\ca_0$ is extended on $\ca$, denoted
also by $*$, such that $(ax)^* = x^* a^*$ and $(xa)^* = a^* x^*$,
for all $x \in \ca_0$ and $a \in \ca$.

For further information cf. \cite{ait3}. If $\ca_0[\tau]$ is a
locally convex $*$-algebra, with separately continuous
multiplication, its completion $\widetilde{\ca_0}[\tau]$ is a quasi
$*$-algebra over $\ca_0$ with respect to the operations:

$\bullet$ \ \ $ax := \displaystyle \lim_\alpha x_\alpha x \text{
(left multiplication) }$, $x \in \ca_0$, $a \in
\widetilde{\ca_0}[\tau]$,

$\bullet$ \ \ $xa := \displaystyle \lim_\alpha x x_\alpha \text{
(right multiplication) }$, $x \in
\ca_0$, $a \in \widetilde{\ca_0}[\tau]$, \\
where $\{x_\alpha\}_{\alpha \in \Sigma}$ is a net in $\ca_0$ such
that $a = \tau$-$\displaystyle \lim_\alpha x_\alpha$.

$\bullet$ \ \ An involution on $\widetilde{\ca_0}[\tau]$ like in
(iii)$'$ is the continuous extension of the involution on $\ca_0$.

A $*$-invariant subspace $\ca$ of $\widetilde{\ca_0}[\tau]$
containing $\ca_0$ is called a \textit{quasi $*$-subalgebra} of
$\widetilde{\ca_0}[\tau]$ if $ax$, $xa$ belong to $\ca$ for any $x
\in \ca_0$, $a \in \ca$. One easily shows that $\ca$ is a quasi
$*$-algebra over $\ca_0$. Moreover, $\ca[\tau]$ is a locally convex
space that contains $\ca_0$ as a dense subspace and for every fixed
$x \in \ca_0$, the maps $\ca[\tau] \rightarrow \ca[\tau]$ with $a
\mapsto ax$ and $a \mapsto xa$ are continuous. An algebra of this
kind is called \textit{locally convex quasi $*$-algebra over
$\ca_0$}.

 We denote by $ \cl^\dag(\fd , \ch) $
the set of all (closable) linear operators $X$ such that $ D(X) = {\fd},\; D(X\x) \supseteq {\fd}.$ The set $
\cl^\dag(\fd , \ch) $ is a  partial *-algebra
 with respect to the following operations: the usual sum $X_1 + X_2 $,
the scalar multiplication $\lambda X$, the involution $ X \mapsto X^\dagger = X\x \up {\fd}$ and the \emph{ (weak)}
partial multiplication $X_1 \mult X_2 = {X_1}^\dagger\x X_2$, defined whenever $X_2$ is a weak right multiplier of
$X_1$ (we shall write $X_2 \in R^{\rm w}(X_1)$ or $X_1 \in L^{\rm w}(X_2)$), that is, iff $ X_2 {\fd} \subset
D({X_1}^\dagger\x)$ and  $ X_1\x {\fd} \subset D(X_2\x).$  $\cl^\dag(\fd , \ch)$ is neither associative
nor semiassociative.

\begin{definition}{\rm
Let $\DD$ be a dense subspace of a Hilbert space $\ch$. A
\textit{$*$-representation} $\pi$ of $\A[\tau]$ is a linear map
from $\A$ into $\cl^\dag(\fd , \ch)$ (see beginning of Section 4) with
the following properties:

(i) $\pi$ is a $*$-representation of $\A_0$;

(ii) $\pi(a)^\dag = \pi(a^*), \forall a \in \A$;

(iii) $ \pi(ax)= \pi(a) \Box \pi(x)$ and $\pi(xa)= \pi(x) \Box
\pi(a), \forall a \in \A$ and $x \in \A_0$, where $\Box$ is the
(weak) partial multiplication of $\cl^\dag(\fd , \ch)$ (ibid.) Having
a $*$-representation $\pi$ as before, we write $\DD(\pi)$ in the
place of $\DD$ and $\ch_\pi$ in the place of $\ch$. By a
\textit{$(\tau, \tau_{s^*})$-continuous $*$-representation} $\pi$
of $\A[\tau]$, we clearly mean continuity of $\pi$, when
$\L^\dag(\fd(\pi), \ch_\pi)$ carries the locally convex topology
$\tau_{s^*}$ (see Section 4). }
\end{definition}

In what follows, we shall need the concept of a $GB^*$-algebra
introduced by G.R. Allan \cite{allan} (see also \cite{dixon}), which
we remind here. Let $\ca[\tau]$  be a locally convex $*$-algebra
with identity $\mathit{1}$ and
 let  $\cb^*$ denote   the
collection of all closed, bounded, absolutely convex subsets $B$ of
$\ca[\tau]$ with the properties: $\mathit{1} \in B$, $B^* = B$ and
$B^2 \subset B$. For each $B \in \cb^*$, the linear span  $A[B]$ of
$B$ is a normed $*$-algebra under the Minkowski functional $\| \cdot
\|_B$ of $B$. When $A[B]$ is complete for each $B \in \cb^*$, then
$\ca[\tau]$ is called {\it pseudo-complete}. Every unital
sequentially complete locally convex $*$-algebra is pseudo-complete
\cite[Proposition (2.6)]{allan1}. A unital locally convex
$*$-algebra $\ca[\tau]$ is called \emph{symmetric} (resp. algebraically symmetric) if for every
$x\in \ca$ the element $\mathit{1} + x^*x$ has an Allan-bounded
inverse in $\ca$ \cite[pp. 91,93]{allan} (resp. if $(1+x^*x)$ has inverse in $\ca$).
 A unital symmetric pseudo-complete locally convex $*$-algebra
 $\ca[\tau]$,
 such that $\cb^*$ has a greatest member, say $B_0$, is said to be a
{\it GB$^*$-algebra over } $B_0$. In this case, $A[B_0]$ is a
$\cs$-algebra.
%%%%%%%%%%%%%%%%%%%%%%%%%%%%%%%%%%%%%%%%%%%%%%%%%%%%%%%%%%%%%%%%%%%%%%%%%%

\setcounter{section} {2}
\section{$\cs$-normed algebras with regular locally convex topology}
\noindent

Let $\ca_0 [\norm_0 ]$ be a $\cs$-normed algebra and
$\widetilde{\ca_0} [\norm_0 ]$ the $\cs$-algebra completion of
$\ca_0 [\norm_0 ]$. Consider a locally convex topology $\tau$ on
$\ca_0$ with the following properties:

$(\rm T_1 )$ $\ca_0 [\tau ]$ is a locally convex $*$-algebra with
separately continuous multiplication.

$(\rm T_2 )$ $\tau \preceq \norm_0$, with  $\tau$ and $\norm_0 $
being compatible.

Then, compatibility of $\tau$, $\norm_0$ implies that:

\begin{itemize}
    \item $\ca_0 [\norm_0 ] \hookrightarrow \widetilde{\ca_0} [\norm_0 ]
    \hookrightarrow \widetilde{\ca_0} [\tau ]$;
    \item
    $\widetilde{\ca_0} [\tau ]$ is a locally convex quasi $*$-algebra
    over the $\cs$-normed algebra $\ca_0 [\norm_0 ]$, but it is
    not necessarily a locally convex quasi $*$-algebra over
    the $\cs$-algebra $\widetilde{\ca_0} [\norm_0 ]$, since
    $\widetilde{\ca_0}[\norm_0]$ is not a locally convex
    $*$-algebra under the topology $\tau$.
    %So, it is meaningful to consider the
    %following
\end{itemize}

\textbf{Question.} Under which conditions one could have a
well-defined multiplication of elements in $\widetilde{\ca_0} [\tau
]$ with elements in $\widetilde{\ca_0}[\norm_0 ]$?

We consider the case that the locally convex topology $\tau$ defined
by a directed family of seminorms, say $(p_\la )_{\la \in \Lambda}$,
satisfies in addition to the conditions $\rm (T_1 )$ and $\rm (T_2
)$ an extra $``$good$"$ condition for the $\cs$-norm $\norm_0 $,
called \emph{regularity condition}, denoted by ($R$). That is,

$\rm (R)$ $\forall \ \la \in \La , \ \exists \ \la' \in \La$ and
$\gamma_\la > 0$ : $p_\la (xy)\leqq \gamma_\la \| x\|_0 p_{\la'}
(y)$, $\forall$ $x,y\in \ca_0 [\norm_0 ]$.

In this regard, we have the following

\begin{lemma}\label{le3.1}
Suppose $\ca_0 [\norm_0 ]$ is a $\cs$-normed algebra and $\tau$ a
locally convex topology on $\ca_0$ satisfying the conditions $\rm
(T_1), (T_2)$ and the regularity condition $\rm (R)$ for $\norm_0$. Let
$a$ be an arbitrary element in $\widetilde{\ca_0} [\tau]$ and $y$ an
arbitrary element in $\widetilde{\ca_0} [\norm_0 ]$. Then, the left
resp. right multiplication of $a$ with $y$ is defined by
\[
a\cdot y=\tau -\lim\limits_{\al ,n} x_\al y_n \ \text{ resp. } \
y\cdot a =\tau -\lim\limits_{\al ,n}y_n x_\al,
\]
where $\{x_\al\}_{\al \in \Sigma}$ is a net in $\ca_0 [\tau]$
converging to $a$, $\{y_n\}_{n \in \N}$ is a sequence in $
\ca_0[\norm_0 ]$ converging to $y$ and $\forall \ \la \in \La , \
\exists \ \la' \in \La$ and $\gamma_\la > 0$ :
\[
p_\la (a\cdot y)\leqq \gamma_\la \| y\|_0 p_{\la'} (a), p_\la (y.a)\leqq
\gamma_\la \| y\|_0 p_{\la'} (a).
\]
Under this multiplication $\widetilde{\ca_0} [\tau ]$ is a locally
convex quasi $*$-algebra over the $\cs$-algebra $\widetilde{\ca_0}
[\norm_0 ]$.
\end{lemma}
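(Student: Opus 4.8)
The plan is to reduce everything to two regularity estimates valid on $\ca_0$ and then propagate them to the two completions by density and continuity, with the completeness of $\widetilde{\ca_0}[\tau]$ producing the limits. First I would record the two basic inequalities on $\ca_0$. The estimate governing the right product comes straight from $\rm(R)$: replacing $(x,y)$ by $(y_n,x_\al)$ gives $p_\la(y_n x_\al)\leqq\ga_\la\|y_n\|_0\,p_{\la'}(x_\al)$. For the left product I need the \emph{transposed} inequality $p_\la(xy)\leqq\ga_\la\|y\|_0\,p_{\la'}(x)$, in which the norm sits on the right factor. I would obtain it by applying $\rm(R)$ to the pair $(y^*,x^*)$, using that $\norm_0$ is a $\cs$-norm (so $\|y^*\|_0=\|y\|_0$) and that the defining family $(p_\la)$ may be taken $*$-invariant (the involution being $\tau$-continuous), whence $p_\la(xy)=p_\la((xy)^*)=p_\la(y^*x^*)\leqq\ga_\la\|y\|_0\,p_{\la'}(x^*)=\ga_\la\|y\|_0\,p_{\la'}(x)$. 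These two inequalities are the engine of the whole proof.

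Next I would extend the products in two stages. In the first stage, fix $y\in\ca_0$; since $\ca_0$ is $\tau$-dense in $\widetilde{\ca_0}[\tau]$ and the $p_\la$ are $\tau$-continuous, the products $ay=\tau\text{-}\lim_\al x_\al y$ and $ya=\tau\text{-}\lim_\al yx_\al$ already exist (the quasi $*$-algebra structure over $\ca_0$ from the Preliminaries), and letting $\al$ run in the two displayed inequalities yields $p_\la(ay)\leqq\ga_\la\|y\|_0\,p_{\la'}(a)$ and $p_\la(ya)\leqq\ga_\la\|y\|_0\,p_{\la'}(a)$ for every $a\in\widetilde{\ca_0}[\tau]$. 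In the second stage, fix $a$; these estimates show that $y\mapsto ay$ and $y\mapsto ya$ are $(\norm_0,\tau)$-continuous maps from $\ca_0$ into the \emph{complete} space $\widetilde{\ca_0}[\tau]$, hence each extends uniquely and continuously to $\widetilde{\ca_0}[\norm_0]$. I define $a\cdot y$ and $y\cdot a$ to be these extensions, and the two estimates survive the passage to the limit. To recover the double-limit formula stated in the lemma I would check that $\{x_\al y_n\}_{(\al,n)\in\Sigma\times\N}$ is $\tau$-Cauchy: splitting $x_\al y_n-x_\be y_m=x_\al(y_n-y_m)+(x_\al-x_\be)y_m$ and applying the transposed inequality, the first term is bounded by $\ga_\la\|y_n-y_m\|_0\,p_{\la'}(x_\al)$ and the second by $\ga_\la\|y_m\|_0\,p_{\la'}(x_\al-x_\be)$, both small for large indices since $\{x_\al\}$ is $\tau$-Cauchy (so $p_{\la'}$-bounded) and $\{y_n\}$ is $\norm_0$-Cauchy (so $\norm_0$-bounded); the limit of this net coincides with $a\cdot y$ by construction, and symmetrically for $y\cdot a$.

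Finally I would verify the quasi $*$-algebra axioms over the $\cs$-algebra $\widetilde{\ca_0}[\norm_0]$. Bilinearity is immediate. The module identities $\rm(ii)'$ and the involution identities $\rm(iii)'$ hold on the dense $*$-subalgebra $\ca_0$, where the products are the ordinary associative multiplication, and they extend to $\widetilde{\ca_0}[\norm_0]$ and $\widetilde{\ca_0}[\tau]$ by the separate continuity of the maps just constructed; consistency of $a\cdot y$ with the $\cs$-multiplication when $a,y\in\widetilde{\ca_0}[\norm_0]$ follows because both products are jointly $\norm_0$-continuous there (using $p_{\la'}(a)\leqq C_{\la'}\|a\|_0$, which holds on the $\cs$-algebra by $\rm(T_2)$) and agree on $\ca_0\times\ca_0$. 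The estimate $p_\la(a\cdot y)\leqq\ga_\la\|y\|_0\,p_{\la'}(a)$ shows that for each fixed $y$ the maps $a\mapsto a\cdot y$ and $a\mapsto y\cdot a$ are $\tau$-continuous, which is exactly the requirement that $\widetilde{\ca_0}[\tau]$ be a \emph{locally convex} quasi $*$-algebra over $\widetilde{\ca_0}[\norm_0]$.

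The step I expect to be the main obstacle is the well-definedness, that is, the independence of $a\cdot y$ and $y\cdot a$ from the approximating net $\{x_\al\}$ and sequence $\{y_n\}$; this is precisely what the two-stage continuous-extension argument secures, and it rests on the transposed form of $\rm(R)$, whose derivation via the $\cs$-isometry of the involution and the $*$-invariance of the seminorms is the one genuinely delicate point. The remaining verifications are routine continuity arguments.
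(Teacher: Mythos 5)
Your proof is correct, and it supplies precisely the details that the paper omits: the authors state only that the lemma ``follows directly from the regularity condition (R)'' and give no argument. Your reconstruction agrees with what that remark must mean for the right action $y \cdot a = \tau\text{-}\lim_{\al,n} y_n x_\al$, where (R) applies verbatim, and you correctly isolate the one genuinely non-trivial point: for the left action $a \cdot y = \tau\text{-}\lim_{\al,n} x_\al y_n$ the inequality in (R) places the $\cs$-norm on the wrong factor, so a direct application would need $\|x_\al\|_0$ to stay bounded (which a merely $\tau$-convergent net does not provide), and one must instead use the transposed estimate $p_\la(xy)\leqq \gamma\,\|y\|_0\,p_{\la'}(x)$ obtained through the involution. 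One small refinement to your derivation of that estimate: if you replace $(p_\la)$ by its $*$-symmetrization you must re-check (R) for the new family, and in general the symmetrized seminorms satisfy only the two-sided bound $q_\la(xy)\leqq\gamma\bigl(\|x\|_0\,q_{\la'}(y)+\|y\|_0\,q_{\la'}(x)\bigr)$, which is not quite the one-sided form you need. It is cleaner to keep the original family and invoke the $\tau$-continuity of the involution twice, $p_\la(xy)\leqq C_\la\,p_\nu(y^*x^*)\leqq C_\la\gamma_\nu\,\|y\|_0\,p_{\nu'}(x^*)\leqq C'_\la\,\|y\|_0\,p_\mu(x)$, or equivalently to define $a\cdot y:=(y^*\cdot a^*)^*$ from the already-constructed right action. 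With that adjustment, your two-stage extension, the Cauchy estimate for the double-indexed net, and the verification of the quasi $*$-algebra axioms over $\widetilde{\ca_0}[\norm_0]$ all go through as written.
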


The proof of Lemma 3.1 follows directly from the regularity condition
$\rm (R)$. If $\ca_0 [\tau]$ is a locally convex $*$-algebra with
jointly continuous multiplication and $\tau \preceq \|\cdot\|_0$,
then it satisfies  the regular condition (R) for $\norm_0$.

\begin{lemma}\label{le3.2} Let $\ca_0 [\norm_0 ]$ be a
$\cs$-normed algebra and $\ca_0 [\tau]$ an $m^*$-convex algebra satisfying
conditions $\rm (T_2)$ and  $\rm (R)$. If $(p_\lambda)_{\lambda \in
\Lambda}$ is a defining family of $m^*$-seminorms for $\tau$ \emph{(i.e.,
submultiplicative $*$-preserving seminorms)} and there is $\lambda_0
\in \Lambda$ such that $p_{\lambda_0 }$ is a norm, then $\tau \sim
\|\cdot\|_0$, where $\sim$ means equivalence of the respective
topologies. In particular, if $\ca_0 [\norm]$ is a normed
$*$-algebra such that $ \norm \leq \norm_0$ and $ \norm $, $\norm_0$
are compatible, then $ \norm \sim \norm_0$.
\end{lemma}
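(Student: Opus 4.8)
The plan is to exploit that $\mathrm{(T_2)}$ already gives $\tau\preceq\|\cdot\|_0$, so the entire content of the lemma is the reverse domination. Concretely, it suffices to find a single index $\lambda'$ and a constant $C>0$ with $\|x\|_0\le C\,p_{\lambda'}(x)$ for all $x\in\ca_0$; then $\|\cdot\|_0$ is $\tau$-continuous, i.e. $\|\cdot\|_0\preceq\tau$, and we are done. (Finding finitely many indices is enough, since by directedness their maximum is dominated by one $p_{\lambda'}$.) Two preliminary reductions: first, it is enough to treat self-adjoint elements, because if $\|h\|_0\le C'p_{\lambda'}(h)$ for every $h=h^*$, then the $\cs$-identity together with submultiplicativity and $*$-invariance of $p_{\lambda'}$ gives $\|x\|_0^2=\|x^*x\|_0\le C'p_{\lambda'}(x^*x)\le C'p_{\lambda'}(x)^2$. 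Second, I note that $\mathrm{(R)}$ is not doing the heavy lifting here: from submultiplicativity and $p_\lambda\le M_\lambda\|\cdot\|_0$ (a consequence of $\mathrm{(T_2)}$) one already has $p_\lambda(xy)\le p_\lambda(x)p_\lambda(y)\le M_\lambda\|x\|_0\,p_\lambda(y)$, so the true inputs are submultiplicativity, the norm property of $p_{\lambda_0}$, and compatibility.

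Next I pass to completions. Let $B:=\widetilde{\ca_0}[\|\cdot\|_0]$, a $\cs$-algebra, and let $A:=\widetilde{\ca_0}[\tau]$. Since $\tau\preceq\|\cdot\|_0$, the identity on $\ca_0$ extends to a continuous unital $*$-homomorphism $\Phi\colon B\to A$, and compatibility of $\tau$ with $\|\cdot\|_0$ is exactly the injectivity of $\Phi$. For self-adjoint $h$ one has $\|h\|_0=r_B(h)$, while $\sigma_A(\Phi(h))\subseteq\sigma_B(h)$ forces $r_A(\Phi(h))\le r_B(h)$. The reverse inequality is thus equivalent to spectral permanence, $\sigma_A(\Phi(h))=\sigma_B(h)$ for $h=h^*$. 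I would package this using the Gelfand--Naimark (Pt\'ak) $\cs$-seminorm $\rho$ of $A$ and the pulled-back seminorm $q:=\rho\circ\Phi$: being the composite of a $\cs$-seminorm with a $*$-homomorphism, $q$ is a $\cs$-seminorm on the $\cs$-algebra $B$, hence a quotient norm $q(\cdot)=\|\cdot+I\|_{B/I}\le\|\cdot\|_0$ with $I:=\ker q$, and $q=\|\cdot\|_0$ precisely when $I=\{0\}$.

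The main obstacle is to prove $I=\{0\}$. Since $\Phi$ is injective, $I$ is nonzero exactly when some nonzero self-adjoint $b\in B$ has quasinilpotent image $\Phi(b)$, and ruling this out is where compatibility must be used beyond bare injectivity. Suppose such a $b$ existed and pick $\mu\in\sigma_B(b)$ with $|\mu|=\|b\|_0>0$; as $\Phi(b)$ is quasinilpotent, $\Phi(b)-\mu$ is invertible in $A$, so the approximate eigenvectors $u_\varepsilon:=g_\varepsilon(b)$ (normalized functional-calculus bumps near $\mu$, with $\|u_\varepsilon\|_0=1$ and $\|(b-\mu)u_\varepsilon\|_0\to0$) satisfy $u_\varepsilon\to0$ in $\tau$. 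If $\mu$ is isolated in $\sigma_B(b)$ this is immediate: the spectral projection $e=\mathit{1}_{\{\mu\}}(b)\ne0$ satisfies $\Phi(e)=0$, contradicting injectivity of $\Phi$. The genuinely delicate case, which I expect to be the crux, is a non-isolated $\mu$ (continuous spectrum): there $\{u_\varepsilon\}$ fails to be $\|\cdot\|_0$-Cauchy, and one must instead assemble the approximate eigenvectors---using $\|b^n\|_0=\|b\|_0^{\,n}$ and the submultiplicativity of the $p_\lambda$---into a genuinely $\|\cdot\|_0$-Cauchy net that is $\tau$-null but not $\|\cdot\|_0$-null, which would directly violate compatibility.

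Finally, once $I=\{0\}$ is secured one has $\|h\|_0=r_A(\Phi(h))$ for $h=h^*$, and the last step is to bound this spectral radius by a single seminorm, $r_A(\Phi(h))\le C\,p_{\lambda'}(h)$; this is where the hypothesis that some $p_{\lambda_0}$ is a norm, combined with directedness of the family, is used to ensure that the relevant Pt\'ak seminorm is $\tau$-continuous rather than only bounded by $\sup_\lambda p_\lambda$. The ``in particular'' assertion is then just the singleton case $\Lambda=\{\lambda_0\}$, $p_{\lambda_0}=\|\cdot\|$, where $\mathrm{(R)}$ holds automatically ($\|xy\|\le\|x\|\,\|y\|\le\|x\|_0\|y\|$) and $A=\widetilde{\ca_0}[\|\cdot\|]$ is already a Banach $*$-algebra, so the argument simplifies accordingly.
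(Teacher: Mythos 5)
Your opening reductions are fine (passing to self-adjoint elements via the $\cs$-identity, and the observation that $\rm (R)$ is automatic for an $m^*$-convex topology dominated by $\norm_0$), but the core of your argument does not go through, for two reasons. First, everything from your second paragraph onward treats $A=\widetilde{\ca_0}[\tau]$ as an algebra with a working spectral theory: you invoke the Pt\'ak $\cs$-seminorm of $A$, the spectrum $\sigma_A(\Phi(h))$, quasinilpotence of $\Phi(b)$, and invertibility of $\Phi(b)-\mu$ in $A$. But $\widetilde{\ca_0}[\tau]$ is only a locally convex \emph{quasi} $*$-algebra: the product of two general elements of it is not defined (only products with elements of $\widetilde{\ca_0}[\norm_0]$ exist, via Lemma 3.1), so none of these notions is available there. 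Second, you explicitly leave the central case open (the non-isolated spectral value, which you yourself call ``the crux''), and the device you propose for closing it cannot work as described: any $\norm_0$-Cauchy net in $\ca_0$ converges in the $\cs$-algebra $\widetilde{\ca_0}[\norm_0]$ to some $s$ and therefore $\tau$-converges to $\Phi(s)$; if the net is $\tau$-null then $\Phi(s)=0$, and injectivity of $\Phi$ --- which you have already extracted from compatibility --- forces $s=0$, so the net is automatically $\norm_0$-null. Hence no ``assembly'' of approximate eigenvectors can produce a $\norm_0$-Cauchy, $\tau$-null, non-$\norm_0$-null net, and the contradiction must be located elsewhere. As it stands, the proposal is a plan with its hardest step missing, not a proof.

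For comparison, the paper's proof is two lines and avoids spectral considerations in $\widetilde{\ca_0}[\tau]$ altogether: since $p_{\lambda_0}$ is a submultiplicative $*$-norm, $\widetilde{\ca_0}[p_{\lambda_0}]$ is a Banach $*$-algebra, and $\rm (T_2)$ together with $p_{\lambda_0}\preceq\tau\preceq\norm_0$ gives continuous maps $\widetilde{\ca_0}[\norm_0]\hookrightarrow\widetilde{\ca_0}[\tau]\hookrightarrow\widetilde{\ca_0}[p_{\lambda_0}]$, the first injective by compatibility; one then quotes the standard $\cs$-algebra fact (Takesaki, Prop.\ 5.3) for a $\cs$-algebra embedded in a Banach $*$-algebra to conclude $\|x\|_0\leq p_{\lambda_0}(x)$ on $\ca_0$. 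The object your argument never introduces --- the completion $\widetilde{\ca_0}[p_{\lambda_0}]$, an honest Banach $*$-algebra in which spectral arguments are legitimate --- is precisely what lets the paper sidestep the difficulties you ran into.
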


\begin{proof} By $\rm (T_2)$ and  $\rm (R)$ we have
$\widetilde{\ca_0} [\norm_0 ]
    \hookrightarrow \widetilde{\ca_0} [\tau ] \hookrightarrow \widetilde{\ca_0}
    [p_{\lambda_0 }]$,
which by the basic theory of $C^*$-algebras (see e.g.,
\cite[Proposition 5.3]{tak}) implies that $\|x\|_0 \leq p_{\lambda_0
}(x)$, for all $x \in \ca_0 $. Hence, $\tau \sim \|\cdot\|_0$.
\end{proof}

By Lemma 3.2 there does not exist any normed $*$-algebra containing
the $C^*$-algebra $\widetilde{\ca_0} [\norm_0 ]$ properly and
densely.

We now consider whether a $GB^*$-algebra over the unit ball
$\cu(\widetilde{\ca_0} [\norm_0 ])$ exists in $\widetilde{\ca_0}
[\tau ]$. If $\widetilde{\ca_0} [\tau ]$ has jointly continuous
multiplication and $\cu(\widetilde{\ca_0} [\norm_0 ])$ is
$\tau$-closed {\color{red} in $\widetilde{\ca_0} [\tau ]$} then $\widetilde{\ca_0} [\tau ]$ is a $GB^*$-algebra
over $\cu(\widetilde{\ca_0} [\norm_0 ])$, (cf. \cite[Theorem
2.1]{maria-i-kue}).

\begin{theorem}\label{Th3.3} Let $\ca_0 [\norm_0]$ be a unital $\cs$-normed algebra
and $\ca_0 [\tau ]$ a locally convex $*$-algebra such that $\tau$
satisfies the conditions $\rm (T_1), (T_2)$, the regularity condition
$\rm (R)$ for $\norm_0$ and makes the unit ball
$\cu(\widetilde{\ca_0} [\norm_0 ])$ $\tau$-closed. Then every
algebraically symmetric locally convex $*$-algebra $\ca[\tau]$ such
that $\widetilde{\ca_0} [\norm_0] \subset \ca[\tau] \subset
\widetilde{\ca_0} [\tau ]$ is a $GB^*$-algebra over
$\cu(\widetilde{\ca_0}[\|\cdot\|_0])$.
\end{theorem}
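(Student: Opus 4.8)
The plan is to verify, for $\ca[\tau]$ over the set $B_0:=\cu(\widetilde{\ca_0}[\norm_0])$ (the closed unit ball of the $\cs$-algebra $\widetilde{\ca_0}[\norm_0]$), the four defining ingredients of a $GB^*$-algebra recalled in Section 1. Since $\mathit{1}\in\ca_0\subset\ca$, the algebra is unital, and because the Minkowski functional of $B_0$ coincides with $\norm_0$ while $\mathrm{span}\,B_0=\widetilde{\ca_0}[\norm_0]$, once $B_0$ is shown to be the greatest member of $\cb^*$ we automatically obtain $A[B_0]=\widetilde{\ca_0}[\norm_0]$, the desired $\cs$-algebra. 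So it remains to show: (i) $B_0\in\cb^*(\ca)$; (ii) $B_0$ is the greatest member of $\cb^*$; (iii) $\ca[\tau]$ is pseudo-complete; (iv) $\ca[\tau]$ is symmetric.

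First I would check (i). Absolute convexity, $B_0^*=B_0$, $\mathit{1}\in B_0$ and $B_0^2\subset B_0$ are the standard properties of the closed unit ball of a $\cs$-algebra, and $B_0\subset\widetilde{\ca_0}[\norm_0]\subset\ca$. The $\tau$-boundedness of $B_0$ is exactly where the regularity condition $\rm(R)$ enters: taking $y=\mathit{1}$ in $\rm(R)$ gives $p_\la(x)=p_\la(x\,\mathit{1})\leq\gamma_\la\,p_{\la'}(\mathit{1})\,\|x\|_0$ for $x\in\ca_0$, which extends by continuity of the embedding $\widetilde{\ca_0}[\norm_0]\hookrightarrow\widetilde{\ca_0}[\tau]$ to all of $\widetilde{\ca_0}[\norm_0]$; hence $\sup_{x\in B_0}p_\la(x)\leq\gamma_\la\,p_{\la'}(\mathit{1})<\infty$ for every $\la$, so $B_0$ is $\tau$-bounded. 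Its $\tau$-closedness is a hypothesis of the theorem, so $B_0\in\cb^*$.

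The heart of the proof is (ii), which I would reduce to the assertion that \emph{every self-adjoint $a\in\ca$ whose powers $\{a^n:n\in\N\}$ form a $\tau$-bounded set lies in $B_0$}. Granting this, for arbitrary $B\in\cb^*$ and $a\in B$ the element $a^*a\in B^2\subset B$ is self-adjoint with all powers in the $\tau$-bounded set $B$, whence $a^*a\in B_0$, i.e.\ $\|a^*a\|_0\leq1$, and the $\cs$-identity $\|a\|_0^2=\|a^*a\|_0$ forces $a\in B_0$; thus $B\subset B_0$. To prove the assertion, $\tau$-boundedness of $\{a^n\}$ together with completeness of $\widetilde{\ca_0}[\tau]$ makes the Neumann series $\sum_{n\geq0}\la^{-n-1}a^n$ $\tau$-convergent for $|\la|>1$, and continuity of the multiplication of the locally convex $*$-algebra $\ca[\tau]$ identifies its sum with $(\la-a)^{-1}\in\ca$; so $a$ is an Allan-bounded element of spectral radius $\leq1$. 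The remaining, and main, difficulty is to convert this into a genuine $\cs$-norm bound: using the estimate of $\rm(R)$ (as sharpened in Lemma 3.1) to transfer $\tau$-control of the resolvent to $\norm_0$-control, and using the $\tau$-closedness of $B_0$ to place the relevant limits inside $B_0$, one concludes $a\in\widetilde{\ca_0}[\norm_0]$ with $\|a\|_0\leq1$, i.e.\ $a\in B_0$. This step is precisely where $\rm(R)$ and the closedness of $B_0$ replace the joint-continuity hypothesis of \cite[Theorem 2.1]{maria-i-kue}.

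Once (ii) holds, (iii) is short: any $B\in\cb^*$ satisfies $B\subset B_0$, so its $\tau$-closure in $\widetilde{\ca_0}[\tau]$ is contained in $B_0\subset\ca$ and hence equals $B$; thus $B$ is a $\tau$-closed, bounded, absolutely convex subset of the complete space $\widetilde{\ca_0}[\tau]$, i.e.\ a Banach disc, so $A[B]$ is complete and $\ca[\tau]$ is pseudo-complete. Finally, for (iv) I would start from algebraic symmetry, which furnishes $z:=(\mathit{1}+x^*x)^{-1}\in\ca$ (self-adjoint) for each $x\in\ca$; showing that $\{z^n\}$ is $\tau$-bounded (equivalently that $0\leq z\leq\mathit{1}$) and invoking the assertion of step (ii) places $z\in B_0\subset A[B_0]$, so the inverse is Allan-bounded and $\ca[\tau]$ is symmetric. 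Assembling (i)--(iv) with the definition of Section 1 yields that $\ca[\tau]$ is a $GB^*$-algebra over $B_0=\cu(\widetilde{\ca_0}[\norm_0])$.
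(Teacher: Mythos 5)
Your overall skeleton (show $B_0=\cu(\widetilde{\ca_0}[\norm_0])\in\cb^*(\ca)$, show it is the greatest member, deduce pseudo-completeness, handle symmetry) matches the shape of the paper's argument, and your step (i) is fine. But the heart of the matter --- your assertion that every self-adjoint $a\in\ca$ with $\{a^n\}$ $\tau$-bounded lies in $B_0$ --- is exactly the content of the theorem, and the sketch you give for it does not work. The regularity condition $\rm(R)$ estimates $\tau$-seminorms \emph{by} the $\cs$-norm, $p_\la(xy)\leq\gamma_\la\|x\|_0 p_{\la'}(y)$; it provides no mechanism for ``transferring $\tau$-control of the resolvent to $\norm_0$-control,'' which would require an inequality in the opposite direction (and Lemma 3.2 shows such an inequality can only hold when the two topologies are already equivalent). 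Since $\tau\preceq\norm_0$, a $\tau$-bounded element of $\widetilde{\ca_0}[\tau]$ need not belong to $\widetilde{\ca_0}[\norm_0]$ at all, let alone to its unit ball; so Allan-boundedness of $a$ with spectral radius $\leq 1$ is far from $\|a\|_0\leq 1$, and the ``main difficulty'' you flag is left entirely unresolved.

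The paper closes this gap by a quite different device: for $a=a^*\in B$ it passes to the maximal commutative $*$-subalgebra $\cc(a)$, forms $\cc_1=(\cu(\ca_0)\cap\cc(a))\cdot(B\cap\cc(a))$, whose $\tau$-closure lies in $\cb^*(\cc(a))$, and uses Allan's directedness of $\cb^*$ for commutative pseudo-complete algebras to find a single $B_1\in\cb^*(\ca)$ with $(B\cup\cu(\ca_0))\cap\cc(a)\subset B_1\cap\cc(a)$. Then $A[B_1]\cap\cc(a)$ is a normed $*$-algebra containing the $\cs$-algebra $\ca_0\cap\cc(a)$, so Lemma 3.2 forces $\|x\|_0=\|x\|_{B_1}$ there, and the approximation $x(\mathit{1}+\tfrac1n x^*x)^{-1}\in\ca_0$ shows $\ca_0\cap\cc(a)$ is $\|\cdot\|_{B_1}$-dense in $A[B_1]\cap\cc(a)$, whence the two coincide and $B\cap\cc(a)\subset\cu(\ca_0)\cap\cc(a)$. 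This commutative-subalgebra argument, resting on Lemma 3.2, is the essential idea missing from your proposal. (Your step (iv) inherits the same gap, since it again invokes the unproved claim; the paper instead quotes the separately established fact that $(\mathit{1}+a^*a)^{-1}\in\cu(\ca_0)$ for all $a\in\ca$. Your Banach-disc route to pseudo-completeness is a legitimate alternative to the paper's appeal to Allan's Proposition 2.7, but it only becomes available after the maximality of $B_0$ has been genuinely proved.)
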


\begin{proof}The proof can be done in a similar way to that of
\cite[Theorem 2.2]{ba-maria-i-t}. Here we give a simpler proof.
Without loss of generality we may assume that $\ca_0 [\norm_0]$ is a
$C^*$-algebra. Then we have (see, e.g., proof of \cite[Lemma
2.1]{ba-maria-i-t}):

$\bf (1)$ $(\mathit{1} + a^*a)^{-1} \in \cu(\ca_0 ), \ \forall a \in
\ca$.\\
Moreover, we show that

$\bf (2)$ $\cu(\ca_0)$ is the largest member in $\cb^*(\ca)$.

It is clear that $\cu(\ca_0) \in \cb^*(\ca)$. Suppose now that $B$
is an arbitrary element in $\cb^*(\ca)$ and take $a=a^*$ in $B$. Let
$\cc(a)$ be the maximal commutative $*$-subalgebra of $\ca$
containing $a$ and
\[
\cc_1 \equiv (\cu(\ca_0)\cap \cc(a))\cdot (B\cap \cc(a)).
\]
Then, clearly $\cc_1^*= \cc_1$; by the regular condition $\rm (R)$
$\cc_1$ is $\tau$-bounded in $\cc(a)$, while by the commutativity of
$\cu(\ca_0)\cap \cc(a)$ and $B\cap \cc(a)$ one has that $\cc_1^2
\subset \cc_1$. It is now easily seen that $\overline{\cc_1}^\tau
\in \cb^*(\cc(a))$, where $\cb^*(\cc(a)) = \{B\cap\cc(a) : B \in
\cb^*(\ca)\}$. Thus, there is $B_1 \in \cb^*(\ca)$ such that
$\overline{\cc_1}^\tau = B_1 \cap \cc(a)$.

Since $\cc(a)$ is commutative and pseudo-complete,
$\cb^*(\cc(a))$ is directed \cite[Theorem (2.10)]{allan1}. So for
each $B\in \cb^*(\ca))$ there is $B_1 \in \cb^*(\ca))$ such that
\[
(B\cup\cu(\ca_0))\cap \cc(a) \subset B_1 \cap \cc(a). \ \text{Hence}
\ \ \ca_0 \cap \cc(a) \subset A[B_1] \cap \cc(a),
\]
where $\ca_0 \cap \cc(a)$ is a $\cs$-algebra and $A[B_1] \cap
\cc(a)$ a normed $*$-algebra. An application of Lemma 3.2 gives
\begin{align}
\|x\|_0 = \|x\|_{B_1}, \ \forall x \in \ca_0 \cap \cc(a).
\end{align}
Furthermore, it follows from {\bf (1)} that
$x(\mathit{1}+\frac{1}{n}x^* x)^{-1} \in \ca_0$. Thus,
\[
\|x(\mathit{1}+\frac{1}{n}x^* x)^{-1} - x\|_{B_1} \leq \frac{1}{n}
\|xx^*x\|_{B_1}, \ \forall x \in A[B_1]\cap \cc(a), n \in \N,
\]
which implies that $\ca_0 \cap \cc(a)$ is $\|\cdot\|_{B_1}$-dense in
$A[B_1] \cap \cc(a)$. Therefore, from (3.1) and the fact that $\ca_0
\cap \cc(a)$ is a $\cs$-algebra we get $\ca_0 \cap \cc(a)= A[B_1]
\cap \cc(a)$. It follows that $B \cap \cc(a) \subset B_1 \cap \cc(a)
= \cu(\ca_0) \cap\cc(a)$, from which we conclude
\begin{align}
a \in \cu(\ca_0), \ \forall a \in B, \ \text{ with } \ a^* = a.
\end{align}
Now taking an arbitrary $a \in B$ we clearly have $a^*a \in B$,
hence from (3.2) $a^*a \in \cu(\ca_0)$, which gives $a \in
\cu(\ca_0)$. So, $B \subset \cu(\ca_0)$ and the proof of {\bf (2)}
is complete. Now, since $\cu(\ca_0)$ is the greatest member in
$\cb^*(\ca)$, we have that $A[\cu(\ca_0)]$ coincides with the
$C^*$-algebra $\ca_0$, therefore it is complete.
So \cite[Proposition 2.7]{allan1} implies that
$\ca[\tau]$ is pseudo-complete, hence a
$GB^*$-algebra over $\cu (\ca_0)$.
\end{proof}
%%%%%%%%%%%%%%%%%%%%%%%%%%%%%%%%%%%%%%%%%%%%%%%%%%%%%%%%%%%%%%%%%%%%%%%%%%%%%%%%%%%

\setcounter{section} {3}
\section{Locally convex quasi $\cs$-normed algebras }
\noindent

Let $\ca_0[\norm_0]$ be a $\cs$-normed algebra and $\tau$ a locally
convex topology on $\ca_0$ with $\{p_\la\}_{\la \in \La}$ a defining
family of seminorms. Suppose that $\tau$ satisfies the properties
$\rm (T_1), (T_2)$. The \emph{regularity condition} $\rm (R)$,
considered in the previous Section 2, for $\norm_0$, is too strong
(see Section 6). So in the present Section we weaken this condition,
and we use it together with the conditions $\rm (T_1), (T_2)$, in order to
investigate the locally convex quasi $*$-algebra
$\widetilde{\ca_0}[\tau]$. \emph{The weakened condition $\rm (R)$ will be
denoted by $\rm (T_3)$ and it will read as follows}:

$\rm (T_3)$ $\forall \ \la\in \La, \ \exists \ \la'\in\La$ and $
\gamma_\la > 0 : p_\la(xy) \leqq \gamma_\la \|x\|_0 p_{\la'}(y)$,
for all $ x,y \in \ca_0$ with

\hspace{0.7cm} $xy=yx$.

Then, we first consider the question stated in Section 3, just
before Lemma 3.1, concerning a well-defined multiplication between
elements of $\widetilde{\ca_0} [\tau ]$ and
$\widetilde{\ca_0}[\norm_0 ]$.

If $\ca_0[\norm_0]$ is commutative and $\tau$ satisfies the
conditions $\rm (T_1)-(T_3)$, then $\tau$ fulfils clearly the
regularity condition $\rm (R)$ for $\|\cdot\|_0$, and so by Lemma 3.1,
for arbitrary $a\in \widetilde{\ca_0} [\tau ]$ and $y\in
\widetilde{\ca_0} [\norm_0]$ the left and right multiplications
$a\cdot y$ and $y\cdot a$ are defined, respectively, and $\widetilde{\ca_0}
[\tau ]$ is a locally convex quasi $*$-algebra over the
$\cs$-algebra $\widetilde{\ca_0} [\norm_0]$.

We consider now the afore-mentioned question in the noncommutative
case; for this we set the following

\begin{definition}\label{de4.1} Let $a \in \widetilde{\ca_0} [\tau]$ and
$y \in \widetilde{\ca_0} [\norm_0 ]$. We shall say that \textit{$y$
commutes strongly} with $a$ if there is a net $\{ x_\al\}_{\al \in
\Sigma}$ in $\widetilde{\ca_0}[\norm_0 ]$ such that $x_a
\underset{\tau}{\longrightarrow}a$ and $x_\al y=yx_\al$, for every
$\al \in \Sigma$.
\end{definition}

$\bullet$ In the rest of the paper,
$\widetilde{\ca_0}[\|\cdot\|_0]^\sim [\tau]$, denotes the completion
of the $C^*$-algebra $\widetilde{\ca_0}[\|\cdot\|_0]$ with respect
to the locally convex topology $\tau$. As a set it clearly coincides
with $\widetilde{\ca_0} [\tau ]$, but there are cases that we need
to distinguish them (see Remark 4.6).

\begin{remark}\label{re4.2} Let $a\in \widetilde{\ca_0} [\tau]$ and
$y \in \widetilde{\ca_0} [\norm_0 ]$. Whenever $y \in \ca_0$, the
multiplications $ay$ and $ya$ are always defined by
\[
ay= \lim_\al x_\al y  \ \text{ and } \ ya = \lim_\al y x_\al,
\]
where $\{ x_\al\}_{\al \in \Sigma}$ is a net in $\ca_0$ converging
to $a$ with respect to $\tau$. Hence, we may define the notion $y$
\emph{commutes with} $a$, as usually, i.e., when $ay=ya$. But, even
if $y$ \emph{commutes with} $a$, one has, in general, that $y$
\emph{does not commute strongly with} $a$. Thus, the notion of
strong commutativity is clearly stronger than that of commutativity.
\end{remark}

\begin{lemma}\label{le4.3} Let $\ca_0 [\norm_0 ]$ be a
$\cs$-normed algebra and $\tau$ a locally convex topology on $\ca_0$
that satisfies the properties $\rm (T_1)-(T_3)$. Let $a \in
\widetilde{\ca_0} [\tau]$ and $y \in \widetilde{\ca_0} [\norm_0 ]$
be strongly commuting. Then the multiplications $a \cdot y$ resp.
$y\cdot a$ are defined by
\[
    a\cdot y    =\tau -\lim_\al x_\al y \ \text{ resp. } \
     y\cdot a =\tau -\lim_{\al}yx_\al
     \ \text{ and } \ a\cdot y =y\cdot a,
\]
where $\{ x_\al\}_{\al \in \Sigma}$ is a net in
$\widetilde{\ca_0}[\|\cdot\|_0]$, $\tau$-converging to $a$ and
commutating with $y$. The preceding multiplications provide an
extension of the multiplication of $\ca_0$. Moreover, an analogous
condition to $\rm (T_3)$ holds for the elements $a,y$, i.e.,

$ \rm (T'_3)$ \ $\forall \ \la \in \La \ \exists \ \la' \in \La  \
\text{ and } \ \gamma_\la > 0 : p_\la (a\cdot y)\leqq \gamma_\la  \|
y\|_0 p_{\la'} (a)$.
\end{lemma}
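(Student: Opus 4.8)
The plan is to read the two products off the defining net, to verify that the prescribed $\tau$-limits exist and are independent of the net, and then to push the estimate $\rm (T_3)$ through to the completion. Throughout write $\BB_0:=\widetilde{\ca_0}[\norm_0]$ for the $\cs$-algebra completion; fix the strongly commuting pair $a\in\widetilde{\ca_0}[\tau]$, $y\in\BB_0$, and a net $\{x_\al\}_{\al\in\Sigma}\subset\BB_0$ with $x_\al\to a$ in $\tau$ and $x_\al y=yx_\al$ for every $\al$. The one genuinely new ingredient is an extension of $\rm (T_3)$ from $\ca_0$ to $\BB_0$ valid for commuting pairs; once this is in hand, everything else is a continuity-and-uniqueness argument layered on top of it.

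So first I would isolate the following claim: for all $u,v\in\BB_0$ with $uv=vu$ and every $\la\in\La$, the very data $\la'\in\La$, $\ga_\la>0$ furnished by $\rm (T_3)$ also satisfy $p_\la(uv)\le\ga_\la\|u\|_0\,p_{\la'}(v)$. The natural attempt is to pick $u_\be,v_\be\in\ca_0$ with $u_\be\to u$ and $v_\be\to v$ in $\norm_0$; since the multiplication of the $\cs$-algebra $\BB_0$ is $\norm_0$-continuous and $\tau\preceq\norm_0$ by $\rm (T_2)$, the products $u_\be v_\be$ converge to $uv$ in $\tau$, so $p_\la$ and $p_{\la'}$ pass to the limit, and it would remain only to invoke $\rm (T_3)$ for the approximants.

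This last invocation is precisely where the difficulty lies, and I expect it to be the crux of the lemma. The trouble is that $\rm (T_3)$ requires the two $\ca_0$-factors to commute, whereas $\norm_0$-density of $\ca_0$ in $\BB_0$ does not respect commutation; moreover the commutator cannot simply be discarded, since the symmetrised term $p_\la(v_\be u_\be)$ is itself not controlled by $\rm (T_3)$, so the naive approximation route is blocked. The route I would actually pursue avoids commuting approximants altogether by a continuity argument: when the first factor lies in $\ca_0$, both $v\mapsto p_\la(yv)$ and $v\mapsto\ga_\la\|y\|_0\,p_{\la'}(v)$ are $\tau$-continuous seminorms on $\widetilde{\ca_0}[\tau]$ that already satisfy the desired inequality on $\ca_0\cap\{y\}'$ by $\rm (T_3)$, hence on the $\tau$-closure of that set. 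The claim is thereby reduced to showing that the relevant elements lie in such a $\tau$-closure, i.e. that $\ca_0\cap\{y\}'$ is $\tau$-dense in $\BB_0\cap\{y\}'$, together with a passage from $y\in\ca_0$ to $y\in\BB_0$; these two density points are exactly where the weakening of $\rm (R)$ to the commuting case in $\rm (T_3)$ has to be paid for.

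Granting the claim, the remaining assertions are routine. For existence of $a\cdot y$ I would show $\{x_\al y\}$ is $\tau$-Cauchy: since $x_\al-x_\be$ commutes with $y$, the claim applied to $(u,v)=(y,\,x_\al-x_\be)$ together with $x_\al y=yx_\al$ gives
\[
p_\la\big((x_\al-x_\be)y\big)=p_\la\big(y(x_\al-x_\be)\big)\le\ga_\la\|y\|_0\,p_{\la'}(x_\al-x_\be),
\]
and the right-hand side tends to $0$ because $\{x_\al\}$ is $\tau$-Cauchy; completeness of $\widetilde{\ca_0}[\tau]$ then furnishes the limit, and the same inequality applied to two competing nets both $\tau$-converging to $a$ and commuting with $y$ shows the limit is independent of the net, so $a\cdot y$ (and symmetrically $y\cdot a$) is well defined. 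The equality $a\cdot y=y\cdot a$ is immediate, the nets $\{x_\al y\}$ and $\{yx_\al\}$ coinciding termwise by $x_\al y=yx_\al$; that this multiplication extends the one of $\ca_0$ follows, for $a\in\ca_0$ or $y\in\ca_0$, from Remark~\ref{re4.2} and the uniqueness of $\tau$-limits. Finally $\rm (T'_3)$ is obtained by passing to the limit in the claim: by $\tau$-continuity of $p_\la$ and $p_{\la'}$,
\[
p_\la(a\cdot y)=\lim_\al p_\la(x_\al y)\le\ga_\la\|y\|_0\,\lim_\al p_{\la'}(x_\al)=\ga_\la\|y\|_0\,p_{\la'}(a),
\]
where the middle inequality is the claim applied to each commuting pair $(y,x_\al)$ and the original data $(\la',\ga_\la)$ serve unchanged.
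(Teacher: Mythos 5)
Your computational skeleton is exactly the paper's: the Cauchy property of $\{x_\al y\}$ via the estimate $p_\la\big((x_\al-x_{\al'})y\big)\leqq \ga_\la\|y\|_0\,p_{\la'}(x_\al-x_{\al'})$, independence of the limit from the approximating net by the same estimate applied to differences, the termwise identity $x_\al y=yx_\al$ giving $a\cdot y=y\cdot a$, and $\rm (T'_3)$ by passing to the limit in the seminorms. Where you part company with the paper is at your preliminary ``claim'': the paper simply applies $\rm (T_3)$ to the commuting pairs $(x_\al-x_{\al'},\,y)$ with \emph{both} entries in $\widetilde{\ca_0}[\norm_0]$, without comment --- i.e.\ it tacitly reads $\rm (T_3)$ as valid for commuting elements of the $\cs$-completion rather than only of $\ca_0$. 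You are right that this is not automatic, and right that the naive $\norm_0$-approximation fails because commutation is not preserved by the approximants.

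As written, however, your proposal has a genuine gap precisely at this self-declared crux: the claim is ``granted'' rather than proved, and the reduction you offer --- $\tau$-density of $\ca_0\cap\{y\}'$ in $\widetilde{\ca_0}[\norm_0]\cap\{y\}'$ --- is itself left unestablished and does not follow from $\rm (T_1)$--$\rm (T_3)$: relative commutants are not in general respected by completion, and $\ca_0\cap\{y\}'$ may be far smaller than the commutant of $y$ in $\widetilde{\ca_0}[\norm_0]$. What does work, and what covers every use the paper actually makes of the lemma (e.g.\ Proposition 4.7, where all elements involved lie in the $\norm_0$-closure of a single commutative $*$-subalgebra $\cc_0$ of $\ca_0$ generated by the commuting net in $(\ca_0)_+$), is to approximate both factors inside such a $\cc_0$: those approximants do commute with one another, $\rm (T_3)$ applies to them, and both sides of the inequality pass to the $\norm_0$- (hence $\tau$-) limit. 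For an arbitrary strongly commuting pair as in Definition 4.1, where the approximating net is only required to lie in $\widetilde{\ca_0}[\norm_0]$, the needed extension of $\rm (T_3)$ is not justified in the paper either; you have correctly isolated the issue, but your proposed route does not close it.
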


\begin{proof}  Existence of the $\tau -\lim\limits_\al x_\al
    y$ in $\widetilde{\ca_0} [\tau ]$:

     Note that $\{ x_\al y\}_{\al \in \Sigma}$ is a
    $\tau$-Cauchy net in $\widetilde{\ca_0} [\norm_0 ]$.
    Indeed, from $\rm (T_3)$, for every $\la \in \La$, there are $\la' \in
    \La$  and $\gamma_\la > 0$ such that
\[
   p_\la (x_\al
    y-x_{\al'}y)=p_\la \left( (x_\al -x_{\al'})y\right)
    \leqq \gamma_\la \| y\|_0 p_{\la'} (x_\al -x_{\al'}) \underset{\al
    ,\al'}{\longrightarrow}0.
\]
Hence,
    $\tau -\lim\limits_\al x_\al y$ exists in $\widetilde{\ca_0}[\|\cdot\|_0]^\sim [\tau]$,
    which, as already noticed, as a set clearly coincides with $\widetilde{\ca_0} [\tau ]$.

    The existence of the $\tau -\lim\limits_{\al}yx_\al$ in
    $\widetilde{\ca_0}[\|\cdot\|_0]^\sim [\tau]$ is similarly shown and clearly $\tau
    -\lim\limits_\al yx_\al =\tau -\lim\limits_\al x_\al y$.
\medskip

    Independence of $\tau -\lim\limits_\al x_\al y$ from the
    choice of the net $\{ x_\al \}_{\al \in \Sigma}$:

    Let $\{ x'_\be\}_{\be \in \Sigma'}$ be another net in $\ca_0$ such that
    $x'_\be \overset{\tau}{\longrightarrow}a$ and $x'_\be y=yx'_\be$, for all $\be \in \Sigma'$.
    Then,
\[
 x_\al -x'_\be \overset{\tau}{\longrightarrow}0 \ \text{ with }
\    (x_\al -x'_\be )y=y(x_\al -x'_\be ), \forall \ (\al ,\be) \in
\Sigma \times \Sigma'.
\]
Moreover, by $\rm (T_3 )$, for every $\la\in \La$, there exist $\la'
\in \La $ and $\gamma_\la > 0$ such that
\[
 p_\la \left( (x_\al -x'_\be )y\right) \leqq
\gamma_\la \| y\|_0
    p_{\la'}(x_\al -x'_\be ) \underset{\al ,\be}{\rightarrow}0 ;
\]
    this completes the proof of our claim. Thus, we set
\[
a\cdot y := \tau-\lim x_\al y, \ \text{ resp.} \ y \cdot a :=
\tau-\lim y x_\al;
\]
this clearly implies $a \cdot y=y \cdot a.$ Furthermore, using again
$\rm (T_3)$ we conclude that
\[
\forall \ \la \in \La \ \exists \ \la' \in \La \text{ and } \
\gamma_\la > 0 : p_\la (a\cdot y)\leqq \gamma_\la \| y\|_0 p_{\la'}
(a), \ \forall \ a \in \widetilde{\ca_0} [\tau] \ \text{ and } \  y
\in \widetilde{\ca_0} [\norm_0 ],
\]
and this proves $\rm (T'_3)$.
\end{proof}

Now, following \cite{bmit} we define notions of positivity for the
elements of $\widetilde{\ca_0}[\tau]$.

\begin{definition} \label{de4.4}
Let $a \in \widetilde{\ca_0}[\tau]$. Consider the set
\[
 (\ca _0 )_+ :=  \{ x \in \ca_0 : x^* =x \ \text{ and } \
 sp_{\ca _0} (x)\subseteq [0,\infty )  \},
  \]
  where $sp_{\ca _0} (x)$ means spectrum of $x$ in $\ca_0$.
Clearly $(\ca _0 )_+$ is contained in the positive cone of the
$\cs$-algebra $\widetilde{\ca _0} [\norm_0 ]$. The element $a$ is
called \textit{quasi-positive} if there is a net $\{ x_\al
    \}_{\al \in \Sigma}$ in $(\ca _0 )_+$ such that $x_\al
    \underset{\tau}{\longrightarrow}a$.
    In particular, $a$ is called \textit{commutatively quasi-positive} if
    there is a commuting net $\{ x_\al \}_{\al \in \Sigma}$ in $(\ca _0 )_+$ such that $x_\al
    \underset{\tau}{\longrightarrow}a$ .

    Denote by $\widetilde{\ca_0} [\tau]_{q+}$ \emph{the set of all
    quasi-positive elements of} $\widetilde{\ca_0} [\tau]$
    and by $\widetilde{\ca_0} [\tau]_{cq+}$ \emph{the set of all commutatively
    quasi-positive elements of} $\widetilde{\ca_0} [\tau]$.
\end{definition}

An easy consequence of Definition \ref{de4.4} is the following

\begin{lemma}\label{le4.5} $ $
\begin{itemize}
    \item[$(1)$] $\begin{array}{ccc}(\ca_0 )_+& \subset &\widetilde{\ca_0}[\tau ]_{cq+}\\
    \cap &  & \cap\\ \ol{(\ca_0 )_+}^{\norm_0} =\widetilde{\ca_0} [\norm_0 ]_+& \subset &
    \widetilde{\ca_0}  [\tau ]_{q+} \ . \end{array}$
    \item[$(2)$] $\widetilde{\ca_0} [\tau ]_{q+}$ is a positive wedge,
    but it is not necessarily a positive cone. $\widetilde{\ca_0}
    [\tau]_{cq+}$ is not even a positive wedge, in general.
\end{itemize}
\end{lemma}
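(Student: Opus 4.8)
The plan is to read the diagram in part $(1)$ as five separate inclusions/equalities and to dispose of the trivial ones first. The left vertical inclusion $(\ca_0)_+\subseteq\ol{(\ca_0)_+}^{\norm_0}$ is just the inclusion of a set in its closure, and the right vertical inclusion $\widetilde{\ca_0}[\tau]_{cq+}\subseteq\widetilde{\ca_0}[\tau]_{q+}$ is immediate, since a \emph{commuting} net in $(\ca_0)_+$ is in particular a net in $(\ca_0)_+$. For the top inclusion $(\ca_0)_+\subseteq\widetilde{\ca_0}[\tau]_{cq+}$ I would, for $x\in(\ca_0)_+$, use the constant net $x_\al\equiv x$: it is trivially commuting, lies in $(\ca_0)_+$, and $\tau$-converges to $x$ because $\ca_0\hookrightarrow\widetilde{\ca_0}[\tau]$. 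Thus the only substantive statements are the equality on the bottom left and the bottom inclusion.

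For the equality $\ol{(\ca_0)_+}^{\norm_0}=\widetilde{\ca_0}[\norm_0]_+$, write $B:=\widetilde{\ca_0}[\norm_0]$. The inclusion $\subseteq$ follows from the remark already recorded after Definition \ref{de4.4}, namely $(\ca_0)_+\subseteq B_+$, together with the fact that the positive cone $B_+$ of a $\cs$-algebra is $\norm_0$-closed. For the reverse inclusion I would argue by square-root approximation: given $b\in B_+$, pick $z_n\in\ca_0$ with $z_n\to b^{1/2}$ in $\norm_0$, put $w_n:=\tfrac12(z_n+z_n^*)\in\ca_0$ so that $w_n=w_n^*$ and $w_n\to b^{1/2}$, and observe $w_n^2\in\ca_0$ with $w_n^2\to b$. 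Since $sp_B(w_n^2)=\{t^2:t\in sp_B(w_n)\}\subseteq[0,\infty)$, one gets $w_n^2\in(\ca_0)_+$, whence $b\in\ol{(\ca_0)_+}^{\norm_0}$. Here it is essential to read $sp_{\ca_0}(\cdot)$ of Definition \ref{de4.4} as the spectrum computed in the $\cs$-completion, equivalently $(\ca_0)_+=\ca_0\cap B_+$; otherwise, for a genuinely incomplete $\ca_0$, the set $(\ca_0)_+$ can be far too small and the equality breaks down. The bottom inclusion $\widetilde{\ca_0}[\norm_0]_+\subseteq\widetilde{\ca_0}[\tau]_{q+}$ is then immediate: by the equality just proved any $b\in B_+$ is a $\norm_0$-limit of a sequence in $(\ca_0)_+$, and $\rm(T_2)$ (i.e.\ $\tau\preceq\norm_0$) turns this into $\tau$-convergence, so $b\in\widetilde{\ca_0}[\tau]_{q+}$.

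For part $(2)$ I would first check that $\widetilde{\ca_0}[\tau]_{q+}$ is a wedge. Stability under multiplication by $\la\geq0$ is clear, because if $x_\al\underset{\tau}{\longrightarrow}a$ with $x_\al\in(\ca_0)_+$ then $\la x_\al\underset{\tau}{\longrightarrow}\la a$ and $\la x_\al\in(\ca_0)_+$. For additivity, given $a=\tau\text{-}\lim_\al x_\al$ and $a'=\tau\text{-}\lim_\be x'_\be$ with $x_\al,x'_\be\in(\ca_0)_+$, I would pass to the product directed set $\Sigma\times\Sigma'$ and use the net $x_\al+x'_\be$: it $\tau$-converges to $a+a'$ by continuity of addition, and $x_\al+x'_\be\in(\ca_0)_+$ because $(\ca_0)_+=\ca_0\cap B_+$ is closed under sums. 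Hence $a+a'\in\widetilde{\ca_0}[\tau]_{q+}$ and $\widetilde{\ca_0}[\tau]_{q+}$ is a positive wedge. The negative assertions are not provable in general and must be witnessed by examples: to see that $\widetilde{\ca_0}[\tau]_{q+}$ need not be a cone one exhibits a nonzero $a$ with both $a$ and $-a$ quasi-positive (such $a$ arise as $\tau$-limits of positive elements whose positivity is not retained in the larger space, cf.\ \cite{bmit} and Section 6). For $\widetilde{\ca_0}[\tau]_{cq+}$, the scalar-multiplication argument still applies, but the additivity argument breaks down precisely because the approximating families $\{x_\al\}$ and $\{x'_\be\}$ need not commute \emph{with each other}, so $x_\al+x'_\be$ need not be a commuting net; a concrete example then shows that $a+a'$ can fail to be commutatively quasi-positive, so $\widetilde{\ca_0}[\tau]_{cq+}$ is not even a wedge.

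The main obstacle is the density half of the equality $\ol{(\ca_0)_+}^{\norm_0}=\widetilde{\ca_0}[\norm_0]_+$ and, with it, the correct reading of $sp_{\ca_0}(\cdot)$: one must know that the positive elements of the dense $*$-subalgebra $\ca_0$ are $\norm_0$-dense in the full positive cone $B_+$, which is exactly the square-root/continuous-functional-calculus step above and is where completeness of $B$ is used. The remaining difficulty is conceptual rather than technical, namely recognizing that the two negative statements in $(2)$ cannot be proved but only exemplified, the obstruction to additivity in the commutative case being the absence of a common commuting approximant for $a$ and $a'$.
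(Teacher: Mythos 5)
Your proposal is correct. The paper itself offers no proof of Lemma 4.5 --- it is stated as ``an easy consequence of Definition 4.4'' --- and the arguments you supply (constant nets for the top inclusion, square-root approximation for the density of $(\ca_0)_+$ in $\widetilde{\ca_0}[\norm_0]_+$, the product directed set for additivity of $\widetilde{\ca_0}[\tau]_{q+}$, and $\tau\preceq\norm_0$ for the bottom inclusion) are exactly the standard ones the authors presumably had in mind. Your observation that $sp_{\ca_0}(\cdot)$ in Definition 4.4 must be read as the spectrum in the $\cs$-completion, i.e.\ $(\ca_0)_+=\ca_0\cap\widetilde{\ca_0}[\norm_0]_+$, is a genuine and correct catch: with the literal reading (inverses required in $\ca_0$ itself) the set $(\ca_0)_+$ can degenerate and the bottom-left equality fails, as the example of polynomials in $C[0,1]$ shows. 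The only thing left implicit is a concrete counterexample witnessing the two negative assertions in part (2), but the paper does not supply one either, and your identification of the obstruction (failure of pointedness for $q+$, failure of a common commuting approximant for $cq+$) is the right diagnosis.
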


\begin{remark}\label{re4.6}
As we have mentioned before, the equality
$\widetilde{\ca_0}[\|\cdot\|_0]^\sim [\tau] = \widetilde{\ca_0}
[\tau ]$ holds set-theoretically. We consider the following
notation:
\begin{align*}
&\widetilde{\ca_0}[\|\cdot\|_0]^\sim [\tau]_{q+} \equiv \left\{ a\in
\widetilde{\ca_0} [\tau ] : \exists \ \text{ a net } \ \{ x_\al
\}_{\al \in \Sigma} \ \text{ in } \ \widetilde{\ca_0} [\norm_0 ]_+ :
x_\al \underset{\tau}{\longrightarrow} a\right\} \\
\widetilde{\ca_0}[\|\cdot &\|_0]^\sim [\tau]_{cq+}\equiv \left\{ a
\in \widetilde{\ca_0} [\tau ]: \exists \ \text{ a commuting net } \
\{ x_\al \}_{\al \in \Sigma} \ \text{ in } \ \widetilde{\ca_0}
[\norm_0 ]_+ :  x_\al \underset{\tau}{\longrightarrow}a \right\}.
\end{align*}
Then,
\begin{align}
    \widetilde{\ca_0}[\|\cdot\|_0]^\sim [\tau]_{q+} = \widetilde{\ca_0} [\tau
    ]_{q+}, \ \text{ but} \ \
    \widetilde{\ca_0}[\|\cdot\|_0]^\sim [\tau]_{cq+} \supsetneqq
    \widetilde{\ca_0} [\tau ]_{cq+}, \ \text{ in general}.
\end{align}
 If $\ca_0$ is commutative, then
\[
    \widetilde{\ca_0} [\tau ]_{cq+} =\widetilde{\ca_0}[\|\cdot\|_0]^\sim [\tau]_{cq+}
    =\widetilde{\ca_0}[\|\cdot\|_0]^\sim [\tau]_{q+}
    =\widetilde{\ca_0} [\tau ]_{q+}.
\]
\end{remark}

The following Proposition \ref{pro4.7} plays an important role in
the present paper. It is a generalization of Proposition 3.2 in
\cite{bmit}, stated for locally convex quasi $\cs$-algebras, to the
case of locally convex quasi $\cs$-normed algebras.

\begin{proposition}\label{pro4.7}
Let $\ca_0[\norm_0 ]$ be a unital $\cs$-normed algebra and $\tau$ a
locally convex topology on $\ca_0$ that fulfils the conditions $\rm
(T_1 )-(T_3 )$. Suppose that the next condition $\rm (T_4)$ holds:

$\rm (T_4 ) \ $ The set $\cu (\widetilde{\ca_0} [\norm_0 ])_+ \equiv
\{ x\in \widetilde{\ca_0} [\norm_0 ]_+ : \| x\|_0 \leqq 1\}$ is
$\tau$-closed in {\color{red}$\widetilde{\ca_0} [\tau ]$ (or, equivalently, it is $\tau$-complete)}.

Then, $\widetilde{\ca_0} [\tau]$ is a locally convex quasi
$*$-algebra over $\ca_0$ with the properties:

\emph{(1)} $a\in \widetilde{\ca_0} [\tau ]_{cq+}$ implies that
    $\mathit{1}+a$ is invertible with $(\mathit{1}+a)^{-1}$ in $\cu (\widetilde{\ca_0} [\norm_0 ])_+$.

\emph{(2)} For $a\in \widetilde{\ca_0} [\tau ]_{cq+}$ and $\ve
    >0$, the element $a_\ve := a\cdot (\mathit{1}+\ve a)^{-1}$ is
    well-defined, $a-a_\ve \in \widetilde{\ca_0}[\|\cdot\|_0]^\sim [\tau]_{cq+}$
    and $a=\tau -\lim\limits_{\ve \downarrow 0}
    a_\ve $.

\emph{(3)} $\widetilde{\ca_0} [\tau ]_{cq+}\cap (-\widetilde{\ca_0}
    [\tau]_{cq+})=\{ 0\}$.

\emph{(4)} Furthermore, suppose that the following condition

    $\rm (T_5 ) \ $
    $\widetilde{\ca_0} [\tau]_{q+}\cap \widetilde{\ca_0}
    [\norm_0 ]= \widetilde{\ca_0} [\norm_0 ]_+$ \\
is satisfied. Then, if $a\in \widetilde{\ca_0} [\tau ]_{cq+}$ and
$y\in
    \widetilde{\ca_0} [\norm_0 ]_+$ with $y-a \in \widetilde{\ca_0}
    [\tau]_{q+}$, one has that $a\in \widetilde{\ca_0}
    [\norm_0 ]_+$.
\end{proposition}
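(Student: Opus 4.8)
The plan is to realise every needed inverse as a $\tau$-limit of $\cs$-resolvents, to control these limits through $(\mathrm T_3)$ (in the form of $(\mathrm T_3')$ of Lemma 4.3), and to trap them inside the $\tau$-complete set furnished by $(\mathrm T_4)$.

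\emph{Proof of (1).} Fix a commuting net $\{x_\al\}_{\al\in\Sigma}\subseteq(\ca_0)_+$ with $x_\al\to a$ in $\tau$ and set $u_\al:=(\mathit{1}+x_\al)^{-1}$, the inverse taken in the $\cs$-algebra $\widetilde{\ca_0}[\|\cdot\|_0]$; since $x_\al\ge0$ one has $u_\al\in\cu(\widetilde{\ca_0}[\|\cdot\|_0])_+$. First I would show $\{u_\al\}$ is $\tau$-Cauchy: commutativity of the net gives the resolvent identity $u_\al-u_\be=u_\al u_\be(x_\be-x_\al)$, and, since $\|u_\al u_\be\|_0\le1$, the estimate $(\mathrm T_3')$ yields $p_\la(u_\al-u_\be)\le\gamma_\la\,p_{\la'}(x_\be-x_\al)\to0$. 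By $(\mathrm T_4)$ the set $\cu(\widetilde{\ca_0}[\|\cdot\|_0])_+$ is $\tau$-complete, so $u_\al\to b$ in $\tau$ for some $b$ in it. Each $x_\al$ commutes with $b$ (continuity of multiplication by the fixed element $x_\al\in\ca_0$), hence $a$ and $b$ commute strongly and $(\mathit{1}+a)\cdot b=\tau\text{-}\lim_\al(\mathit{1}+x_\al)b$ by Lemma 4.3. Expanding $(\mathit{1}+x_\al)b=\tau\text{-}\lim_\be(\mathit{1}+x_\al)u_\be$ and using $(\mathit{1}+x_\al)u_\be-\mathit{1}=(x_\al-x_\be)u_\be$ together with $(\mathrm T_3')$ gives $p_\la\big((\mathit{1}+a)\cdot b-\mathit{1}\big)\le\gamma_\la\,p_{\la'}(x_\al-a)\to0$, so $(\mathit{1}+a)\cdot b=\mathit{1}$; the relation $b\cdot(\mathit{1}+a)=\mathit{1}$ is symmetric.

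\emph{Proof of (2).} Applying (1) to the commutatively quasi-positive element $\ve a$ produces $b_\ve:=(\mathit{1}+\ve a)^{-1}\in\cu(\widetilde{\ca_0}[\|\cdot\|_0])_+$, and since $b_\ve$ commutes strongly with $a$, Lemma 4.3 makes $a_\ve=a\cdot b_\ve$ well-defined; moreover $(\mathit{1}+\ve a)\cdot b_\ve=\mathit{1}$ rewrites as $\mathit{1}-b_\ve=\ve a_\ve$, so $a_\ve\in\widetilde{\ca_0}[\|\cdot\|_0]$ with $\|a_\ve\|_0\le1/\ve$. Setting $d_\al:=x_\al(\mathit{1}+\ve x_\al)^{-1}$ one checks, again by the resolvent identity and $(\mathrm T_3')$, that $d_\al\to a_\ve$ in $\tau$, whence $e_\al:=x_\al-d_\al=\ve x_\al^2(\mathit{1}+\ve x_\al)^{-1}\to a-a_\ve$ in $\tau$; the $e_\al$ form a commuting net in $\widetilde{\ca_0}[\|\cdot\|_0]_+$, so $a-a_\ve\in\widetilde{\ca_0}[\|\cdot\|_0]^{\sim}[\tau]_{cq+}$, and $a_\ve\in\widetilde{\ca_0}[\|\cdot\|_0]_+$ by the scaled form of $(\mathrm T_4)$. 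The assertion $a=\tau\text{-}\lim_{\ve\downarrow0}a_\ve$ is the main obstacle: the naive bounds do not permit interchanging the limits in $\al$ and $\ve$, because the approximating net need not be $\|\cdot\|_0$-bounded. I would overcome this with the three-term split
\[
p_\la(a-a_\ve)\le p_\la(a-x_\al)+p_\la(x_\al-d_\al)+p_\la(d_\al-a_\ve),
\]
in which the first and third summands are dominated by $\gamma_\la p_{\la'}(x_\al-a)$ \emph{uniformly in} $\ve$ (the third through the resolvent estimate above, whose constant does not see $\ve$), while the middle one is $\le C_\la\,\ve\,\|x_\al\|_0^2$ by $(\mathrm T_2)$. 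Choosing $\al$ large to make the first and third small, and only afterwards letting $\ve\downarrow0$ at the now-fixed value of $\|x_\al\|_0$, forces $p_\la(a-a_\ve)\to0$.

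\emph{Proof of (4) and (3).} For (4) I would feed (2) into $(\mathrm T_5)$: since $a-a_\ve\in\widetilde{\ca_0}[\|\cdot\|_0]^{\sim}[\tau]_{cq+}\subseteq\widetilde{\ca_0}[\tau]_{q+}$ (Remark 4.6) and $y-a\in\widetilde{\ca_0}[\tau]_{q+}$, the wedge property (Lemma 4.5) gives $y-a_\ve\in\widetilde{\ca_0}[\tau]_{q+}$; as $y-a_\ve$ lies in $\widetilde{\ca_0}[\|\cdot\|_0]$, condition $(\mathrm T_5)$ yields $y-a_\ve\in\widetilde{\ca_0}[\|\cdot\|_0]_+$, i.e. $0\le a_\ve\le y$ in the $\cs$-order, so $\|a_\ve\|_0\le\|y\|_0$. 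Hence $\{a_\ve\}$ lies in the $\tau$-closed set $\|y\|_0\,\cu(\widetilde{\ca_0}[\|\cdot\|_0])_+$ and converges to $a$, giving $a\in\widetilde{\ca_0}[\|\cdot\|_0]_+$. For (3), applying (2) to both $a$ and $-a$ exhibits $a$ simultaneously as a $\tau$-limit of the positive $\cs$-elements $a_\ve$ and as a $\tau$-limit of negative ones; writing $s_\ve=a_\ve+\hat a_\ve\ge0$ (with $\hat a_\ve$ the analogue for $-a$) one has $0\le a_\ve\le s_\ve$ and $s_\ve\to0$ in $\tau$, and $a=0$ follows once this order-squeeze is pushed through. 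I expect that squeeze — a $\tau$-normality of the positive cone, to be extracted from $(\mathrm T_4)$ (or by reduction to the commutative $\cs$-algebra generated by the net) — to be the one genuinely delicate point of (3), parallel to the limit-interchange met in (2).
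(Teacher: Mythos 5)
Your treatment of (1), (2) and (4) is correct and is essentially the paper's own argument: the resolvent identity together with $\rm (T_3)$ makes $\{(\mathit{1}+x_\al)^{-1}\}$ a $\tau$-Cauchy net, $\rm (T_4)$ traps its limit in $\cu(\widetilde{\ca_0}[\norm_0])_+$, and the two-parameter limit interchange you rightly flag in (2) is resolved by the same kind of three-term split the paper invokes (its reference back to the estimate for $(\mathit{1}+a)\cdot(\mathit{1}+x_\al)^{-1}\to(\mathit{1}+a)\cdot y$): one summand controlled by $\gamma_\la p_{\la'}(x_\al-a)$ uniformly in $\ve$, the other vanishing with $\ve$ at fixed $\al$. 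Your (4) coincides with the paper's proof.

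The genuine gap is in (3), and you have located it yourself: from $0\le a_\ve\le s_\ve$ with $s_\ve=a_\ve+\hat a_\ve\to 0$ in $\tau$ one cannot conclude $a_\ve\to 0$, because no $\tau$-normality of the positive wedge is available from $\rm (T_1)$--$\rm (T_4)$ (Lemma 4.5(2) even warns that $\widetilde{\ca_0}[\tau]_{q+}$ need not be a cone). The paper closes (3) without any squeeze: it forms the \emph{difference}
\[
x_\ve:=a\cdot(\mathit{1}+\ve a)^{-1}-(-a)\cdot(\mathit{1}-\ve a)^{-1}=2a\cdot(\mathit{1}+\ve a)^{-1}(\mathit{1}-\ve a)^{-1}
\]
and observes, by (1) and the functional calculus in a commutative $\cs$-subalgebra of $\widetilde{\ca_0}[\norm_0]$, that $x_\ve=2a_\ve\cdot(\mathit{1}-\ve a)^{-1}\geqq 0$ while $-x_\ve=2\hat a_\ve\cdot(\mathit{1}+\ve a)^{-1}\geqq 0$, whence $x_\ve\in\widetilde{\ca_0}[\norm_0]_+\cap(-\widetilde{\ca_0}[\norm_0]_+)=\{0\}$. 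Thus $a_\ve=\hat a_\ve$ for every $\ve>0$, and letting $\ve\downarrow 0$ and applying (2) to $a$ and to $-a$ gives $a=-a$, i.e. $a=0$. The decisive point is that positivity is exploited inside the norm-closed $\cs$-algebra $\widetilde{\ca_0}[\norm_0]$, where properness of the positive cone is automatic, rather than with respect to $\tau$. Your sum $s_\ve$ could be salvaged in the same spirit (one shows $s_\ve\leqq 0$ as well, hence $s_\ve=0$ and then $a_\ve=0$), but some such commutative $\cs$-computation is indispensable; the $\tau$-convergence $s_\ve\to 0$ alone will not do.
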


\begin{proof} (1) There exists a commuting net $\{ x_\al
\}_{\al \in \Sigma}$ in $(\ca_0 )_+$ with $x_\al
\underset{\tau}{\longrightarrow} a$ and $x_\al x_{\al'}
=x_{\al'}x_\al$, for all $\al, \al' \in \Sigma$. Using properties of
the positive elements in a $\cs$-algebra, and condition $\rm
(T_3 )$ we get that for every $\la \in \La$ there are $\la' \in \La$
and  $\gamma_\la > 0$ such that:
\begin{align*}
    p_\la & \left( (\mathit{1}+x_\al )^{-1}-(\mathit{1}+x_{\al'})^{-1}\right)
    =p_\la \left( (\mathit{1}+x_\al )^{-1} (x_{\al'}-x_\al )(\mathit{1}+x_{\al'}
    )^{-1}\right) \\
    &\leqq \gamma_\la \| (\mathit{1}+x_\al )^{-1} \|_0 \| (\mathit{1}+x_{\al'})^{-1}
    \|_0 p_{\la'} (x_{\al'} -x_{\al}) \leqq \gamma_\la p_{\la'} (x_{\al'}-x_\al
    )\underset{\al ,\al'}{\longrightarrow}0.
\end{align*}
    Hence $\{ (\mathit{1}+x_\al )^{-1}\}_{\al \in \Sigma}$ is a Cauchy net in $\widetilde{\ca_0} [\tau]$
    consisting of elements of $\cu (\widetilde{\ca_0}  [\norm_0
    ])_+$, the latter set being $\tau$-closed by $\rm (T_4 )$. Hence,
    there exists $y\in \cu (\widetilde{\ca_0}  [\norm_0])_+$ such that
\begin{align}
 (\mathit{1}+x_\al )^{-1} \underset{\tau}{\longrightarrow}
 y.
\end{align}
    We shall show that $(\mathit{1}+a)^{-1}$ exists in $\cu (\widetilde{\ca_0}
    [\norm_0 ])_+ $ and coincides with $y$.
    It is easily seen that, for each index $\al \in \Sigma$, $(\mathit{1}+x_\al )^{-1}$
    commutes strongly with $(\mathit{1}+a)$, so that $(\mathit{1}+a)\cdot (\mathit{1}+x_\al )^{-1}$ is
    well-defined (Lemma \ref{le4.3}).
     Similarly, $(x_\al -a)\cdot
    (\mathit{1}+x_\al )^{-1} =\mathit{1} -(\mathit{1}+a)\cdot (\mathit{1}+x_\al )^{-1}$ is well-defined,
    therefore using $\rm (T'_3)$ of Lemma 4.3, we have that for all $\la \in\La$
there are $\la' \in\La$  and $\gamma_\la >0$ with
\begin{align*}
p_\la (\mathit{1}-(\mathit{1}+a)\cdot (\mathit{1}+x_\al )^{-1})=
    p_\la ((x_\al -a)\cdot (\mathit{1}+x_\al )^{-1})\leqq \gamma_\la  p_{\la'}(x_\al
    -a)\underset{\al}{\lto}0.
\end{align*}
Thus,
    $(\mathit{1}+a)\cdot (\mathit{1}+x_\al )^{-1} \underset{\tau}{\lto} \mathit{1}$.
By the above,
\begin{align*}
\mathit{1}+x_\al \underset{\tau}{\lto} \mathit{1}+a \ \text{ and } \
(\mathit{1}+x_\al )y =y(\mathit{1}+x_\al), \forall \ \al \in \Sigma.
\end{align*}
Hence, $y$ commutes strongly with $\mathit{1}+a$, therefore
$(\mathit{1}+a)\cdot
     y$ is well-defined by Lemma \ref{le4.3}. Now, since
$x_\al \underset{\tau}{\lto} a$, we have that
\begin{align}
\forall \ \la\in\La \ \text{ and } \
     \forall \ \ve >0, \exists \ \al_0 \in \Sigma :
     p_{\la}(x_{\al'}-a)<\ve, \ \forall \ \al'\geqq \al_0.
\end{align}
Using $\rm (T_3)$, $\rm (T'_3)$ of Lemma 4.3, and relations
(4.3), (4.2) we obtain
\begin{align*}
        p_\la &((\mathit{1}+a)\cdot (\mathit{1}+x_\al )^{-1}-(\mathit{1}+a)\cdot y) \\
        &\leqq p_\la
        ((\mathit{1}+a)\cdot (\mathit{1}+x_\al )^{-1}-(\mathit{1}+x_{\al_0})(\mathit{1}+x_\al )^{-1})\\
        &+p_\la ((\mathit{1}+x_{\al_0})(\mathit{1}+x_\al )^{-1}-(\mathit{1}+x_{\al_0})y)
        +p_\la ((\mathit{1}+x_{\al_0})y-(\mathit{1}+a)y) \\
        &\leqq \gamma_\la p_{\la'}(a-x_{\al_0})+\gamma_\la \| \mathit{1}+x_{\al_0}\|_0 p_{\la'}
        ((\mathit{1}+x_\al )^{-1}-y)+\gamma_\la p_{\la'}(x_{\al_0}-a)\\
        &<2\ve +\gamma_\la \| \mathit{1}+x_{\al_0} \|_0 p_{\la'}((\mathit{1}+x_\al )^{-1}-y), \ \
        \forall \ \ve >0.
\end{align*}
     Hence,
     \[
        0\leqq \lim_\al p_\la ((\mathit{1}+a)\cdot (\mathit{1}+x_\al )^{-1}-(\mathit{1}+a)\cdot
        y)\leqq 2\ve, \ \forall \ve >0,
     \]
     which implies
     \[
        \lim_\al p_\la ((\mathit{1}+a)\cdot (\mathit{1}+x_\al )^{-1}-(\mathit{1}+a)\cdot y )=0.
     \]
     Consequently,
 \begin{align}
        (\mathit{1}+a)\cdot (\mathit{1}+x_\al )^{-1}\underset{\tau}{\lto}(\mathit{1}+a)\cdot y.
 \end{align}
     Similarly, $(\mathit{1}+x_\al )^{-1}\cdot
     (\mathit{1}+a)\underset{\tau}{\lto}y\cdot (\mathit{1}+a)$. So from (4.3) and (4.4) we
     conclude that $(\mathit{1}+a)\cdot y=y\cdot (\mathit{1}+a)=\mathit{1}$, therefore
      $y=(\mathit{1}+a)^{-1}.$

(2) By (1), for every $ \ve >0$, the element $(\mathit{1}+\ve
a)^{-1}$ exists in $\cu (\widetilde{\ca_0} [\norm_0 ])_+$, and
commutes strongly with $a$. Hence (see Lemma \ref{le4.3}),
$a_\ve:=a\cdot (\mathit{1}+\ve a)^{-1}$ is well-defined. Moreover,
applying $\rm (T'_3 )$ of Lemma 4.3, we have that for all $\la \in
\La$, there exist $\la' \in \La$ and $\gamma_\la >0$ such that
\[
    p_\la (\mathit{1}-(\mathit{1}+\ve a)^{-1})=\ve p_\la (a\cdot (\mathit{1}+\ve a)^{-1}) \leqq
    \ve \gamma_\la \| (\mathit{1}+\ve a)^{-1}\|_0 p_{\la'}(a) \leqq \ve \gamma_\la p_{\la'}(a).
\]
Therefore,
\begin{align}
\tau -\lim_{\ve \downarrow 0} (\mathit{1}+\ve a)^{-1}=\mathit{1}.
\end{align}
On the other hand, since $(\mathit{1}-(\mathit{1}+\ve a)^{-1})$
commutes strongly with $a$ and \linebreak[4] $a_\ve
=\ve^{-1}(\mathit{1}-(\mathit{1}+\ve a)^{-1})$, $\ve>0$, we have
\begin{align}
 (\mathit{1}-(\mathit{1}+\ve a)^{-1})\cdot a=a \cdot (\mathit{1}-(\mathit{1}+\ve a)^{-1})
 =a- a_\ve  \in \widetilde{\ca_0} [\norm_0
 ]^\sim [\tau]_{cq+}.
\end{align}
Using (4.5), (4.6) and same arguments as in (4.4), we get that $\tau
-\lim\limits_{\ve \downarrow 0}a_\ve =a$.

(3) Let $a\in \widetilde{\ca_0} [\tau]_{cq+}\cap (-\widetilde{\ca_0}
[\tau]_{cq+})$ and $\ve >0$ sufficiently small. By (2) (see also
Remark \ref{re4.6}), we have
\begin{align*}
    \widetilde{\ca_0}[\norm_0 ]^\sim [\tau]_{cq+} \ni a\cdot (\mathit{1}+\ve
    a)^{-1}\underset{\tau} {\lto} \ a; \ \text{ in the same way } \
    -a \cdot (\mathit{1}-\ve a)^{-1}\underset{\tau}{\lto}-a.
\end{align*}
Now the element
\begin{align*}
    x_\ve \equiv \ a\cdot (\mathit{1}+\ve a)^{-1}-(-a)\cdot (\mathit{1}-\ve a)^{-1}
        =2a \cdot (\mathit{1}+\ve a)^{-1}(\mathit{1}-\ve a)^{-1}
\end{align*}
belongs to $\widetilde{\ca_0}[\norm_0 ]_+$ by (1) and the functional
calculus of commutative $\cs$-algebras. Similarly,
   $ -x_\ve =2(-a)\cdot (\mathit{1}-\ve a)^{-1}(\mathit{1}+\ve a)^{-1}\in \widetilde{\ca_0}
    [\norm_0 ]_+.$
Hence,
\begin{align*}
    x_\ve \in \widetilde{\ca_0} [\norm_0 ]_+ \cap (-\widetilde{\ca_0}
    [\norm_0 ]_+ )=\{ 0\}, \ \text{ so that } \
    a\cdot (\mathit{1}+\ve a)^{-1}=-a\cdot (\mathit{1}-\ve a)^{-1}.
\end{align*}
Furthermore, by (2),
\begin{align*}
  a=\tau -\lim_{\ve \downarrow 0}a\cdot (\mathit{1}+\ve a)^{-1}=\tau
    -\lim_{\ve \downarrow 0} (-a) \cdot (\mathit{1}-\ve a)^{-1}=-a, \
    \text{ so } \  a=0.
\end{align*}
(4) Note that
   $ y-a_\ve = (y-a)+(a-a_\ve )\in \widetilde{\ca_0} [\tau ]_{q+}$,
since (by (4) and (2) resp.) the elements  $y-a$, $a-a_\ve$ belong
to $\widetilde{\ca_0} [\tau]_{q+}$ and the latter set is a positive
wedge according to Lemma 4.5(2). On the other hand,
\[
a_\ve = a\cdot (\mathit{1}+\ve a)^{-1} = (\mathit{1}+\ve a)^{-1}
\cdot a= \ve^{-1} (\mathit{1}-(\mathit{1}+\ve a)^{-1}) \in
\widetilde{\ca_0}[\|\cdot\|_0].
\]
Thus, taking under consideration the assumption $\rm (T_5)$ we
conclude that
\[
    y-a_\ve \in \widetilde{\ca_0} [\tau]_{q+}\cap \widetilde{\ca_0}
    [\norm_0 ]= \widetilde{\ca_0} [\norm_0 ]_+,
\]
which clearly gives
$    \| a_\ve \|_0 \leqq \| y\|_0$, for every $\ve >0$.
Applying $\rm (T_4 )$, we show that $a\in \widetilde{\ca_0} [\norm_0
]_+ $.
\end{proof}

\begin{definition}\label{de4.8}
Let $\ca_0[\norm_0 ]$ be a unital $\cs$-normed algebra, $\tau$ a
locally convex topology on $\ca_0$ satisfying the conditions $\rm
(T_1 )-(T_5 )$ (for $\rm (T_4), (T_5 )$ see the previous
proposition). Then,
%\begin{itemize}

$\bullet$   a quasi $*$-subalgebra $\ca$ of the locally convex quasi
    $*$-algebra
    $\widetilde{\ca_0} [\tau]$ over ${\ca}_0$ containing $\widetilde{\ca_0}
    [\norm_0 ]$ is said to be a \textit{locally convex quasi $\cs$-normed algebra over
    $\ca_0$}.

$\bullet$   A locally convex quasi $\cs$-normed algebra $\ca$ over
    $\ca_0$ is said to be
    \textit{normal} if $a\cdot y\in \ca$ whenever $a\in \ca$ and
    $y\in \widetilde{\ca_0} [\norm_0 ]$ commute strongly.

$\bullet$    A locally convex quasi $\cs$-normed algebra $\ca$ over
    $\ca_0$ is called
    a \textit{locally convex quasi $\cs$-algebra} if $\ca_0
    [\norm_0 ]$ is a $\cs$-algebra.
%    \end{itemize}
\end{definition}

\noindent Note that the condition $\rm (T_3)$ in the present paper
is weaker than the condition

$\rm (T_3)$  $\forall \ \la\in \La, \ \exists \ \la'\in\La :
p_\la(xy) \leqq \|x\|_0 p_{\la'}(y), \ \forall \ x,y \in \ca_0 $\
with  $xy=yx$

\noindent in \cite{bmit}. Nevertheless, results for locally convex
quasi $\cs$-algebras in \cite{bmit} are valid in the present paper
for the wider class of locally convex $\cs$-normed algebras. It
follows, by the very definitions, that \emph{a locally convex quasi
$\cs$-algebra is a normal locally convex quasi $\cs$-normed
algebra}. A variety of examples of locally convex quasi
$\cs$-algebras are given in \cite{bmit}, Sections 3 and 4. Examples
of locally convex quasi $\cs$-normed algebras are presented in
Sections 6 and 7

An easy consequence of Definition 4.8 and Lemma 4.3 is the following

\begin{lemma}\label{le4.9}
Let $\ca_0[\|\cdot\|_0]$ and $\tau$ be as in Definition \emph{ \ref{de4.8}}.
Then the following hold:

    \emph{(1)} $\widetilde{\ca_0} [\tau]$ is a normal locally convex
    quasi $\cs$-normed algebra over $\ca_0$.

    \emph{(2)} Suppose $\ca$ is a commutative locally convex quasi
    $\cs$-normed algebra over $\ca_0$. Then $\ca \cdot \widetilde{\ca_0}
    [\norm_0 ]\equiv$ linear span of $\{ a\cdot y : a \in \ca, y \in \widetilde{\ca_0}
    [\norm_0 ]\}$ is a commutative locally convex quasi
    $\cs$-algebra over $\widetilde{\ca_0} [\norm_0 ]$ under the
    multiplication $a\cdot y$ $(a\in \ca, y\in \widetilde{\ca_0}
    [\norm_0 ])$. In particular, if $\ca$ is normal, then $\ca$ is a
    commutative locally convex quasi $\cs$-algebra over $\widetilde{\ca_0}
    [\norm_0 ]$.
\end{lemma}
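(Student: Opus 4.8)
The two assertions have rather different flavours, so the plan is to treat them separately, extracting as much as possible from Lemma~\ref{le4.3} and, for the commutative reduction, from Lemma~\ref{le3.1} and Remark~\ref{re4.6}.

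\emph{Part (1).} This I expect to be essentially a bookkeeping check against Definition~\ref{de4.8}. By $\rm (T_1)$ the multiplication of $\ca_0[\tau]$ is separately continuous, so (as recalled in the Preliminaries) $\widetilde{\ca_0}[\tau]$ is a locally convex quasi $*$-algebra over $\ca_0$, and by $\rm (T_2)$ it contains $\widetilde{\ca_0}[\norm_0]$. Viewing $\widetilde{\ca_0}[\tau]$ as a quasi $*$-subalgebra of itself containing $\widetilde{\ca_0}[\norm_0]$, the first bullet of Definition~\ref{de4.8} makes it a locally convex quasi $\cs$-normed algebra over $\ca_0$. Normality is then immediate: if $a\in\widetilde{\ca_0}[\tau]$ and $y\in\widetilde{\ca_0}[\norm_0]$ commute strongly, Lemma~\ref{le4.3} produces $a\cdot y$ as a $\tau$-limit of elements of $\widetilde{\ca_0}[\norm_0]$, whence $a\cdot y\in\widetilde{\ca_0}[\norm_0]^\sim[\tau]=\widetilde{\ca_0}[\tau]$, which is all that normality requires.

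\emph{Part (2).} Write $\mathcal{B}:=\ca\cdot\widetilde{\ca_0}[\norm_0]$. Since $\ca$ is commutative and contains $\ca_0$, the algebra $\ca_0[\norm_0]$ is commutative, so as noted at the start of this section $\rm (T_1)$--$\rm (T_3)$ force the full regularity condition $\rm (R)$; by Lemma~\ref{le3.1}, $\widetilde{\ca_0}[\tau]$ is then a locally convex quasi $*$-algebra over the $\cs$-algebra $\widetilde{\ca_0}[\norm_0]$, with $a\cdot y=y\cdot a$ defined for \emph{all} $a\in\widetilde{\ca_0}[\tau]$, $y\in\widetilde{\ca_0}[\norm_0]$ and obeying the quasi $*$-algebra axioms (ii)$'$, (iii)$'$ relative to $\widetilde{\ca_0}[\norm_0]$. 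I would then check that $\mathcal{B}$ is a quasi $*$-subalgebra of $\widetilde{\ca_0}[\norm_0]^\sim[\tau]$ over $\widetilde{\ca_0}[\norm_0]$: it is a $*$-invariant subspace because $(a\cdot y)^*=a^*\cdot y^*$ with $a^*\in\ca$, $y^*\in\widetilde{\ca_0}[\norm_0]$; it contains $\widetilde{\ca_0}[\norm_0]$ since $y=y\cdot\mathit{1}$; and for $z\in\widetilde{\ca_0}[\norm_0]$ the associativity axiom (ii)$'$ gives $(a\cdot y)\cdot z=a\cdot(yz)\in\mathcal{B}$, so $\mathcal{B}$ is closed under multiplication by $\widetilde{\ca_0}[\norm_0]$. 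Density of $\widetilde{\ca_0}[\norm_0]$ in $\mathcal{B}[\tau]$ and separate $\tau$-continuity of multiplication by $z$ (via the estimate $\rm (T'_3)$ of Lemma~\ref{le4.3}) then exhibit $\mathcal{B}$ as a commutative locally convex quasi $*$-algebra over $\widetilde{\ca_0}[\norm_0]$.

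To promote this to a locally convex quasi $\cs$-\emph{algebra} over $\widetilde{\ca_0}[\norm_0]$ in the sense of Definition~\ref{de4.8}, I must verify that the $\cs$-algebra $\widetilde{\ca_0}[\norm_0]$ together with $\tau$ satisfies $\rm (T_1)$--$\rm (T_5)$: conditions $\rm (T_1)$ and $\rm (T_3)$ follow from the commutative condition $\rm (R)$ (extended by $\norm_0$-continuity to $\widetilde{\ca_0}[\norm_0]$), conditions $\rm (T_2)$ and $\rm (T_4)$ are unchanged since they already refer to $\widetilde{\ca_0}[\norm_0]$ and $\tau$, and $\rm (T_5)$ follows from the commutative identity $\widetilde{\ca_0}[\norm_0]^\sim[\tau]_{q+}=\widetilde{\ca_0}[\tau]_{q+}$ of Remark~\ref{re4.6}. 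This verification of $\rm (T_1)$--$\rm (T_5)$ for the enlarged base is where I expect the main obstacle to lie, the delicate point being $\rm (T_5)$, for which commutativity (through Remark~\ref{re4.6}) is exactly what is needed. Finally, for the ``in particular'' clause: when $\ca$ is normal, commutativity makes every $y\in\widetilde{\ca_0}[\norm_0]$ commute strongly with every $a\in\ca$, so $a\cdot y\in\ca$ and $\mathcal{B}\subseteq\ca$; the reverse inclusion $\ca\subseteq\mathcal{B}$ holds via $a=a\cdot\mathit{1}$, whence $\ca=\mathcal{B}$ is itself a commutative locally convex quasi $\cs$-algebra over $\widetilde{\ca_0}[\norm_0]$.
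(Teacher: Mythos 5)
Your proof is correct and follows exactly the route the paper indicates: the paper offers no written argument beyond calling the lemma ``an easy consequence of Definition 4.8 and Lemma 4.3,'' and your part (1) is precisely that consequence, while your part (2) correctly exploits the observation (already made in the paper just before Definition 4.1) that commutativity of $\ca_0$ upgrades $\rm (T_3)$ to the regularity condition $\rm (R)$ so that Lemma 3.1 and Remark 4.6 apply. The only step deserving the care you gave it is the re-verification of $\rm (T_1)$--$\rm (T_5)$ for the enlarged base $\widetilde{\ca_0}[\norm_0]$, and your treatment of $\rm (T_5)$ via Remark 4.6 is the right one.
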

%%%%%%%%%%%%%%%%%%%%%%%%%%%%%%%%%%%%%%%%%%%%%%%%%%%%%%%%%%%%%%%%%%%%%%%%%%%%%%%%%

\setcounter{section} {4}
\section{Commutative locally convex quasi $\cs$-normed algebras}
\noindent

In this Section, we discuss briefly some results on the structure of
a commutative locally convex quasi $\cs$-normed algebra $\ca [\tau]$
and on a functional calculus for its quasi-positive elements, that
are similar to those in \cite[Sections 5 and 6]{bmit}.

 Let $\ca [\tau]$ be a commutative locally convex
quasi $\cs$-normed algebra over $\ca_0$ (see Definition 4.8). Then,
\[
    \ca_0 [\norm_0 ]\subset \widetilde{\ca_0} [\norm_0 ]\subset \ca
    [\tau] \subset \ca [\tau ]\cdot \widetilde{\ca_0} [\norm_0 ]\subset \widetilde{\ca_0}
    [\tau ],
\]
where $\ca_0 [\norm_0 ]$ is a commutative unital $\cs$-normed
algebra and $\ca [\tau ]\cdot \widetilde{\ca_0} [\norm_0 ]$ is a
commutative locally convex quasi $\cs$-algebra over the unital
$\cs$-algebra $\widetilde{\ca_0} [\norm_0 ]$ according to Lemma
4.9(2). Thus, using some results of the Sections 5, 6 in \cite{bmit}
for the latter algebra we obtain information for the structure of
$\ca [\tau]$.

Let $W$ be a compact Hausdorff space, $\C^*=\C\cup\{\infty\}$, and $\ff(W)_+$ a set
of $\C^*$-valued positive continuous functions on $W$, which take
the value $\infty$ on at most a nowhere dense subset $W_0$ of
$W$. The set
\[
    \ff(W) \equiv \{ fg_0 +h_0 : f\in \ff (W )_+ \ \text{ and } \ g_0 ,h_0 \in
    \cc(W )\},
\]
where $\cc(W )$ is the $\cs$-algebra of all continuous $\C$-valued
functions on $W$, is called \emph{the set of $\C^*$-valued
continuous functions on $W$ generated by the wedge $\ff (W)_+$
and the $\cs$-algebra} $\cc (W)$. Using \cite[Definition 5.6]{bmit} and $\ff(W)$
we get the following theorem, which is an application of Theorem
5.8 of \cite{bmit} for the commutative locally convex quasi
$\cs$-algebra $\ca [\tau ]\cdot \widetilde{\ca_0} [\norm_0 ]$ over
the unital commutative $\cs$-algebra $\widetilde{\ca_0} [\norm_0 ]$,
with $\ca [\tau]_{q+} \cdot \widetilde{\ca_0} [\norm_0 ]$, in the
place of $\fm (\ca_0, \ca[\tau]_{q+})$.

\begin{theorem}\label{th5.1}
There exists a map $\vF$ from $\ca [\tau]_{q+} \cdot
\widetilde{\ca_0} [\norm_0 ]$ onto $\ff(W)$, where $W$ is the
compact Hausdorff space corresponding to the Gel'fand space of the
unital commutative $\cs$-algebra $\widetilde{\ca_0} [\norm_0 ]$,
such that:
\begin{itemize}
    \item[(i)] $\vF (\ca [\tau ]_{q+})=\ff (W)_+$ and
    $ \ {\vF}(\la a+b)=\la \vF (a)+\vF (b)$,
    $\forall$ $a,b\in \ca [\tau]_{q+}$,
    $\la \geqq 0$;

    \item[(ii)] $\vF$
    is an isometric $*$-isomorphism from $\widetilde{\ca_0} [\norm_0
    ]$ onto $\cc(W)$;
    \item[(iii)]
    $\vF (ax)=\vF (a)\vF (x)$,
    $\ \vF ((\la a+b)x)=(\la \vF (a)+\vF (b))\vF (x) \ $ and
    %$\vF (\la ax)=\la \vF (a)\vF (x)$, $\forall$ $a\in \ca [\tau ]_{q+}$,
    %$x\in \ca_0 $, $\la \geqq 0$,
    $\ \vF (a(x_1 +x_2 ))=\vF (a)(\vF (x_1 )+\vF (x_2 ))$, $\forall
    \ a,b\in \ca [\tau ]_{q+}$, $x, x_1 ,x_2  \in \ca_0$
    and $\la \geqq 0$.
\end{itemize}
\end{theorem}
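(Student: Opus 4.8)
The plan is to reduce everything to \cite[Theorem 5.8]{bmit} by passing to the quasi $C^*$-algebra generated by $\ca[\tau]$ over the base $C^*$-algebra $\widetilde{\ca_0}[\norm_0]$, and then to read off the three assertions from that theorem.

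First I would fix the framework. By Lemma \ref{le4.9}(2) the linear span $\BB:=\ca[\tau]\cdot\widetilde{\ca_0}[\norm_0]$ is a commutative locally convex quasi $C^*$-algebra over the unital commutative $C^*$-algebra $\widetilde{\ca_0}[\norm_0]$. Since the latter is unital and commutative, the Gel'fand--Naimark theorem yields an isometric $*$-isomorphism of $\widetilde{\ca_0}[\norm_0]$ onto $\cc(W)$, where $W$ is its Gel'fand (character) space, a compact Hausdorff space. This simultaneously fixes $W$ and the restriction of the map $\vF$ to $\widetilde{\ca_0}[\norm_0]$, which will be this Gel'fand isomorphism; note that $\mathit{1}\in(\ca_0)_+\subset\ca[\tau]_{q+}$, so $\widetilde{\ca_0}[\norm_0]=\mathit{1}\cdot\widetilde{\ca_0}[\norm_0]$ indeed lies in the domain $\ca[\tau]_{q+}\cdot\widetilde{\ca_0}[\norm_0]$, and (ii) follows at once.

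Next I would match the positivity data. In the commutative case Remark \ref{re4.6} gives $\ca[\tau]_{cq+}=\ca[\tau]_{q+}$ and identifies all four quasi-positive sets, so $\ca[\tau]_{q+}$ is precisely the wedge of quasi-positive elements relative to the $C^*$-algebra $\widetilde{\ca_0}[\norm_0]$, and $\ca[\tau]_{q+}\cdot\widetilde{\ca_0}[\norm_0]$ is exactly the object denoted $\fm(\ca_0,\ca[\tau]_{q+})$ in \cite[Definition 5.6]{bmit} when the base $C^*$-algebra there is taken to be $\widetilde{\ca_0}[\norm_0]$. I would then check that $\BB$ verifies the hypotheses of \cite[Theorem 5.8]{bmit}; the only delicate point is that the present condition $\rm (T_3)$ carries a constant $\gamma_\la$ rather than the constant $1$ of the $\rm (T_3)$ used in \cite{bmit}, and one confirms that the construction of $\vF$ there uses the estimate only up to a multiplicative constant, so the weakening is harmless.

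Finally I would invoke \cite[Theorem 5.8]{bmit} for $\BB$, with $\ca[\tau]_{q+}\cdot\widetilde{\ca_0}[\norm_0]$ in place of $\fm(\ca_0,\ca[\tau]_{q+})$. This produces the surjection $\vF$ onto $\ff(W)$ whose restriction to $\ca[\tau]_{q+}$ is additive, positively homogeneous and has range $\ff(W)_+$ (giving (i)), whose restriction to $\widetilde{\ca_0}[\norm_0]$ is the Gel'fand isomorphism of (ii), and which obeys the bilinearity and multiplicativity identities of (iii), these being exactly the relations expressing compatibility of $\vF$ with the quasi $*$-multiplication $a\cdot x$ and with the $\widetilde{\ca_0}[\norm_0]$-module structure. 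The emergence of $\ff(W)_+$---positive continuous functions assuming the value $\infty$ on at most a nowhere dense set---is forced by quasi-positive elements being $\tau$-limits of positive elements of $\ca_0$, whose Gel'fand transforms may blow up off a nowhere dense set. The main obstacle is the verification in the previous step: confirming that the whole apparatus of \cite[Theorem 5.8]{bmit}, whose original setting has a $C^*$-algebra base and the strong $\rm (T_3)$, survives the passage to $\BB$ over $\widetilde{\ca_0}[\norm_0]$ under the weaker $\rm (T_3)$; once the pertinent estimates are seen to be insensitive to $\gamma_\la$, the remainder is a transcription of \cite[Theorem 5.8]{bmit} into the present notation.
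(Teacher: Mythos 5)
Your proposal is correct and follows essentially the same route as the paper: the paper likewise obtains Theorem 5.1 as a direct application of \cite[Theorem 5.8]{bmit} to the commutative locally convex quasi $\cs$-algebra $\ca[\tau]\cdot\widetilde{\ca_0}[\norm_0]$ over the unital commutative $\cs$-algebra $\widetilde{\ca_0}[\norm_0]$ (furnished by Lemma 4.9(2)), with $\ca[\tau]_{q+}\cdot\widetilde{\ca_0}[\norm_0]$ in place of $\fm(\ca_0,\ca[\tau]_{q+})$. Your additional checks (that $\mathit{1}$ places $\widetilde{\ca_0}[\norm_0]$ in the domain, and that the weaker $\rm(T_3)$ with the constant $\gamma_\la$ does not disturb the estimates) are exactly the points the paper glosses over with its blanket remark after Definition 4.8 that the results of \cite{bmit} remain valid under the weakened condition.
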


$\bullet$ Further we consider a functional calculus for the
quasi-positive elements of the commutative locally convex quasi
$\cs$-normed algebra $\ca[\tau]$ over $\ca_0$. For this, we must
extend the multiplication of $\ca[\tau]$.

Let $a,b \in \ca [\tau]_{q+}$. Then (see also \cite
    [Definition 6.1]{bmit}), $a$ is called \textit{left multiplier} of $b$
    if there are nets $\{ x_\al \}_{\al \in \Sigma} ,\{ y_\be\}_{\be \in \Sigma'} $
    in $(\ca_0 )_+$ such that $x_\al
    \underset{\tau}{\lto}a$, $y_\be \underset{\tau}{\lto}b$ and $x_\al
    y_\be \underset{\tau}{\lto} c$, where the latter
    means that the double indexed net $\{x_\al y_\be\}_{(\al,\be) \in \Sigma \times \Sigma'}$
converges to $c \in \ca[\tau]$.
    Then, we set
    \[
    a\cdot b := c = \tau -\lim\limits_{\al, \be}x_\al y_\be,
    \]
where the multiplication $a\cdot b$ is well defined, in the sense
that it is independent of the choice of the nets $\{ x_\al \}_{\al
\in \Sigma} ,\{ y_\be\}_{\be \in \Sigma'} $, as follows from the
proof of Lemma 6.2 in \cite{bmit} applying arguments of the proof of
Proposition 4.7. In the sequel, we simply denote $a\cdot b$ by
$ab$. In analogy to Definition 6.3 of \cite{bmit}, if $x,y \in
\widetilde{\ca_0} [\norm_0 ]$ and $a,b \in
\ca [\tau]_{q+}$ with $a$ left multiplier of $b$, we may define the product of the
elements $ax$ and $by$ as follows:
\[
(ax)(by):=(ab)xy.
\]

The spectrum of an element $a\in \ca[\tau
]_{q+}$, denoted by $\sigma_{\widetilde{\ca_0} [\norm_0 ]}(a)$, is
defined as in Definition 6.4 of \cite{bmit}.

So using Theorem 5.1, it is shown (cf., for instance, Lemma 6.5 in
\cite{bmit}) that for every $a\in \ca[\tau ]_{q+}$, one has that
$\sigma_{\widetilde{\ca_0}[\norm_0 ]}(a)$ \emph{is a locally compact
subset of $\C^*$ and} \linebreak[4] $\sigma_{\widetilde{\ca_0}
[\norm_0 ]}(a) \subset \R_+ \cup \{\infty \}$.

According to the above, and taking into account the comments after Lemma 6.5
in \cite{bmit} with $\widetilde{\ca_0}[\|\cdot\|_0]$ in the place of $\ca_0$,
the next Theorem 5.2
provides a generalization of \cite[Theorem 6.6]{bmit} in the setting
of commutative locally convex quasi $\cs$-normed algebras. In particular, Theorem
5.2 supplies us with a functional calculus for the quasi-positive
elements of the commutative locally convex quasi $\cs$-normed
algebra $\ca [\tau]$.

\begin{theorem}\label{th5.2}
Let $a\in \ca [\tau]_{q+}$. Let $a^n$ be well-defined
for some $n\in \N$. Then there is a unique $*$-isomorphism $f\to
f(a)$ from $\bigcup ^{n} _{k=1} \cc_k
(\sigma_{\widetilde{\ca_0} [\norm_0 ]}(a))$ into $\ca[\tau] \cdot
\widetilde{\ca_0} [\norm_0 ]$ such that:
\begin{itemize}
    \item[(i)] If $u_0 (\la )=1$, with $u_0 \in \bigcup ^{n} _{k=1} \cc_k
(\sigma_{\widetilde{\ca_0} [\norm_0 ]}(a))$ and $\la \in
\sigma_{\widetilde{\ca_0} [\norm_0 }(a)$, then $u_0 (a)=\mathit{1}$.
    \item[(ii)] If $u_1 (\la) =\la$ with $u_1 \in \bigcup ^{n} _{k=1} \cc_k
(\sigma_{\widetilde{\ca_0} [\norm_0 ]}(a))$ and $\la \in
\sigma_{\widetilde{\ca_0} [\norm_0]}(a)$, then $u_1 (a)=a$.
    \item[(iii)] $(\la _1 f_1 +f_2 )(a)=\la_1f_1(a) + f_2(a)$, $\forall$ $f_1, f_2 \in \cc_k
(\sigma_{\widetilde{\ca_0} [\norm_0 ]}(a))$ and $\la_1 \in \C$;
        $(f_1 f_2 )(a)=f_1 (a)f_2 (a)$, $\forall$ $f_j \in
    \cc_{k_j}(\sigma_{\widetilde{\ca_0} [\norm_0 ]}(a))$, $j=1,2$, with $k_1
+k_2 \leqq n$.
    \item[(iv)] Denoting with $\cc_b$ the set of the bounded and continuous functions,
    the map \linebreak[4] $f\to f(a)$ restricted to $\cc_b(\sigma_{\widetilde{\ca_0} [\norm_0 ]}(a))$
    is an  isometric $*$-isomorphism of the $\cs$-algebra
    $\cc_b(\sigma_{\widetilde{\ca_0} [\norm_0 ]}(a))$ on the closed
    $*$-subalgebra of $\widetilde{\ca_0} [\norm_0 ]$ generated by $\mathit{1}$
    and \linebreak[4] $(\mathit{1}+a)^{-1}$.
\end{itemize}
\end{theorem}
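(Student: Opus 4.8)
The plan is to obtain Theorem 5.2 by transporting the Gel'fand-type representation $\vF$ of Theorem 5.1 onto the function algebra $\ff(W)$ and then reading off a functional calculus there, exactly as in \cite[Theorem 6.6]{bmit} but with $\widetilde{\ca_0}[\norm_0]$ replacing $\ca_0$. The key observation that makes this legitimate is Lemma 4.9(2): the linear span $\ca[\tau]\cdot\widetilde{\ca_0}[\norm_0]$ is a \emph{commutative locally convex quasi $\cs$-algebra over the unital commutative $\cs$-algebra} $\widetilde{\ca_0}[\norm_0]$. Hence every structural result of \cite[Sections 5, 6]{bmit}, proved there for a locally convex quasi $\cs$-algebra over its $\cs$-algebra $\ca_0$, applies verbatim with $\widetilde{\ca_0}[\norm_0]$ in the place of $\ca_0$. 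This reduction is the conceptual heart of the argument and is what the comments preceding the statement already set up.

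First I would fix $a\in\ca[\tau]_{q+}$ with $a^n$ well-defined and apply $\vF$ to identify $a$ with a positive $\C^*$-valued continuous function $\vF(a)=:\widehat a\in\ff(W)_+$, whose range is $\sigma_{\widetilde{\ca_0}[\norm_0]}(a)\subseteq\R_+\cup\{\infty\}$ by the spectral description recorded just before the theorem. For $f\in\bigcup_{k=1}^{n}\cc_k(\sigma_{\widetilde{\ca_0}[\norm_0]}(a))$ one sets $f(a):=\vF^{-1}(f\circ\widehat a)$; the growth class $\cc_k$ is precisely what guarantees $f\circ\widehat a$ lies in the range of $\vF$, i.e. has the form $gg_0+h_0$ with $g\in\ff(W)_+$ and $g_0,h_0\in\cc(W)$, so that the preimage lands in $\ca[\tau]\cdot\widetilde{\ca_0}[\norm_0]$. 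Properties (i) and (ii) are then immediate from $\vF(\mathit 1)=1$ and $\vF(a)=\widehat a$ together with the identity/coordinate functions $u_0,u_1$. The algebraic relations in (iii) follow because $\vF$ intertwines the extended product $a\cdot b$ (defined just above via double-indexed limits) with pointwise multiplication in $\ff(W)$, and the constraint $k_1+k_2\leqq n$ is exactly the condition ensuring the product $f_1(a)f_2(a)$ stays inside the domain where $a^{k_1+k_2}$ is well-defined.

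For part (iv) I would restrict to $\cc_b(\sigma_{\widetilde{\ca_0}[\norm_0]}(a))$: a bounded continuous $f$ gives $f\circ\widehat a\in\cc(W)$, so by Theorem 5.1(ii) the element $f(a)=\vF^{-1}(f\circ\widehat a)$ lies in $\widetilde{\ca_0}[\norm_0]$ and $\|f(a)\|_0=\|f\circ\widehat a\|_\infty=\|f\|_\infty$ on the range, giving the isometric $*$-isomorphism onto the $\cs$-subalgebra generated by $\mathit 1$ and $(\mathit 1+a)^{-1}$; here one uses Proposition 4.7(1), which places $(\mathit 1+a)^{-1}$ in $\cu(\widetilde{\ca_0}[\norm_0])_+$ and shows $\vF((\mathit 1+a)^{-1})=(1+\widehat a)^{-1}$ separates points of $\sigma_{\widetilde{\ca_0}[\norm_0]}(a)$, so Stone--Weierstrass identifies the generated $\cs$-algebra with $\cc_b$ of the spectrum. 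Uniqueness of $f\mapsto f(a)$ follows from density: any $*$-homomorphism satisfying (i)--(iii) is forced on polynomials in $u_1$ and in $(1+u_1)^{-1}$, which are $\|\cdot\|_\infty$-dense, and then continuity from (iv) pins it down.

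The main obstacle I expect is \textbf{bookkeeping of the multiplier/domain conditions} rather than any deep difficulty. Because $\ca[\tau]$ is a quasi $\cs$-normed algebra, products of two quasi-positive elements need not exist; the whole statement is guarded by ``$a^n$ well-defined'' and by $k_1+k_2\leqq n$, and the delicate point is to verify that every $f(a)$ constructed actually lands in $\ca[\tau]\cdot\widetilde{\ca_0}[\norm_0]$ and that the product formula $(f_1f_2)(a)=f_1(a)f_2(a)$ respects left-multiplier existence (Definition following Theorem 5.1) at each stage. Making these existence claims rigorous amounts to re-running the double-limit convergence argument of Proposition 4.7 inside the representation $\vF$, which is why the authors cite ``arguments of the proof of Proposition 4.7''; apart from this careful tracking, the theorem is a direct transcription of \cite[Theorem 6.6]{bmit} through Lemma 4.9(2).
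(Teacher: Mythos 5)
Your proposal is correct and follows essentially the same route as the paper: the paper's own ``proof'' consists precisely of invoking Lemma 4.9(2) to view $\ca[\tau]\cdot\widetilde{\ca_0}[\norm_0]$ as a commutative locally convex quasi $\cs$-algebra over $\widetilde{\ca_0}[\norm_0]$ and then applying Theorem 6.6 of the cited earlier work through the Gel'fand-type map $\vF$ of Theorem 5.1, with the multiplier bookkeeping handled by the arguments of Proposition 4.7, exactly as you describe.
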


Applying Theorem 5.2 and Proposition 4.7 in the proof of
\cite[Corollary 6.7]{bmit} we get the following

\begin{corollary}\label{co5.3}
Let $a\in \ca [\tau ]_{q+}$ and $n\in \N$. Then, there exists unique
$b$ in $\ca [\tau]_{q+}\cdot \widetilde{\ca_0} [\norm_0 ]$ such that
$a=b^n$. The unique element $b$ is called quasi $n$th-root of $a$
and we write $b=a^{\frac{1}{n}}$.
\end{corollary}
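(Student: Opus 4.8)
The plan is to obtain $b$ as the image of the function $f(\lambda)=\lambda^{1/n}$ under the functional calculus of Theorem 5.2, mimicking the proof of \cite[Corollary 6.7]{bmit}. First I would record that, since $a\in\ca[\tau]_{q+}$, its spectrum satisfies $\sigma_{\widetilde{\ca_0}[\norm_0]}(a)\subseteq\R_+\cup\{\infty\}$, so the function $f$ with $f(\lambda)=\lambda^{1/n}$ and $f(\infty)=\infty$ is continuous and nonnegative on $\sigma_{\widetilde{\ca_0}[\norm_0]}(a)$. Because $\lambda^{1/n}$ grows no faster than $\lambda$ at infinity, $f$ lies in $\cc_1(\sigma_{\widetilde{\ca_0}[\norm_0]}(a))\subseteq\bigcup_{k=1}^{n}\cc_k(\sigma_{\widetilde{\ca_0}[\norm_0]}(a))$, so that $b:=f(a)=a^{1/n}$ is a well-defined element of $\ca[\tau]\cdot\widetilde{\ca_0}[\norm_0]$.

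For the defining property $a=b^n$, I would use the multiplicativity (iii) of the $*$-isomorphism $f\mapsto f(a)$. Writing $f^n$ for the pointwise $n$th power, one has $f^n=u_1$, the identity function $\lambda\mapsto\lambda$. Since each factor $f\in\cc_1$ and the partial products $f^k$ lie in $\cc_k$ with growth orders adding up to exactly $n$, property (iii) applies at every step and yields $b^n=f(a)^n=f^n(a)=u_1(a)=a$ by (ii); this simultaneously shows that $b^n$ is well-defined in $\ca[\tau]\cdot\widetilde{\ca_0}[\norm_0]$. Quasi-positivity of $b$, i.e. $b\in\ca[\tau]_{q+}\cdot\widetilde{\ca_0}[\norm_0]$, would follow from $f\geqq 0$ together with Theorem 5.1, whose map $\vF$ carries $\ca[\tau]_{q+}$ onto the positive wedge $\ff(W)_+$, so that a nonnegative function is matched with a quasi-positive element (here Proposition 4.7, through the positivity machinery underlying the functional calculus, furnishes the required bookkeeping).

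For uniqueness, suppose $b'\in\ca[\tau]_{q+}\cdot\widetilde{\ca_0}[\norm_0]$ also satisfies $(b')^n=a$. Applying the Gel'fand-type transform $\vF$ of Theorem 5.1 and its multiplicative properties, I would obtain $\vF(b')^n=\vF(a)$ in $\ff(W)$. Since $\vF(b')\in\ff(W)_+$ is a nonnegative $\C^*$-valued function, it is forced to be the pointwise positive $n$th root of $\vF(a)$, hence $\vF(b')=\vF(a)^{1/n}=\vF(b)$; injectivity of $\vF$ then gives $b'=b$. The properness of the quasi-positive cone supplied by Proposition 4.7(3) is what ultimately rules out competing roots.

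The step I expect to be the main obstacle is the bookkeeping that makes $b^n=f^n(a)$ legitimate: property (iii) of Theorem 5.2 is available only as long as the growth orders $k_1+\cdots+k_n\leqq n$, and here they add up to exactly $n$, so one must apply (iii) in the correct inductive order (pairing $f^k$ with $f$ subject to $k+1\leqq n$ at each stage) while simultaneously checking that each partial product $f^k(a)$ is a genuine, well-defined element of $\ca[\tau]_{q+}\cdot\widetilde{\ca_0}[\norm_0]$, i.e. that the successive multipliers actually exist, rather than a merely formal power.
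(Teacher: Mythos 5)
Your overall route is the one the paper intends: its entire justification for Corollary 5.3 is the sentence preceding it, namely that the result follows by applying Theorem 5.2 and Proposition 4.7 within the proof of the corresponding corollary of \cite{bmit}, and you likewise construct $b$ as $f(a)$ for $f(\la)=\la^{1/n}$ and derive uniqueness from the Gel'fand-type map $\vF$ of Theorem 5.1. The uniqueness half is fine. The existence half, however, has a genuine gap at exactly the spot you single out as the main obstacle. Theorem 5.2 is stated under the hypothesis that $a^n$ is well-defined, and its multiplicativity clause (iii) is available only for $f_j\in\cc_{k_j}$ with $k_1+k_2\leqq n$ \emph{for that same} $n$. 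Corollary 5.3 imposes no such hypothesis on $a$: for a general $a\in\ca[\tau]_{q+}$ the only power known to exist is $a^1=a$ (and $a^2$ is not automatically well-defined, since the approximating net $\{x_\al\}$ in $(\ca_0)_+$ carries no $\norm_0$-bound with which to run $\rm (T_3)$ on the double net $\{x_\al x_\be\}$), so Theorem 5.2 is only guaranteed at level $n=1$, where clause (iii) is vacuous because any two factors already have $k_1+k_2\geqq 2$. Hence the chain $b^n=f(a)^n=f^n(a)=u_1(a)=a$ cannot be licensed by clause (iii) as you propose: whether you count the partial products $f^k=\la^{k/n}$ as elements of $\cc_k$ (total order $n$) or of $\cc_1$ (total order $2$), you are invoking a level of the calculus that the hypotheses of the corollary do not supply.

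What closes the gap, and what the paper's explicit mention of Proposition 4.7 is for, is to verify that $b^n$ is well-defined and equal to $a$ directly from the net definition of the extended product given in Section 5, exploiting the commutative setting: one passes to approximants of $a$ furnished by Proposition 4.7(2) (the elements $a_\ve=a\cdot(\mathit{1}+\ve a)^{-1}$, which lie in $\widetilde{\ca_0}[\norm_0]$) or to the net $\{x_\al\}\subset(\ca_0)_+$ itself, forms their $\cs$-algebraic $n$th roots, and uses $\rm (T_3)$ together with the transform $\vF$ of Theorem 5.1 to show that these roots and all their intermediate powers are $\tau$-convergent, so that $b$ and each $b^k$, $k\leqq n$, are realized as honest limits with $b^n=\tau\text{-}\lim_\al x_\al=a$. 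Your appeal to Proposition 4.7 as supplying ``the required bookkeeping'' points in this direction but does not carry it out; as written, the verification of $b^n=a$ does not go through.
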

%%%%%%%%%%%%%%%%%%%%%%%%%%%%%%%%%%%%%%%%%%%%%%%%%%%%%%%%%%%%%%%%%%%%%%%%

\setcounter{section} {5}
\section{Structure of noncommutative locally convex quasi
$\cs$-normed algebras}

\noindent Using the notation of \cite[Section 4]{bmit} (see also
\cite{ait3}), let $\ch$ be a Hilbert space, $\fd$ a dense subspace
of $\ch$ and $\cm_0 [\norm_0]$ a unital $C^*$-normed algebra on
$\ch$, such that
\[
\cm_0 \fd  \subset \fd , \ \text{ but } \ \widetilde{\cm}_0 [\norm_0
] \fd \not\subset \fd.
\]

Then, the restriction $\cm_0\upharpoonright \fd $ of $\cm_0$ to $\fd
$ is an $O^*$-algebra on $\fd $, so that an element $X$ of $\cm_0$
may be regarded as an element $X \upharpoonright \fd $ of $\cm_0
\lceil \fd $. Moreover, let
\[
\cm_0 \subset \cm \subset \cl^\dag(\fd , \ch),
\]
where $\cm$ is an $O^*$-vector space on $\fd $, that is, a
$*$-invariant subspace of $\cl^\dag(\fd , \ch)$. Denote by
$\cb(\cm)$ the set of all bounded subsets of $\fd [t_\cm]$ ($t_\cm$
is the graph topology on $\cm$; see \cite[p.9]{inoue}) and by
$\cb_f(\fd)$ the set of all finite subsets of $\fd  $. Then
$\cb_f(\fd ) \subset \cb(\cm)$ and both of them are admissible  in
the sense of \cite[p. 522]{bmit}.

We recall the topologies $ \tau_{s^*}, \ \tau^{u}_* (\cb), \
\tau^{u}_* (\cm)$ defined in \cite[pp. 522-523]{bmit}. More
precisely, for an arbitrary admissible subset $\cb$ of $\cb(\cm)$,
and any $\fm \in \cb$ consider the following seminorm:
\begin{align*}
p^\fm_\dag (X) := \sup_{\xi \in \fm} \{ \|X \xi \|+ \| X^\dag \xi \|
\}, \quad X \in \cm.
\end{align*}
We call the corresponding locally convex topology on $\cm$ induced
by the preceding family of seminorms, \textit{strongly$^*$
$\cb$-uniform topology } and denote it by $\tau^u_*(\cb)$. In
particular, the strongly$^*$ $\cb(\cm)$-uniform topology will be
simply called \textit{strongly$^*$ $\cm$-uniform topology} and will
be denoted by $\tau^u_*(\cm)$. In Schm\"{u}dgen's book
\cite{schmudgen}, this topology is called \emph{bounded topology}.
The strongly$^*$ $\cb_f(\fd)$-uniform topology is called
\textit{strong$^*$-topology on $\cm$}, denoted by $\tau_{s^*}$. All
three topologies are related in the following way:
\begin{align*}
      \tau_{s^*} \preceq \tau^{u}_* (\cb)\preceq \tau^{u}_* (\cm).
\end{align*}
Then, one gets that
\begin{align}
    \cm_0 [\norm_0 ]\subset \widetilde{\cm}_0 [\norm_0 ]\subset
    \widetilde{\cm}_0 [\tau^{u}_* ]\subset \widetilde{\cm}_0 [\tau_{s^*}]
    \subset \cl^\dagger (\fd, \ch).
\end{align}
In this regard, we have now the following

\begin{proposition}\label{pro6.1}
Let $\cm_0 [\norm_0 ]$, $\cm$ be as before. Let $\cb$ be any
admissible subset of $\cb (\cm)$. Then $\widetilde{\cm}_0[\tau^{u}_*
(\cb )]$ is a locally convex quasi $\cs$-normed algebra over
$\cm_0$, which is contained in $\cl^\dag (\fd ,\ch)$. In particular,
$\widetilde{\cm}_0[\tau_{s^*}]$ is a locally convex quasi
$\cs$-normed algebra over $\cm_0$. Furthermore, if $A\in
\widetilde{\cm}_0[\tau^u _* (\cb )]$ and $Y\in
\widetilde{\cm}_0[\norm_0 ]$ commute strongly, then $A\Box Y$ is
well-defined and
\[
    A\Box Y=A\cdot Y =Y\cdot A =Y\Box A.
\]
\end{proposition}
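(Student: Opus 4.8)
The plan is to verify that the locally convex topology $\tau := \tau^u_*(\cb)$ fulfils conditions $\rm (T_1)$--$\rm (T_5)$ relative to the $\cs$-norm $\norm_0$ on $\cm_0$; granting this, Definition \ref{de4.8} together with Lemma \ref{le4.9}(1) identifies $\widetilde{\cm}_0[\tau]$ at once as a (normal) locally convex quasi $\cs$-normed algebra over $\cm_0$, while its containment in $\cl^\dag(\fd,\ch)$ is the last inclusion of the chain (6.1). The particular case $\cb=\cb_f(\fd)$ then yields the assertion for $\tau_{s^*}$. Throughout I use that $\cm_0\subset B(\ch)$, so $(XY)^\x=Y^\x X^\x$ and $\|X^\x\|_0=\|X\|_0$, and that $\tau\succeq\tau_{s^*}$, so $\tau$-convergence forces strong convergence on the dense subspace $\fd$.

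Conditions $\rm (T_1),(T_2),(T_3),(T_5)$ are the routine part. Each seminorm is $*$-invariant, since $p^\fm_\dag(X^\dag)=\sup_{\xi\in\fm}\{\|X^\dag\xi\|+\|X^{\dag\dag}\xi\|\}=p^\fm_\dag(X)$ (using $X^{\dag\dag}=X$ on $\fd$ for bounded $X$), so the involution is $\tau$-continuous; separate continuity of the product follows, for fixed $Y\in\cm_0$, from $p^\fm_\dag(XY)\leqq p^{Y\fm}_\dag(X)+\|Y\|_0\,p^\fm_\dag(X)$ and $p^\fm_\dag(YX)\leqq\|Y\|_0\,p^\fm_\dag(X)+p^{Y^\x\fm}_\dag(X)$, admissibility of $\cb$ guaranteeing $Y\fm,\,Y^\x\fm\in\cb$; this is $\rm (T_1)$. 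For $\rm (T_2)$ the bound $p^\fm_\dag(X)\leqq 2\|X\|_0\sup_{\xi\in\fm}\|\xi\|$ gives $\tau\preceq\norm_0$, and compatibility holds because a $\norm_0$-Cauchy net that $\tau$-tends to $0$ converges strongly to $0$ on $\fd$, so its $\norm_0$-limit annihilates the dense set $\fd$ and is $0$. For $\rm (T_3)$, when $XY=YX$ one has $(XY)^\x=X^\x Y^\x$, whence $\|XY\xi\|\leqq\|X\|_0\|Y\xi\|$ and $\|(XY)^\dag\xi\|\leqq\|X\|_0\|Y^\dag\xi\|$, giving $p^\fm_\dag(XY)\leqq\|X\|_0\,p^\fm_\dag(Y)$; commutativity is indispensable here, for otherwise the second term only yields $\|Y\|_0\|X^\dag\xi\|$, which is precisely why the stronger condition $\rm (R)$ fails and $\rm (T_3)$ must be used. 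Finally, $\rm (T_5)$ holds: if $a\in\widetilde{\cm}_0[\tau]_{q+}\cap\widetilde{\cm}_0[\norm_0]$, write $a=\tau\text{-}\lim x_\al$ with $x_\al\in(\cm_0)_+$; strong convergence on $\fd$ gives $\langle a\xi,\xi\rangle=\lim_\al\langle x_\al\xi,\xi\rangle\geqq0$ for all $\xi\in\fd$, so by density $a\geqq0$ as an operator, i.e.\ $a\in\widetilde{\cm}_0[\norm_0]_+$.

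The main obstacle is $\rm (T_4)$, the $\tau$-completeness (equivalently $\tau$-closedness) of the positive unit ball $\cu(\widetilde{\cm}_0[\norm_0])_+$. Let $\{Y_\al\}$ be a $\tau$-Cauchy net of positive contractions of the $\cs$-algebra $\widetilde{\cm}_0[\norm_0]$. The easy half is that its $\tau$-limit extends to a bounded positive contraction $T$: from $\|Y_\al\|_0\leqq1$, strongly$^*$ convergence on $\fd$, and density of $\fd$ one gets $\langle T\xi,\xi\rangle\geqq0$ and $\|T\xi\|\leqq\|\xi\|$. The genuinely delicate half is to show that $T$ lies back in $\widetilde{\cm}_0[\norm_0]$ -- delicate precisely because $\fd$ need not be invariant under $\widetilde{\cm}_0[\norm_0]$ and because the positive unit ball of an abstract $\cs$-algebra is in general not closed under strongly$^*$ limits. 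I would establish it by transplanting the closedness argument of \cite{bmit} (proved there for $\cs$-algebras acting on $\fd$) to the present concrete operator framework, using that the $Y_\al$ are uniformly norm-bounded and that the $\dag$-structure built into $p^\fm_\dag$ controls $Y_\al$ and $Y_\al^\x$ simultaneously, so as to place $T$ into $\widetilde{\cm}_0[\norm_0]$.

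For the last assertion, let $A\in\widetilde{\cm}_0[\tau]$ and $Y\in\widetilde{\cm}_0[\norm_0]$ commute strongly, and choose (Definition \ref{de4.1}) a net $\{X_\al\}$ in $\widetilde{\cm}_0[\norm_0]$ with $X_\al\underset{\tau}{\lto}A$ and $X_\al Y=YX_\al$. By Lemma \ref{le4.3} the quasi-products $A\cdot Y$ and $Y\cdot A$ are defined and equal. Evaluating on $\xi\in\fd$ and using commutativity, $(X_\al Y)\xi=Y(X_\al\xi)\lto Y(A\xi)$, because $Y$ is bounded and $X_\al\xi\lto A\xi$ in $\ch$; hence $A\cdot Y=Y\cdot A$ is the element of $\cl^\dag(\fd,\ch)$ acting by $\xi\mapsto YA\xi$. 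Since $Y$ is bounded, $(Y^\dag)^\x=Y$, so once $A\in R^{\rm w}(Y)$ is known, $Y\Box A=(Y^\dag)^\x A$ acts by $\xi\mapsto YA\xi$ as well. The remaining weak-multipliability conditions are the domain inclusions $Y\fd\subset D((A^\dag)^\x)$ and $Y^\x\fd\subset D(A^\x)$; both follow from the adjoint identity $\langle Y\xi,A^\dag\eta\rangle=\lim_\al\langle YX_\al\xi,\eta\rangle=\langle YA\xi,\eta\rangle$ (for $\xi,\eta\in\fd$) together with $A\cdot Y\in\cl^\dag(\fd,\ch)$, and they give $(A^\dag)^\x(Y\xi)=YA\xi$, so that $Y\in R^{\rm w}(A)$ and $A\Box Y=(A^\dag)^\x Y\up\fd$ again acts by $\xi\mapsto YA\xi$. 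Therefore $A\Box Y=A\cdot Y=Y\cdot A=Y\Box A$.
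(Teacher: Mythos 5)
Your treatment of the final chain of identities $A\Box Y=A\cdot Y=Y\cdot A=Y\Box A$ is essentially the paper's own argument: choose a net $\{X_\al\}$ commuting with $Y$, use the adjoint identity $(A^\dag\xi\,|\,Y\eta)=\lim_\al(\xi\,|\,X_\al Y\eta)=(\xi\,|\,YA\eta)$ to obtain well-definedness of the weak products, and evaluate $A\cdot Y$ pointwise on $\fd$ to see that all four expressions act by $\xi\mapsto YA\xi$. Your verifications of $\rm (T_1)$, $\rm (T_2)$, $\rm (T_3)$ and $\rm (T_5)$ are also correct, and in fact more detailed than the paper, which disposes of the entire first assertion with ``it is easily checked.''

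The gap is $\rm (T_4)$, and you have located it without closing it. Deferring to ``transplanting the closedness argument of \cite{bmit}'' is not a proof, and there is a concrete reason to doubt that the suggested route goes through at this level of generality: a $\tau_{s^*}$-Cauchy net $\{Y_\al\}$ in $\cu(\widetilde{\cm}_0[\norm_0])_+$ is uniformly $\norm_0$-bounded, hence converges strongly$^*$ on all of $\ch$ to a positive contraction $T$ lying in the von Neumann algebra $\cm_0''$; by Kaplansky density every positive contraction of $\cm_0''$ arises as such a limit, so unless $\widetilde{\cm}_0[\norm_0]$ already exhausts $\cu(\cm_0'')_+$ the limit $T$ need not lie in $\cu(\widetilde{\cm}_0[\norm_0])_+$. (Take, for instance, $\cm_0$ the unitization of the finitely supported matrices on $\ell^2$, $\fd$ the finite sequences --- so that $\cm_0\fd\subset\fd$ while $\widetilde{\cm}_0[\norm_0]=\ck(\ell^2)+\C\1$ does not preserve $\fd$ --- and the diagonal truncations of a $0$--$1$ diagonal operator.) So the ``genuinely delicate half'' you isolate is not merely delicate: it is precisely where the property can fail, and uniform boundedness together with the control of $Y_\al$ and $Y_\al^\x$ by the seminorms $p^\fm_\dag$ does not by itself force the limit back into $\widetilde{\cm}_0[\norm_0]$. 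Since Definition 4.8 presupposes $\rm (T_1)$--$\rm (T_5)$, your proof of the first assertion of the proposition is incomplete at exactly this step --- a step which, to be fair, the paper's own proof also leaves unargued.
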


\begin{proof}
It is easily checked that $\widetilde{\cm}_0[\tau^u _* (\cb )]$ and
$\widetilde{\cm}_0[\tau_{s^*}]$  are locally convex quasi
$\cs$-normed algebras over $\cm_0$. Suppose now that $A\in
\widetilde{\cm}_0[\tau^u _*(\cb) ]$ and $Y\in
\widetilde{\cm}_0[\norm_0 ]$ commute strongly. Then, there is a net
$\{ X_\al \}_{\al \in \Sigma}$ in $\cm_0$ such that $X_\al
Y=YX_\al$, for all $\al \in \Sigma$ and $A= \tau^u _*
(\cb)-\lim\limits_\al X_\al $. Since
\[
    (A^\dag \xi | Y\eta )=\lim\limits_\al (X^\dag _\al \xi | Y\eta ) =
    \lim\limits_\al (\xi | X_\al Y\eta ) = \lim\limits_\al (\xi |
    YX_\al \eta) =(\xi |YA\eta )
\]
for all $\xi ,\eta \in\fd$, it follows that $A\Box Y$ is
well-defined and $A\Box Y=YA $. Furthermore, since
\[
    A\cdot Y=\tau^u _* (\cb)-\lim\limits_\al X_\al Y= \tau_{s^*}
    -\lim\limits_\al X_\al Y,
\]
we have
\[
    (A\cdot Y)\xi =\lim\limits_\al X_\al Y\xi = \lim\limits_\al
    YX_\al \xi = YA\xi =(A\Box Y)\xi
\]
for each $\xi \in \fd$. Hence, $A\cdot Y=A\Box Y$.
\end{proof}

\begin{proposition}\label{pro6.1bis}
 $\cl^{\dag} (\fd, \ch)[\tau_{s^*}]$ is a
locally convex quasi $\cs$-normed algebra over $\cl^{\dag} (\fd )_b \equiv \{ X\in \cl^{\dag} (\fd ) :
\overline{X}\in \cb (\ch)\}$.
\end{proposition}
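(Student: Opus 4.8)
The plan is to realise $\cl^\dag(\fd)_b$ as the base $\cs$-normed algebra to which Proposition~\ref{pro6.1} applies (taking $\cm_0:=\cl^\dag(\fd)_b$ and, say, $\cm:=\cl^\dag(\fd,\ch)$), and then to show that the resulting $\tau_{s^*}$-completion is \emph{all} of $\cl^\dag(\fd,\ch)$; the only genuinely new point, beyond Proposition~\ref{pro6.1}, is this identification. Recall that $\cl^\dag(\fd)$ is the maximal $O^*$-algebra on $\fd$, so every $X\in\cl^\dag(\fd)_b$ maps $\fd$ into $\fd$, which is exactly the hypothesis $\cm_0\fd\subset\fd$ required in Proposition~\ref{pro6.1}.

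First I would verify that $\cl^\dag(\fd)_b$, normed by $\|X\|_0:=\|\ol X\|$, is a unital $\cs$-normed algebra. For $X\in\cl^\dag(\fd)_b$ the operator $\ol{X^\dag}=(\ol X)^{*}$ is bounded, so the set is $*$-invariant; for $X,Y\in\cl^\dag(\fd)_b$ the operators $XY$ and $\ol X\,\ol Y$ agree on the dense subspace $\fd$, whence $\ol{XY}=\ol X\,\ol Y$ is bounded, giving submultiplicativity, and the choice $Y=X^\dag$ yields $\|X^\dag X\|_0=\|X\|_0^{2}$. The unit is $\1=I\upharpoonright\fd$. Consequently Proposition~\ref{pro6.1} applies and tells us that $\widetilde{\cl^\dag(\fd)_b}[\tau_{s^*}]$ is a locally convex quasi $\cs$-normed algebra over $\cl^\dag(\fd)_b$ sitting inside $\cl^\dag(\fd,\ch)$.

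It then remains to prove that $\cl^\dag(\fd,\ch)[\tau_{s^*}]$ is complete and that $\cl^\dag(\fd)_b$ is $\tau_{s^*}$-dense in it. Completeness is routine: for a $\tau_{s^*}$-Cauchy net $\{X_\al\}$ the nets $\{X_\al\xi\}$ and $\{X_\al^\dag\xi\}$ converge in $\ch$ for every $\xi\in\fd$; putting $X\xi:=\lim_\al X_\al\xi$ and $Y\xi:=\lim_\al X_\al^\dag\xi$ one gets $(X\xi|\eta)=(\xi|Y\eta)$ for all $\xi,\eta\in\fd$, so $\fd\subset D(X^{*})$, $X^{*}\upharpoonright\fd=Y$, and hence $X\in\cl^\dag(\fd,\ch)$ with $X^\dag=Y$ and $X_\al\to X$ in $\tau_{s^*}$.

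The hard part will be density. Given $X\in\cl^\dag(\fd,\ch)$, a finite (without loss linearly independent) set $\{\xi_1,\dots,\xi_m\}\subset\fd$ and $\ve>0$, I would first pick $\eta_i,\zeta_i\in\fd$ with $\|\eta_i-X\xi_i\|$ and $\|\zeta_i-X^\dag\xi_i\|$ small, and then construct a single finite-rank operator $F$ whose range and co-range lie in $U:=\mathrm{span}\{\xi_i,\eta_i,\zeta_i\}\subset\fd$, so that automatically $F\in\cl^\dag(\fd)_b$, with $F\xi_i\approx X\xi_i$ and $F^\dag\xi_i\approx X^\dag\xi_i$ simultaneously. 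Existence of $F$ is a finite-dimensional interpolation problem on $U$ — prescribe $F\xi_i=\eta_i$ and $(Fu|\xi_i)=(u|\zeta_i)$ for $u\in U$ — whose solvability rests on the near-consistency relation $(\eta_j|\xi_i)\approx(\xi_j|\zeta_i)$, itself inherited from $(X\xi_j|\xi_i)=(\xi_j|X^\dag\xi_i)$. Letting $\{\xi_1,\dots,\xi_m\}$ and $\ve$ vary produces a net in $\cl^\dag(\fd)_b$ converging to $X$ in $\tau_{s^*}$, establishing density. Together with completeness this gives $\widetilde{\cl^\dag(\fd)_b}[\tau_{s^*}]=\cl^\dag(\fd,\ch)$, and the assertion follows from Definition~\ref{de4.8}.
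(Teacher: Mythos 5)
Your proof is correct and follows essentially the same route as the paper: realise $\cl^{\dag}(\fd)_b$ as a unital $\cs$-normed algebra that is $\tau_{s^*}$-dense in the ($\tau_{s^*}$-complete) space $\cl^{\dag}(\fd,\ch)$, and then conclude via Proposition \ref{pro6.1} and Definition \ref{de4.8}. The only difference is that the paper simply cites \cite[Section 2.5]{ait3} for the $\cs$-normed algebra structure and the density, whereas you supply direct arguments for these facts (including the completeness of $\cl^{\dag}(\fd,\ch)[\tau_{s^*}]$ and the finite-rank interpolation step), and these arguments are sound.
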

\begin{proof}  Indeed, as shown in \cite[Section 2.5]{ait3},
$\cl^{\dag} (\fd )_b$, is a $\cs$-normed algebra which is $\tau_{s^*}$ dense in $\cl^{\dag} (\fd,
\ch)$. Hence,  $\cl^{\dag} (\fd,
\ch)$ is a locally convex quasi $\cs$-normed algebra over
$\cl^{\dag} (\fd)_b$.
\end{proof}

\begin{remark}\label{re6.3}
The following questions arise naturally:
\begin{itemize}
    \item[(1)] What is exactly the $\cs$-algebra
 $\cl^\dag (\fd)^\sim _b [\norm_0 ]$?

    Under what conditions  may one have the equality $\cl^\dag (\fd )^\sim _b
    [\norm_0 ]=\cb (\ch )$?
    \item[(2)] Is $\cl^\dag (\fd ,\ch)$ a locally convex quasi $\cs$-algebra
    under the strong$^*$ uniform topology $\tau^u _*$?

    More precisely, does the equality $\cl^\dag (\fd )^\sim _b [\tau^u _* ]
    =\cl^\dag (\fd ,\ch )$ hold?
    \end{itemize}
 We expect the answer to these questions to depend on the properties of the topology
 $t_\dag\equiv t_{\cl^\dag(\fd , \ch)}$ given on $\fd$ and we conjecture positive answers in the case where
    $\fd \equiv\fd^\infty (T)$, with  $T$ a positive self-adjoint operator in a Hilbert space
$\ch$,  and $\norm_0 $ the operator norm
in $\cb(\ch)$. We leave these questions open.
\end{remark}

$\bullet$ In the rest of this Section we consider conditions under
which a locally convex quasi $\cs$-normed algebra is continuously
embedded in a locally convex quasi $\cs$-normed algebra of
operators.

So let $\ca [\tau ]$ be a locally convex quasi $\cs$-normed algebra
over $\ca_0$ and $\fd$ a dense subspace in a Hilbert space $\ch$.
Let $\pi :\ca {\lto} \cl^\dag (\fd,\ch)$ be a $*$-representation. Then we have the
following:

\begin{lemma}\label{le6.4}
Let $\ca [\tau ]$ be a locally convex quasi $\cs$-normed algebra
over $\ca_0$ and $\pi :\ca {\lto} \cl^\dag  (\fd,\ch)$ a $(\tau
,\tau^u _* (\cb))$-continuous $*$-representation of $\ca$. Then,

\emph{(1)} $\pi$ is a  $*$-representation of the $\cs$-algebra
$\widetilde{\ca_0}[ \norm_0 ]$;

    \emph{(2)} $\pi (\ca)[\tau^u _* (\cb)]$ resp. $\pi
(\ca)[\tau_{s^*}]$ are  locally convex
    quasi $\cs$-normed algebras over $\pi(\ca_0 )$.
\end{lemma}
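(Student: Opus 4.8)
The plan is to prove (1) by the classical $\cs$-boundedness argument, and then to read off (2) by transporting the structure of $\widetilde{\ca_0}[\tau]$ through $\pi$ and invoking Proposition \ref{pro6.1}. First I would settle (1) by showing that $\pi$ is contractive on $\ca_0$, i.e. $\|\pi(x)\xi\|\leqq\|x\|_0\,\|\xi\|$ for all $x\in\ca_0$, $\xi\in\fd$. Since $\pi$ is a $*$-representation of the $O^*$-algebra $\ca_0$, each $\pi(x)$ leaves $\fd$ invariant and satisfies $\pi(x)^\dag=\pi(x^*)$ and $\pi(x^*x)=\pi(x^*)\Box\pi(x)$; hence for $\xi\in\fd$,
\[
\|\pi(x)\xi\|^2=\ip{\pi(x^*x)\xi}{\xi}
=\|x\|_0^2\,\ip{\pi(\mathit{1})\xi}{\xi}-\ip{\pi(\|x\|_0^2\mathit{1}-x^*x)\xi}{\xi}.
\]
By the $\cs$-property of $\norm_0$, the element $\|x\|_0^2\mathit{1}-x^*x$ is positive in the $\cs$-algebra $\widetilde{\ca_0}[\norm_0]$, hence equals $b^2$ for some $b=b^*\in\widetilde{\ca_0}[\norm_0]$. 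The delicate point is that $b$ need not lie in $\ca_0$, so the positivity of $\pi(\|x\|_0^2\mathit{1}-x^*x)$ on $\fd$ is not immediate from axiom (iii). I would recover it by choosing $b_n\in\ca_0$ with $b_n\to b$ in $\norm_0$, hence (by $\rm (T_2)$) in $\tau$; then $b_n^*b_n\to\|x\|_0^2\mathit{1}-x^*x$ in $\tau$, and the $(\tau,\tau^u_*(\cb))$-continuity of $\pi$ --- which forces $\tau_{s^*}$-continuity, as $\tau_{s^*}\preceq\tau^u_*(\cb)$ --- gives $\ip{\pi(\|x\|_0^2\mathit{1}-x^*x)\xi}{\xi}=\lim_n\|\pi(b_n)\xi\|^2\geqq 0$. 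Since $\pi(\mathit{1})$ is an orthogonal projection, the displayed identity then yields $\|\pi(x)\xi\|^2\leqq\|x\|_0^2\|\xi\|^2$.

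With contractivity in hand, each $\pi(x)$, $x\in\ca_0$, is bounded with $\|\pi(x)\|\leqq\|x\|_0$. For $y\in\widetilde{\ca_0}[\norm_0]\subseteq\ca$ and $x_n\in\ca_0$ with $x_n\to y$ in $\norm_0$, the $\pi(x_n)$ form an operator-norm Cauchy sequence, while at the same time $x_n\to y$ in $\tau$ and so $\pi(x_n)\to\pi(y)$ in $\tau_{s^*}$; comparing the two limits on $\fd$ shows that $\pi(y)$ is bounded with $\|\pi(y)\|\leqq\|y\|_0$. Passing to the limit in $\pi(x_nw_m)=\pi(x_n)\pi(w_m)$ (with $w_m\to z$ in $\norm_0$) gives multiplicativity on $\widetilde{\ca_0}[\norm_0]$, and axiom (ii) gives $\pi(y^*)=\pi(y)^\dag$; thus $\pi$ restricts to a $*$-representation of the $\cs$-algebra $\widetilde{\ca_0}[\norm_0]$, which is (1).

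For (2), set $\cm_0:=\pi(\ca_0)$ and $\cm:=\pi(\ca)$. By (1), $\cm_0$ is a unital $\cs$-normed algebra of operators leaving $\fd$ invariant, $\cm$ is a $*$-invariant subspace of $\cl^\dag(\fd,\ch)$ with $\cm_0\subset\cm$, and $\pi(\widetilde{\ca_0}[\norm_0])=\widetilde{\cm}_0[\norm_0]$ is the $\cs$-completion of $\cm_0$ sitting inside $\cl^\dag(\fd,\ch)$. Proposition \ref{pro6.1} then gives that $\widetilde{\cm}_0[\tau^u_*(\cb)]$ is a locally convex quasi $\cs$-normed algebra over $\cm_0$, contained in $\cl^\dag(\fd,\ch)$. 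Next I would verify $\pi(\ca)\subseteq\widetilde{\cm}_0[\tau^u_*(\cb)]$: for $a\in\ca$ and a net $x_\al\in\ca_0$ with $x_\al\to a$ in $\tau$, continuity gives $\pi(x_\al)\to\pi(a)$ in $\tau^u_*(\cb)$ with $\pi(x_\al)\in\cm_0$, so $\pi(a)$ lies in the $\tau^u_*(\cb)$-closure of $\cm_0$. Finally, axiom (iii) and the computation $\pi(a)\cdot\pi(x)=\tau^u_*(\cb)\text{-}\lim_\al\pi(x_\al x)=\pi(ax)$ (and symmetrically on the right) show that $\pi(\ca)$ is stable under the quasi-multiplications by $\cm_0$; hence $\pi(\ca)$ is a quasi $*$-subalgebra of $\widetilde{\cm}_0[\tau^u_*(\cb)]$ containing $\widetilde{\cm}_0[\norm_0]$.

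By the first bullet of Definition \ref{de4.8}, $\pi(\ca)[\tau^u_*(\cb)]$ is therefore a locally convex quasi $\cs$-normed algebra over $\pi(\ca_0)$; the same argument applies to $\tau_{s^*}$, using that $\tau_{s^*}\preceq\tau^u_*(\cb)$ makes $\pi$ $(\tau,\tau_{s^*})$-continuous and that Proposition \ref{pro6.1} covers $\tau_{s^*}$ as well. I expect the contractivity estimate in (1) to be the main obstacle, precisely because the square root $b$ escapes $\ca_0$ and its positivity under $\pi$ has to be retrieved through $\norm_0$-density of $\ca_0$ together with $\tau_{s^*}$-continuity; once (1) is secured, (2) is a routine transport of the structure already available from $\widetilde{\ca_0}[\tau]$ and Proposition \ref{pro6.1}.
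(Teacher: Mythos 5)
Your proof is correct, but part (1) is organized differently from the paper's. The paper first establishes the multiplicativity $\pi(ab)=\pi(a)\mult\pi(b)$ for all $a,b\in\widetilde{\ca_0}[\norm_0]$ --- via the sesquilinear computation $\ip{\pi(b)\xi}{\pi(a^*)\eta}=\lim_n\ip{\pi(ay_n)\xi}{\eta}=\ip{\pi(ab)\xi}{\eta}$, which uses only the quasi-multiplication axiom $\pi(ay)=\pi(a)\mult\pi(y)$ for $y\in\ca_0$ and the $(\tau,\tau_{s^*})$-continuity of $\pi$ --- and only then introduces $f(a)=(\pi(a)\xi|\xi)$: at that point $f$ is a positive linear functional on the unital $C^*$-algebra $\widetilde{\ca_0}[\norm_0]$, so the classical automatic-boundedness theorem gives $\|\pi(a)\xi\|^2=f(a^*a)\leqq f(\1)\|a\|_0^2$ for all $a$ in the completion in one stroke. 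You reverse the order: you prove contractivity on $\ca_0$ first, by hand, extracting the square root $b$ of $\|x\|_0^2\1-x^*x$ in the completion and recovering positivity of $\pi(\|x\|_0^2\1-x^*x)$ through $\norm_0$-density of $\ca_0$ and $(\tau,\tau_{s^*})$-continuity; you then extend boundedness to $\widetilde{\ca_0}[\norm_0]$ by a norm-Cauchy/limit-comparison argument and obtain multiplicativity by passing to limits in $\pi(x_nw_m)=\pi(x_n)\mult\pi(w_m)$. This is essentially a re-derivation, in the $\cs$-normed setting, of the boundedness of positive functionals that the paper gets for free once multiplicativity on the completion is in place; your density detour for the square root is exactly the price of postponing that step. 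Both routes are sound, with one small caveat on your side: since $\pi(z)$ for $z\in\widetilde{\ca_0}[\norm_0]$ need not leave $\fd$ invariant, the limit identity $\pi(yz)=\pi(y)\mult\pi(z)$ must be read through the weak partial multiplication (harmless here because $\pi(y)$, $\pi(z)$ are bounded). For part (2) the two proofs coincide in substance --- identify $\pi(\widetilde{\ca_0}[\norm_0])$ with $\widetilde{\pi(\ca_0)}[\norm_0]$ using (1) and exhibit $\pi(\ca)$ as a quasi $*$-subalgebra containing it --- and your explicit appeal to Proposition 6.1 for the conditions $\rm (T_1)$--$\rm (T_5)$ usefully spells out what the paper dismisses as easily checked.
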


\begin{proof}
(1) Since $\ca_0 \subset \widetilde{\ca_0 }
[\norm_0 ]\subset \ca$ and $\pi$ is a *-representation of $\A$, it follows that
\begin{equation}\label{eq6.2}
\pi(ay)= \pi(a)\mult \pi(y), \quad \forall a \in  \widetilde{\ca_0 }
[\norm_0 ], \forall y \in \ca_0.
\end{equation}
Now we show that
\begin{equation}\label{eq6.3}
\pi(ab)= \pi(a)\mult \pi(b), \quad \forall a,b \in  \widetilde{\ca_0 }
[\norm_0 ].
\end{equation}
Indeed, let $a,b$ be arbitrary elements of $\widetilde{\ca_0 }
[\norm_0 ]$. Then, there exists a sequence $\{y_n\}$ in $\ca_0$ such that
$b=\norm_0-\lim\limits_{n\to\infty} y_n$. Hence, $ab=\norm_0-\lim\limits_{n\to\infty} ay_n$.

Moreover,
it is easily seen that $\pi$ is also $(\tau,\tau_{s^*})$-continuous and so, by \eqref{eq6.2},
\begin{eqnarray*}
    \ip{\pi (b)\xi}{\pi(a^*)\eta} &=&\lim_{n\to \infty} \ip{\pi (y_n )\xi}{\pi(a^*)\eta}
    =\lim_{n\to \infty}\ip{ \pi (a)\mult\pi (y_n )\xi}{\eta} \\
    &=& \lim_{n\to \infty}\ip{ \pi (ay_n )\xi}{\eta}
    =\ip{\pi (ab)\xi}{\eta},
\end{eqnarray*}
for every $\xi, \eta \in  \fd$. Thus, \eqref{eq6.3} holds.\\
For any $\xi\in \fd$, we put
$$f(a)=(\pi(a)\xi|\xi), \quad a\in \widetilde{\ca_0 }
[\norm_0 ].$$
Then, by \eqref{eq6.3}, $f$ is a positive linear functional on the unital
$C^*$-algebra $\widetilde{\ca_0} [\norm_0 ]$. Hence, we have
 $$\|\pi(a)\xi\|^2=f(a^*a)\leq f(\1)\|a\|_0^2=\|\xi\|^2 \|a\|_0^2$$ for all
 $a\in \widetilde{\ca_0} [\norm_0 ]$, which implies that $\pi$ is bounded.
This completes the proof of (1).

(2) $\pi(\ca)$ is a quasi $*$-subalgebra of the locally convex quasi
$*$-algebras $\widetilde{\pi (\ca)}[\tau^u _* (\cb)]$ and
$\widetilde{\pi (\ca)}[\tau_{s^*}]$ over $\pi (\ca_0 )$.
Furthermore, by (1), $\pi (\widetilde{\ca_0} [\norm_0 ])$ is a
$\cs$-algebra and
\[
 \widetilde{\pi ({\ca}_0 )}[\norm_0 ]=\pi (\widetilde{\ca_0} [\norm_0 ])
\subset \pi (\ca).
\]
\end{proof}

\begin{remark}\label{re6.5}
Let $\ca [\tau ]$ be a locally convex quasi $\cs$-normed algebra
over $\ca_0$, and $\pi$ a $(\tau, \tau^u _* (\cb))$-continuous
$*$-representation of $\ca$, where $\cb$ is an admissible subset in
$\cb (\pi (\ca ))$. Let $a \in \ca$ be strongly commuting with $y\in
\widetilde{\ca_0} [\norm_0 ]$. Then $\pi (a)$  commutes strongly
with $\pi (y)$. The converse does not necessarily hold. So
even if $\ca [\tau ]$ is normal, the locally convex quasi
$\cs$-normed algebra $\pi (\ca)$ over $\pi (\ca_0 )$ is not
necessarily normal.
\end{remark}

We are going now to discuss  the faithfulness of a $(\tau
,\tau_{s^*})$-continuous $*$-representation of $\ca$. For this, we
need some facts on sesquilinear forms, for which the reader is
referred to \cite[p. 544]{bmit}. We only recall that if
\[
\mathcal{S}(\ca_0):= \{ \tau \text{-continuous positive invariant
sesquilinear forms } \varphi \text{ on } \ca_0 \times \ca_0 \},
\]
we say that the set $\mathcal{S}(\ca_0)$ is \textit{sufficient},
whenever
\[
a \in \ca \text{ with } \tilde{\varphi}(a,a)=0, \forall \ \varphi
\in \mathcal{S}(\ca_0), \text{ implies } a=0,
\]
where $\tilde{\varphi}$ is the extension of $\varphi$ to a
$\tau$-continuous positive invariant sesquilinear form on $\ca \times
\ca$.

>From the next results, Theorem 6.6 and Corollary 6.7 can be regarded
as generalizations of the analogues of the Gel'fand-Naimark theorem,
in the case of locally convex quasi $C^*$-algebras proved in
\cite[Section 7]{bmit}. Theorem 6.6 is proved in the same way as
\cite[Theorem 7.3]{bmit}.

\begin{theorem}\label{th6.6}
Let $\ca [\tau ]$ be a locally convex quasi $\cs$-normed algebra
over a unital  $\cs$-normed algebra $\ca_0$. The following
statements are equivalent:
\begin{itemize}
\item[(i)] There exists a faithful $(\tau ,\tau_{s^*})$-continuous
$*$-representation of $\ca$.
\item[(ii)] The set $S(\ca_0 )$ is sufficient.
\end{itemize}
\end{theorem}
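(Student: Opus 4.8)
The plan is to establish the two implications separately: (i) $\Rightarrow$ (ii) is a short computation producing explicit sesquilinear forms, while (ii) $\Rightarrow$ (i) is a GNS-type construction modelled on \cite[Theorem 7.3]{bmit}, adapted to the $\cs$-normed (rather than $\cs$) setting.

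First I would prove (i) $\Rightarrow$ (ii). Let $\pi : \ca \lto \cl^\dag(\fd, \ch)$ be a faithful $(\tau, \tau_{s^*})$-continuous $*$-representation. For each $\xi \in \fd$ set $\varphi_\xi(x, y) := \ip{\pi(x)\xi}{\pi(y)\xi}$ for $x, y \in \ca_0$. Positivity is clear, invariance follows from $\pi(zx) = \pi(z)\mult\pi(x)$ combined with $\pi(z)^\dag = \pi(z^*)$, and $\tau$-continuity from the estimate $|\varphi_\xi(x,y)| \le \|\pi(x)\xi\|\,\|\pi(y)\xi\| \le p^{\{\xi\}}_\dag(\pi(x))\, p^{\{\xi\}}_\dag(\pi(y))$, whose right-hand side is $\tau$-continuous in each variable exactly by $(\tau, \tau_{s^*})$-continuity of $\pi$. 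Hence $\varphi_\xi \in \mathcal{S}(\ca_0)$, with extension $\tilde\varphi_\xi(a, b) = \ip{\pi(a)\xi}{\pi(b)\xi}$. If now $a \in \ca$ satisfies $\tilde\varphi(a, a) = 0$ for every $\varphi \in \mathcal{S}(\ca_0)$, then in particular $\|\pi(a)\xi\|^2 = \tilde\varphi_\xi(a, a) = 0$ for all $\xi \in \fd$, so $\pi(a) = 0$ and faithfulness forces $a = 0$; thus $\mathcal{S}(\ca_0)$ is sufficient.

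For the converse (ii) $\Rightarrow$ (i) I would attach to each $\varphi \in \mathcal{S}(\ca_0)$ its GNS representation. Extending $\varphi$ to $\tilde\varphi$ on $\ca \times \ca$, put $N_\varphi := \{a \in \ca : \tilde\varphi(a, a) = 0\}$, a subspace by Cauchy--Schwarz, and let $\ch_\varphi$ be the Hilbert space completion of $\ca / N_\varphi$ under $\ip{\lambda_\varphi(a)}{\lambda_\varphi(b)} := \tilde\varphi(a, b)$, where $\lambda_\varphi$ denotes the canonical quotient map. With $\fd_\varphi := \lambda_\varphi(\ca_0)$ --- dense in $\ch_\varphi$ since $\ca_0$ is $\tau$-dense in $\ca$ and $\tilde\varphi$ is $\tau$-continuous --- I would define $\pi_\varphi(a)\lambda_\varphi(x) := \lambda_\varphi(ax)$ for $a \in \ca$ and $x \in \ca_0$; the left product $ax$ is available because $\ca$ is a quasi $*$-algebra over $\ca_0$, and the invariance relation $\tilde\varphi(ax, y) = \tilde\varphi(x, a^*y)$ (together with Cauchy--Schwarz and density) yields both well-definedness and the identity $\pi_\varphi(a)^\dag = \pi_\varphi(a^*)$, so that $\pi_\varphi$ is a $*$-representation of $\ca$ into $\cl^\dag(\fd_\varphi, \ch_\varphi)$. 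Finally I would take $\pi := \bigoplus_{\varphi \in \mathcal{S}(\ca_0)} \pi_\varphi$ on $\fd := \bigoplus_\varphi \fd_\varphi$; sufficiency of $\mathcal{S}(\ca_0)$ makes it faithful, because $\pi(a) = 0$ forces $\lambda_\varphi(a) = \pi_\varphi(a)\lambda_\varphi(\mathit{1}) = 0$, i.e.\ $\tilde\varphi(a, a) = 0$, for every $\varphi$.

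The step I expect to carry the real weight is the $(\tau, \tau_{s^*})$-continuity of each $\pi_\varphi$, and this is precisely where the $\cs$-normed setting and the weakened condition $\rm (T_3)$ enter in place of completeness. For a finite set $\fm = \{\lambda_\varphi(x_1), \dots, \lambda_\varphi(x_k)\} \subset \fd_\varphi$ I would estimate $\|\pi_\varphi(a)\lambda_\varphi(x_i)\|^2 = \tilde\varphi(ax_i, ax_i) \le C\, p_\lambda(ax_i)^2$ using $\tau$-continuity of $\tilde\varphi$, and then bound $p_\lambda(ax_i)$ by a $\tau$-seminorm of $a$ via the separate continuity of right multiplication by the fixed $x_i \in \ca_0$; handling $\pi_\varphi(a)^\dag = \pi_\varphi(a^*)$ the same way shows $a \mapsto p^\fm_\dag(\pi_\varphi(a))$ is $\tau$-continuous. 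Continuity of the direct sum then reduces, for any finite subset of $\fd$, to finitely many summands, and the remaining routine points --- that each $\pi_\varphi$ restricts to an honest $*$-representation of the $\cs$-algebra $\widetilde{\ca_0}[\norm_0]$, as in Lemma \ref{le6.4} --- go through as in \cite{bmit}.
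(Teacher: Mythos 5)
Your argument is correct and is essentially the paper's own proof: the paper simply remarks that Theorem \ref{th6.6} ``is proved in the same way as [Theorem 7.3] of \cite{bmit}'', i.e.\ by exactly the two steps you carry out --- vector forms $\varphi_\xi(x,y)=\ip{\pi(x)\xi}{\pi(y)\xi}$ for (i) $\Rightarrow$ (ii), and the GNS construction with a direct sum over $\mathcal{S}(\ca_0)$ for (ii) $\Rightarrow$ (i), with the $(\tau,\tau_{s^*})$-continuity coming from the joint $\tau$-continuity of $\tilde\varphi$ and the continuity of $a\mapsto ax$ for fixed $x\in\ca_0$. No gaps.
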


\begin{corollary}\label{co6.7}
Suppose $S(\ca_0 )$ is sufficient. Then, the locally convex quasi
$\cs$-normed algebra $\ca [\tau ]$ over $\ca_0$ is continuously
embedded in a locally convex quasi $\cs$-normed algebra of
operators.
\end{corollary}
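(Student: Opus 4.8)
The plan is to read off the desired embedding directly from the faithful representation furnished by Theorem \ref{th6.6}, and then to recognise its image as a locally convex quasi $\cs$-normed algebra of operators by means of Lemma \ref{le6.4}. So essentially nothing new needs to be proved; the work is assembly.

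First I would apply Theorem \ref{th6.6}: since $S(\ca_0)$ is assumed sufficient, the implication (ii) $\Rightarrow$ (i) yields a faithful $(\tau,\tau_{s^*})$-continuous $*$-representation $\pi:\ca\lto\cl^\dag(\fd,\ch)$ on a dense domain $\fd$ of some Hilbert space $\ch$. Here faithfulness says precisely that $\pi$ is injective on all of $\ca$, and $(\tau,\tau_{s^*})$-continuity says that $\pi:\ca[\tau]\lto\cl^\dag(\fd,\ch)[\tau_{s^*}]$ is continuous. Next I would identify the target: taking $\cb=\cb_f(\fd)$, so that $\tau^u_*(\cb)$ is exactly $\tau_{s^*}$, Lemma \ref{le6.4}(2) shows that $\pi(\ca)[\tau_{s^*}]$ is a locally convex quasi $\cs$-normed algebra over $\pi(\ca_0)$ inside $\cl^\dag(\fd,\ch)$ --- this is the ``locally convex quasi $\cs$-normed algebra of operators'' of the statement. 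Part (1) of the same lemma guarantees in addition that $\pi$ restricts to a bounded $*$-representation of the $\cs$-algebra $\widetilde{\ca_0}[\norm_0]$, so that the $\cs$-normed structure over $\pi(\ca_0)$ is genuinely transported.

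Finally I would assemble these facts. By the very definition of a $*$-representation of $\ca[\tau]$, the map $\pi$ is linear, satisfies $\pi(a^*)=\pi(a)^\dag$, and carries the left and right multiplications of the quasi $*$-algebra $\ca$ into those of $\pi(\ca)$, i.e. $\pi(ax)=\pi(a)\Box\pi(x)$ and $\pi(xa)=\pi(x)\Box\pi(a)$ for $a\in\ca$, $x\in\ca_0$. Combined with injectivity, this exhibits $\pi$ as an isomorphism of $\ca$ onto the locally convex quasi $\cs$-normed algebra $\pi(\ca)$ as quasi $*$-algebras, and its $(\tau,\tau_{s^*})$-continuity upgrades this to the asserted continuous embedding of $\ca[\tau]$.

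I expect no genuine obstacle, since all the analytic substance has already been discharged in Theorem \ref{th6.6} and Lemma \ref{le6.4}. The only points deserving a moment's care are, first, to invoke faithfulness of $\pi$ on the whole of $\ca$ (and not merely on $\ca_0$) so as to secure injectivity of the embedding, and second, to note that $\pi(\ca)$ is to be given the subspace topology induced by $\tau_{s^*}$, exactly as in Lemma \ref{le6.4}(2); no further difficulty is anticipated.
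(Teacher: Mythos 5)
Your proposal is correct and follows exactly the route the paper intends (the paper gives no separate proof of this corollary, leaving it as the immediate combination of Theorem \ref{th6.6} and Lemma \ref{le6.4}): sufficiency of $S(\ca_0)$ yields via Theorem \ref{th6.6} a faithful $(\tau,\tau_{s^*})$-continuous $*$-representation $\pi$, and Lemma \ref{le6.4}(2) with $\cb=\cb_f(\fd)$, for which $\tau^u_*(\cb)=\tau_{s^*}$, identifies $\pi(\ca)[\tau_{s^*}]$ as a locally convex quasi $\cs$-normed algebra of operators over $\pi(\ca_0)$. Your two points of care (faithfulness on all of $\ca$, and the induced $\tau_{s^*}$-topology on the image) are exactly the right ones.
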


We end this Section with the study of a functional
calculus for the commutatively quasi-positive elements (see
Definition 4.4) of $\ca [\tau]$.

Let $\ca [\tau ]$ be a locally convex quasi $\cs$-normed algebra
over a unital  $\cs$-normed algebra $\ca_0$. If $a \in\ca
[\tau]_{cq+}$,  then by Proposition 4.7(1), the element
$(\mathit{1}+a)^{-1}$ exists and belongs to
$\cu(\widetilde{\ca_0}[\norm_0])$. Denote by $\cs (a)$ the maximal
commutative $\cs$-subalgebra of the $\cs$-algebra $\widetilde{\ca_0}
[\norm_0 ]$ containing the elements $\mathit{1}$ and
$(\mathit{1}+a)^{-1}$.

\begin{lemma}\label{le6.8}
$\widetilde{\cs (a)}[\tau ]$ is a commutative unital locally convex
quasi $\cs$-algebra over $\cs (a)$ and $a\in \widetilde{\cs
(a)}[\tau ]_{q+}$.
\end{lemma}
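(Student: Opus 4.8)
The plan is to reduce the statement to the commutative theory of Sections~4 and~5 applied to the commutative unital $\cs$-algebra $\cs(a)$ carrying the restriction of $\tau$. Concretely, I would first verify that $(\cs(a),\tau)$ inherits the conditions $\rm (T_1)$--$\rm (T_5)$ from $(\ca_0,\tau)$, so that by Definition~\ref{de4.8} together with Lemma~\ref{le4.9}(1) the full completion $\widetilde{\cs(a)}[\tau]$ is a normal, commutative, unital locally convex quasi $\cs$-algebra over $\cs(a)$; and then produce an explicit net in $\cs(a)_+$ that $\tau$-converges to $a$, which yields $a\in\widetilde{\cs(a)}[\tau]_{q+}$. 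Note from the outset that $\cs(a)$ is $\norm_0$-complete, so $\widetilde{\cs(a)}[\norm_0]=\cs(a)$ and $\widetilde{\cs(a)}[\norm_0]_+=\cs(a)_+$.

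The easy conditions come first. Condition $\rm (T_2)$ is immediate by restriction. For $\rm (T_3)$ I would exploit commutativity of $\cs(a)$: given $x,y\in\cs(a)\subset\widetilde{\ca_0}[\norm_0]$, the constant net shows that $y$ commutes strongly with $x$, so Lemma~\ref{le4.3} (condition $\rm (T'_3)$) gives $p_\la(xy)\leqq\gamma_\la\|y\|_0 p_{\la'}(x)$ together with $x\cdot y=xy$; interchanging $x$ and $y$ and using $xy=yx$ produces exactly $\rm (T_3)$ for $\cs(a)$, and $\rm (T_1)$ follows from this estimate and continuity of the involution. For $\rm (T_5)$ I would argue by inclusion: any $z\in\widetilde{\cs(a)}[\tau]_{q+}\cap\cs(a)$ is a $\tau$-limit of elements of $\cs(a)_+\subset\widetilde{\ca_0}[\norm_0]_+$, hence lies in $\widetilde{\ca_0}[\tau]_{q+}\cap\widetilde{\ca_0}[\norm_0]=\widetilde{\ca_0}[\norm_0]_+$ by the ambient $\rm (T_5)$ and Remark~\ref{re4.6}, so $z\in\widetilde{\ca_0}[\norm_0]_+\cap\cs(a)=\cs(a)_+$.

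The main obstacle is $\rm (T_4)$, i.e. $\tau$-closedness of $\cu(\widetilde{\cs(a)}[\norm_0])_+=\cu(\widetilde{\ca_0}[\norm_0])_+\cap\cs(a)$ in $\widetilde{\cs(a)}[\tau]$. A net in this set is, by the ambient $\rm (T_4)$, $\tau$-convergent to some $u\in\cu(\widetilde{\ca_0}[\norm_0])_+$, and the real point is that the limit must return to the subalgebra, $u\in\cs(a)$. Since $\cs(a)$ is a maximal commutative $\cs$-subalgebra, this amounts to $uw=wu$ for every $w\in\cs(a)$, where $u=\tau\text{-}\lim u_\gamma$ with $u_\gamma\in\cs(a)$ and $u_\gamma w=wu_\gamma$. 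The delicate step is to transport the commutativity across the $\tau$-limit: here I would use that by $\rm (T_3)$ (equivalently $\rm (T'_3)$ of Lemma~\ref{le4.3}) left and right multiplication by the fixed element $w$ are $\tau$-continuous along nets commuting with $w$, so that $u_\gamma w=wu_\gamma\to u\cdot w=w\cdot u$, and the bounded $\cs$-products $uw$ and $wu$ are both identified with this common limit. This is precisely the resolvent-type continuity argument already carried out in the proof of Proposition~\ref{pro4.7}(1), and I expect it to be the only place requiring genuine care.

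Finally, for $a\in\widetilde{\cs(a)}[\tau]_{q+}$ I would build the net explicitly through functional calculus in $\cs(a)$. Put $b:=(\mathit{1}+a)^{-1}\in\cs(a)$, which by Proposition~\ref{pro4.7}(1) satisfies $0\leqq b\leqq\mathit{1}$. Choosing a commuting net $\{x_\al\}\subset(\ca_0)_+$ with $x_\al\underset{\tau}{\lto}a$, the finite-stage identity $(\mathit{1}+\ve x_\al)^{-1}=g_\ve((\mathit{1}+x_\al)^{-1})$, with $g_\ve(t)=t\big((1-\ve)t+\ve\big)^{-1}$, passes to the $\tau$-limit (again by the resolvent continuity of Proposition~\ref{pro4.7}(1)) and gives $(\mathit{1}+\ve a)^{-1}=g_\ve(b)\in\cs(a)$. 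Consequently
\[
a_\ve:=a\cdot(\mathit{1}+\ve a)^{-1}=\ve^{-1}\big(\mathit{1}-g_\ve(b)\big)=h_\ve(b),\qquad h_\ve(t)=\frac{1-t}{(1-\ve)t+\ve}\geqq0\ \ (t\in[0,1]),
\]
so $a_\ve\in\cs(a)_+$, and since $a=\tau\text{-}\lim_{\ve\downarrow 0}a_\ve$ by Proposition~\ref{pro4.7}(2), the net $\{a_\ve\}$ witnesses $a\in\widetilde{\cs(a)}[\tau]_{q+}$.
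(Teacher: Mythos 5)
Your overall strategy --- verify $\rm (T_1)$--$\rm (T_5)$ for $\cs(a)$ with the restricted topology and then exhibit an explicit net in $\cs(a)_+$ converging to $a$ --- is exactly the paper's; the paper checks $\rm (T_1)$ via the ambient $\rm (T_3)$ and defers the remaining conditions and the quasi-positivity of $a$ to the cited results of \cite{bmit}. Your verifications of $\rm (T_1)$, $\rm (T_2)$, $\rm (T_3)$ and $\rm (T_5)$ are sound, and the closed-form net $a_\ve = h_\ve\bigl((\mathit{1}+a)^{-1}\bigr)\in \cs(a)_+$ combined with Proposition 4.7(2) is a clean way to obtain $a\in \widetilde{\cs(a)}[\tau]_{q+}$.

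There is, however, a genuine gap in your treatment of $\rm (T_4)$, at the very step you single out as delicate. From $u_\gamma w = w u_\gamma$ and $u_\gamma \overset{\tau}{\longrightarrow} u$, condition $\rm (T'_3)$ of Lemma 4.3 does give that $\{u_\gamma w\}$ is $\tau$-Cauchy and that $u\cdot w = w\cdot u$ in $\widetilde{\ca_0}[\tau]$; but the assertion that this common limit coincides with the $\cs$-products $uw$ and $wu$ computed in $\widetilde{\ca_0}[\norm_0]$ is circular. Lemma 4.3 guarantees independence of the limit only among approximating nets that \emph{commute with} $w$, and the constant net at $u$ commutes with $w$ precisely when $uw=wu$ --- which is what you are trying to prove. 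Nor is this the ``resolvent-type continuity argument of Proposition 4.7(1)'': there the fixed multiplier $\mathit{1}+x_\al$ lies in $\ca_0$, where multiplication is unconditionally $\tau$-continuous by $\rm (T_1)$, whereas your $w$ is a general element of $\cs(a)\subset\widetilde{\ca_0}[\norm_0]$, typically outside $\ca_0$, and $\rm (T_3)$ gives no control on $p_\la\bigl((u_\gamma-u)w\bigr)$ because $u_\gamma-u$ is not known to commute with $w$. The repair uses the information you did not exploit, namely that the net lies in the unit ball: pick $v_n\in\ca_0$ with $\|w-v_n\|_0\to 0$ and estimate $p_\la\bigl((u_\gamma-u)w\bigr)\leqq p_\la\bigl((u_\gamma-u)v_n\bigr)+C_\la\|u_\gamma-u\|_0\,\|w-v_n\|_0$, where the first term tends to $0$ in $\gamma$ by $\rm (T_1)$ and the second is uniformly small by $\rm (T_2)$ and $\|u_\gamma-u\|_0\leqq 2$. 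This yields $u_\gamma w\to uw$ and $wu_\gamma\to wu$ in $\tau$, hence $uw=wu$ for every $w\in\cs(a)$, and maximality of $\cs(a)$ then returns the self-adjoint element $u$ to $\cs(a)$. With this repair your argument closes.
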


\begin{proof}
Since $\cs (a)$ is a unital $\cs$-algebra, we have only to check the
properties $\rm (T_1)-(T_5)$. We show $\rm (T_1)$; the rest of them,
as well as the fact that $a\in \widetilde{\cs (a)} [\tau ]_{q+}$ are
proved by the same way as in \cite[Proposition 7.6 and Corollary
7.7]{bmit}. From the condition $\rm (T_3 )$ for ${\ca_0} [\tau ]$, we
have that for all $\la \in \La$, there exist $\la' \in \La$ and
$\gamma_\la >0$ such that
\[
p_\la (xy) \leqq \gamma_\la \| x\|_0 p_{\la'} (y), \forall \ x,y\in
\cs (a).
\]
So, $\cs (a)[\tau ]$ is a locally convex $*$-algebra with separately
continuous multiplication.
\end{proof}

By Lemma \ref{le6.8} and Theorem \ref{th5.2} we can now obtain a
functional calculus for the commutatively quasi-positive elements of
the noncommutative locally convex quasi $\cs$-normed algebra $\ca
[\tau ]$ (see also \cite[Theorem 7.8, Corollary 7.9]{bmit}).

\begin{theorem}\label{th6.9}
Let $\ca[\tau]$ be an arbitrary locally convex quasi $\cs$-normed
algebra over a unital $\cs$-normed algebra $\ca_0$ and $a\in \ca
[\tau]_{cq+}$. Suppose that $a^n$ is well-defined for some $n\in
\N$. Then, there is a unique $*$-isomorphism $f\to f(a)$ from
$\bigcup ^{n} _{k=1} \cc_k (\sigma_{\cs (a)}(a))$ into
$\ca[\tau] \cdot \cs (a)$ such that:
\begin{itemize}
    \item[(i)] If $u_0 (\la )=1$, with $u_0 \in \bigcup ^{n} _{k=1} \cc_k
(\sigma_{\cs (a)}(a))$ and $\la \in \sigma_{\cs (a)}(a)$, then $u_0
(a)=\mathit{1}$.
    \item[(ii)] If $u_1 (\la) =\la$ with $u_1 \in \bigcup ^{n} _{k=1} \cc_k
(\sigma_{\cs (a)}(a))$ and $\la \in \sigma_{\cs (a)}(a)$, then $u_1
(a)=a$.
    \item[(iii)] $(\lambda_1f_1 +f_2 )(a)=\lambda_1f_1 (a)+f_2 (a)$, $\forall$ $f_1
    ,f_2 \in \bigcup ^{n} _{k=1}\cc_k
(\sigma_{\cs (a)}(a))$ and $\la_1 \in \C$;
   % $(\la f)(a)=\la f(a)$, $\forall$ $f\in \bigcup\limits^{n}_{k=1}\cc_k
%(\sigma_{\cs (a)}(a))$ and $\la \in \C$;

    $(f_1 f_2 )(a)=f_1 (a)f_2 (a)$, $\forall$ $f_j \in
    \cc_{k_j}(\sigma_{\cs (a)}(a))$, $j=1,2$, with $k_1+k_2 \leqq n$.
    \item[(iv)] The map $f\to f(a)$ restricted to $\cc_b(\sigma_{\cs (a)}(a))$
    is an  isometric $*$-isomorphism of the $\cs$-algebra
    $\cc_b(\sigma_{\cs (a)}(a))$ on the $\cs$-algebra $\cs (a)$.
\end{itemize}
\end{theorem}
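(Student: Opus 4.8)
The plan is to deduce the statement from the commutative functional calculus of Theorem \ref{th5.2} by restricting attention to the commutative subalgebra $\cs(a)$, with Lemma \ref{le6.8} serving as the bridge. By Lemma \ref{le6.8}, the completion $\widetilde{\cs(a)}[\tau]$ is a commutative unital locally convex quasi $\cs$-algebra over $\cs(a)$ and $a \in \widetilde{\cs(a)}[\tau]_{q+}$. Because $\cs(a)$ is already a $\cs$-algebra, its $\norm_0$-completion equals $\cs(a)$, so that in Theorem \ref{th5.2} the roles of both $\ca_0$ and $\widetilde{\ca_0}[\norm_0]$ are played by $\cs(a)$, the role of $\ca[\tau]$ by $\widetilde{\cs(a)}[\tau]$, and the spectrum $\sigma_{\widetilde{\ca_0}[\norm_0]}(a)$ becomes exactly $\sigma_{\cs(a)}(a)$, matching the present notation.

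First I would verify that the hypothesis carries over: since $a \in \widetilde{\cs(a)}[\tau]_{q+}$ and the product of quasi-positive elements is independent of the approximating nets (see the discussion preceding Theorem \ref{th5.2}, which follows Lemma 6.2 of \cite{bmit}), the power $a^n$ computed inside $\widetilde{\cs(a)}[\tau]$ agrees with the one in $\ca[\tau]$, so it is well-defined there too. I would then apply Theorem \ref{th5.2} to $a \in \widetilde{\cs(a)}[\tau]_{q+}$ verbatim, producing a unique $*$-isomorphism $f \mapsto f(a)$ from $\bigcup_{k=1}^n \cc_k(\sigma_{\cs(a)}(a))$ into $\widetilde{\cs(a)}[\tau]\cdot\cs(a)$ with properties (i)--(iii) transcribed directly, and with uniqueness inherited from that of Theorem \ref{th5.2}. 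The image then lies in $\ca[\tau]\cdot\cs(a)$ once one checks that the elements $f(a)$, being finite combinations of powers of $a \in \ca$ with functions of $a$ in $\cs(a)$, remain in this span.

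The one genuinely delicate step is property (iv). Theorem \ref{th5.2}(iv) yields an isometric $*$-isomorphism of $\cc_b(\sigma_{\cs(a)}(a))$ onto the closed $*$-subalgebra of $\cs(a)$ generated by $\mathit{1}$ and $(\mathit{1}+a)^{-1}$; here $(\mathit{1}+a)^{-1}$ exists and lies in $\cu(\widetilde{\ca_0}[\norm_0])_+$ by Proposition \ref{pro4.7}(1). To obtain (iv) as stated one must identify this generated subalgebra with the whole of $\cs(a)$, i.e. reconcile the description of $\cs(a)$ as the maximal commutative $\cs$-subalgebra containing $\mathit{1}$ and $(\mathit{1}+a)^{-1}$ with the subalgebra these two elements actually generate. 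I expect this identification, together with the bookkeeping that the functional-calculus elements genuinely land in $\ca[\tau]\cdot\cs(a)$, to be the only points requiring care; the remainder is a direct transcription of Theorem \ref{th5.2} through the commutative reduction supplied by Lemma \ref{le6.8}.
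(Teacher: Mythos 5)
Your proposal follows essentially the same route as the paper, which derives Theorem \ref{th6.9} directly from Lemma \ref{le6.8} and Theorem \ref{th5.2} (via the reduction to the commutative quasi $\cs$-algebra $\widetilde{\cs(a)}[\tau]$ over $\cs(a)$) without supplying further detail. Your flag on item (iv) --- that Theorem \ref{th5.2}(iv) only yields an isometry onto the closed $*$-subalgebra generated by $\mathit{1}$ and $(\mathit{1}+a)^{-1}$, which must still be reconciled with the \emph{maximal} commutative $\cs$-subalgebra $\cs(a)$ --- is a legitimate point of care that the paper itself leaves implicit.
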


Using Theorem \ref{th6.9} and applying Corollary
\ref{co5.3} for the commutative unital locally convex quasi
$\cs$-algebra $\widetilde{\cs (a)}[\tau ]$, we conclude the
following

\begin{corollary}\label{co6.10}
Let $\ca[\tau]$ and $\ca_0$ be as in Theorem \ref{th6.9}. If $a\in
\ca [\tau]_{cq+}$ and $n \in \N$, there is a unique element $b \in
\ca [\tau]_{cq+} \cdot \cs (a)$, such that $a=b^n$. The element $b$
is called commutatively quasi $n$th-root of $a$ and is denoted by
$a^{\frac{1}{n}}$.
\end{corollary}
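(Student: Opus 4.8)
The plan is to reduce the assertion to the commutative $n$th-root theory already established in Section 5, by passing to the maximal commutative $\cs$-subalgebra $\cs (a)$ and exploiting the functional calculus of Theorem \ref{th6.9}. First I would invoke Lemma \ref{le6.8}: since $a \in \ca [\tau]_{cq+}$, the algebra $\widetilde{\cs (a)}[\tau]$ is a commutative unital locally convex quasi $\cs$-algebra over the $\cs$-algebra $\cs (a)$, and $a \in \widetilde{\cs (a)}[\tau]_{q+}$. This is the decisive move, because it places $a$ inside a genuinely commutative locally convex quasi $\cs$-algebra, where the machinery of Section 5 applies directly.

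Next I would apply Corollary \ref{co5.3} to the commutative locally convex quasi $\cs$-algebra $\widetilde{\cs (a)}[\tau]$ over $\cs (a)$, where $\cs (a)$ itself plays the role of $\widetilde{\ca_0}[\norm_0]$ (being already a $\cs$-algebra). This produces a unique element $b \in \widetilde{\cs (a)}[\tau]_{q+}\cdot \cs (a)$ with $a = b^n$. Concretely, $b$ can be realized via the functional calculus $f\mapsto f(a)$ of Theorem \ref{th6.9}: taking $f(\la)=\la^{1/n}$, which is positive and continuous on $\sigma_{\cs (a)}(a)\subseteq \R_+\cup\{\infty\}$ and lies in $\cc_1(\sigma_{\cs (a)}(a))$, one sets $b := f(a)$. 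Then multiplicativity (iii) applied $n$ times (so that $f^n=u_1 \in \cc_n$) together with property (ii) gives $b^n=(f^n)(a)=u_1(a)=a$; and since $f\geqq 0$ (approximated by its bounded positive truncations, whose images lie in $\cs (a)_+$ by (iv)), the element $b$ is quasi-positive in $\widetilde{\cs (a)}[\tau]$.

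It then remains to transfer the conclusion to the stated ambient set. Since $\cs (a)\subseteq \widetilde{\ca_0}[\norm_0]$ and every net in $\cs (a)_+$ is automatically commuting, one has $\widetilde{\cs (a)}[\tau]_{q+}\subseteq \ca [\tau]_{cq+}$, whence $b\in \ca [\tau]_{cq+}\cdot \cs (a)$. Uniqueness I would deduce from the uniqueness clause of Corollary \ref{co5.3}: any commutatively quasi-positive $n$th-root of $a$ expressible through $\cs (a)$ necessarily lies in $\widetilde{\cs (a)}[\tau]_{q+}\cdot \cs (a)$ and hence coincides with $b$.

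The step I expect to be the main obstacle is precisely this final transfer of positivity and uniqueness. One must verify carefully that quasi-positivity computed inside the reduced algebra $\widetilde{\cs (a)}[\tau]$ over $\cs (a)$ (governed by $\cs (a)_+$) is faithfully matched to commutative quasi-positivity inside the larger $\ca [\tau]$ over $\ca_0$ (governed by $(\ca_0)_+$ versus $\widetilde{\ca_0}[\norm_0]_+$, which by Remark \ref{re4.6} need not coincide), and that the uniqueness established in the commutative subalgebra survives when one admits \emph{a priori} competing roots in the possibly larger set $\ca [\tau]_{cq+}\cdot \cs (a)$. Once this matching is pinned down, everything else is a routine reading-off of the functional calculus of Theorem \ref{th6.9}.
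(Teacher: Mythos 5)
Your proposal follows essentially the same route as the paper, which proves the corollary exactly by combining Lemma \ref{le6.8}, the functional calculus of Theorem \ref{th6.9}, and an application of Corollary \ref{co5.3} to the commutative unital locally convex quasi $\cs$-algebra $\widetilde{\cs(a)}[\tau]$ over $\cs(a)$. The transfer-of-positivity subtlety you flag at the end (that $\widetilde{\cs(a)}[\tau]_{q+}$ is governed by nets in $\cs(a)_+\subset\widetilde{\ca_0}[\norm_0]_+$ rather than $(\ca_0)_+$, cf.\ Remark \ref{re4.6}) is real, but the paper itself passes over it without comment, so your treatment is if anything more careful than the original.
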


\section{Applications}

Locally convex quasi C*-normed algebras arise, as we have discussed throughout this paper,
as completions of a C*-normed algebra with respect to a locally convex topology which satisfies
a series of requirements. Completions of
this sort actually occur in quantum statistics.

In statistical physics, in fact, one has to deal with systems
consisting of a very large number of particles, so large that one
usually considers this number to be {\em infinite}. One begins by
considering systems living in a {\em local region} $V$ ($V$ is,
for instance, a bounded region of $\mathbb{R}^3$ for gases or
liquids, or a finite subset of the lattice $\mathbb{Z}^3$ for
crystals) and requires that the set of local regions is directed,
i.e., if $V_1,\,V_2$ are two local regions, then there exists a
third local region $V_3$ containing both $V_1\,\mbox{ and }V_2$.
The observables on a given bounded region $V$ are supposed to
constitute a C*-algebra $\A_V$, where all $\A_V$'s have the same
norm, and so the *-algebra $\A_0$ of local observables,
$\A_0=\bigcup_V \A_V$, is a C*-normed algebra. Its uniform
completion is, obviously, a C*-algebra (more precisely, a quasi
local C*-algebra) that in the original algebraic approach was
taken as the observable algebra of the system. As a matter of
fact, this C*-algebraic formulation reveals to be insufficient,
since for many models there is no way of including in this
framework the thermodynamical limit of the local Heisenberg
dynamics \cite{bag_rev}. Then a possible procedure to follow in
order to circumvent this difficulty is to define in $\A_0$ a new
locally convex topology, $\tau$, called, for obvious reasons, {\em
physical topology}, in such a way that the dynamics in the
thermodynamical limit belongs to the completion of $\A_0$ with
respect to $\tau$. For that purpose, a class of topologies for the
*-algebra $\A_0$ of local observables of a quantum system was
proposed by Lassner in \cite{lass1, lass2}. We will sketch in what
follows this construction.
 Let $\ca_0$ be a
C*-normed algebra to be understood as the algebra of local
observables described above; thus we will suppose that $\ca_0=
\bigcup_{\lambda \in \Lambda}\ca_\lambda$, where
$\{\ca_\lambda\}_\lambda\in \Lambda$ is a family of C*-algebras
labeled  by a directed set of indices $\Lambda$. Assume that, for
every $\alpha \in \Sigma$ ($\Sigma$ a given set of indices),
$\pi_\alpha$ is a $*$-representation of $\ca_0$ on a dense
subspace $\DD_\alpha$ of a Hilbert space $\ch_\alpha$, i.e. each
$\pi_\alpha$ is a $*$-homomorphism of $\ca_0$ into the partial
$O^*$-algebra $\LL^\dag(\DD_\alpha, \ch_\alpha)$ endowed, for
instance, with  the topology $\tau_*^u(\LL^\dag(\DD_\alpha,
\ch_\alpha))$. {\color{red} We shall assume that $\pi_\alpha (x)\DD_\alpha \subset \DD_\alpha$, for every $\alpha \in \Sigma$ and $x \in \ca_0$}. Since every $\ca_\lambda$ is a $C^*$-algebra, each
$\pi_\alpha$ is a bounded {\color{red} and continuous} $*$-representation, i.e.
$\overline{\pi_\alpha(x)}\in \BB(\ch_\alpha)$, {\color{red} $\|\overline{\pi_\alpha(x)}\|\leq \|x\|_0$,} for every $x \in
\ca_0$. {\color{red} So each $\pi_\alpha$ can be extended to the $C^*$-algebra $\widetilde{\ca_0} [\|\cdot\|_0]$ (we denote the extension by the same symbol)}.The family is supposed to be {\em faithful}, in the sense
that if {\color{red} $x \in \widetilde{\ca_0} [\|\cdot\|_0]$}, $x \neq 0$, then there exists $\alpha \in
\Sigma$ such that $\pi_\alpha(x)\neq 0$. Let us further suppose
that $\DD_\alpha= \DD^\infty(M_\alpha)=\bigcap_{n \in
\N}\DD(M_\alpha^n)$, where $M_\alpha$ is a selfadjoint operator.
Without loss of generality we may assume that $M_\alpha\geq
I_\alpha$, with $I_\alpha$ the identity operator in
$\BB(\ch_\alpha)$. Under these assumptions, a {\em physical}
topology $\tau$ can be defined on $\A_0$ by the family of
seminorms
$$ p_\alpha^{f} (x)=  \|\pi_\alpha(x)
f(M_\alpha)\|+ \|\pi_\alpha(x^*) f(M_\alpha)\|, \ x \in \A_0,$$
where $\alpha \in \Sigma$ and $f$ runs over the set ${\mathcal F}$
of all positive, bounded and continuous functions $f(t)$ on
${\mathbb R}^+$ such that
$$ \sup_{t \in {\mathbb R}^+} t^kf(t)<\infty, \quad \forall \
k=0,1,2,\ldots.$$ Then, $\ca_0[\tau]$ is a locally convex
$*$-algebra with separately continuous multiplication (i.e.
(T$_1$) holds). In order to prove that $\widetilde{\A_0}[\tau]$ is
a locally convex quasi $C^*$-normed algebra, we need to prove
that {\color{red}(T$_2$)-(T$_5$) also} hold. As for
(T$_2$), we have, for every $\alpha \in \Sigma$,
$$p_\alpha^{f} (x)= \|\pi_\alpha(x)
f(M_\alpha)\|+ \|\pi_\alpha(x^*) f(M_\alpha)\| \leq
2\|f(M_\alpha)\| \|\pi_\alpha(x)\| \leq 2\|f(M_\alpha)\|\|x\|_0, \
 x \in \ca_0.$$

The compatibility of $\tau$ with $\|\cdot\|_0$ follows easily from the closedness of
the operators $f(M_\alpha)^{-1}$ and
the faithfulness of the family $\{\pi_\alpha\}_{\alpha \in \Sigma}$ of *-representations.

The condition (R) does not hold, in general, but, on the other hand,
if $x,y \in \A_0$ with $xy=yx$, we have
\begin{eqnarray*}p_\alpha^{f} (xy)&=& \|\pi_\alpha(xy)
f(M_\alpha)\|+ \|\pi_\alpha((xy)^*)
f(M_\alpha)\|\\&=&\|\pi_\alpha(xy) f(M_\alpha)\|+
\|\pi_\alpha(x^*y^*)f(M_\alpha)\|\\ &\leq & \|\pi_\alpha(x)\|(
\|\pi_\alpha(y) f(M_\alpha)\|+ \|\pi_\alpha(y^*)f(M_\alpha)\| ) \\
&=& \|\pi_\alpha(x)\|p_\alpha^{f} (y) \leq  \|x\|_0 p_\alpha^{f}
(y).\end{eqnarray*} Hence (T$_3$) holds.
{\color{red} As for (T$_4$), we begin with noticing that for every $\alpha \in \Sigma$, $\pi_\alpha(\A_0)$ is an O*-algebra of bounded operators in $\DD_\alpha$. Hence, its closure in $\LL^\dag(\DD_\alpha, \ch_\alpha)[\tau_*^u(\LL^\dag(\DD_\alpha,
\ch_\alpha))]$ is a locally convex C*-normed algebra of operators, by Proposition \ref{pro6.1}. Moreover, every $\pi_\alpha$ can be extended by continuity to $\widetilde{\ca_0} [\|\cdot\|_0]$. The extension, that we denote by the same symbol, takes values in $\LL^\dag(\DD_\alpha, \ch_\alpha)[\tau_*^u(\LL^\dag(\DD_\alpha,
\ch_\alpha))]$, since this space is complete. Now, if $\{x_\lambda\}$ is a net in $\cu (\widetilde{\ca_0} [\norm_0 ])_+$ $\tau$-converging to $x \in \widetilde{\ca_0} [\norm_0 ])$, then $x=x^*$ and $\pi_\alpha(x_\lambda)\to \pi_\alpha(x)$ in $\LL^\dag(\DD_\alpha, \ch_\alpha)[\tau_*^u(\LL^\dag(\DD_\alpha,
\ch_\alpha))]$, for every $\alpha \in \Sigma$. Thus $\pi_\alpha(x)\geq 0$ and $\|\pi_\alpha(x)\|\leq 1$, for every $\alpha\in \Sigma$, since the same is true for every $x_\lambda$. By constructing a faithful representation $\pi$ by direct sum of the $\pi_\alpha$'s, one easily realizes that $x \geq 0$ and $\|x\|_0\leq 1$. The inclusion $\widetilde{\ca_0} [\tau]_{q+}\cap \widetilde{\ca_0}
    [\norm_0 ]\subset \widetilde{\ca_0} [\norm_0 ]_+$ in Condition (T$_5$) can be proved in similar fashion. The converse inclusion comes from Lemma \ref{le4.5}. Thus Condition (T$_5$) holds.

Then we conclude that
\begin{stat}$\ca\equiv
\widetilde{\ca_0} [\tau]$ is a locally convex quasi C*-normed
algebra, which can be understood as the quasi *-algebra of the
observables of the physical system. \end{stat}
}
\bigskip A more concrete realization of the situation discussed above is obtained for the so-called BCS model.
Let $V$ be a finite region of a $d$-dimensional lattice $\Lambda$ and $| V |$ the number
of points in $V$. The local $C^*$-algebra ${\A}_V$ is generated by the Pauli operators $\vec\sigma_p = (\sigma^1_p,
\sigma^2_p, \sigma^3_p)$ and by the unit $2\times 2$ matrix $e_p$ at every point
 $p \in V$. The $\vec\sigma_p$'s are copies of the Pauli matrices localized in $p$.

If $V \subset V^{'}$ and $A_V \in {\A}_V$, then $A_V \rightarrow A_{V^{'}} = A_V \otimes ({{\atop \bigotimes }
\atop{p \in V^{'} \setminus V}} e_p)$ defines the natural imbedding of ${\A}_V$ into ${\A}_{V^{'}}$.

Let ${\vec n}=(n_1, n_2, n_3)$ be a unit vector in ${{\mathbb R}}^3$, and put $ (\vec\sigma\cdot {\vec n}) = n_1 \sigma^1
 + n_2 \sigma^2 + n_3 \sigma^3.
$ Then, denoting as $Sp(\vec\sigma \cdot {\vec n})$ the spectrum of $\vec\sigma \cdot {\vec n}$,
we have $ Sp(\vec\sigma \cdot {\vec n}) = \{ 1,
-1\}. $ Let $|\vec n \rangle\in {\mathbb C}^2$ be a unit eigenvector associated with $1$.

Let now denote by ${\mathbf n}:= \{ \vec n_p \}_{p\in\Lambda}$  an
infinite sequence of unit vectors in ${{\mathbb R}}^3$  and $| \mathbf{n} \rangle = {{\atop \bigotimes }\atop{p}} |
\vec n_p\rangle$
 the corresponding unit vector in
the infinite tensor product ${\cal H}_\infty = {{\atop \bigotimes }\atop{p}}
 {\mathbb C}_p^2$.
We put $ {\A}_0 = \bigcup_V {\A}_V $ and $ {\cal D}^0_{\mbox{\small$\bfn$}} = {\A}_0 |\bfn\rangle $
and we denote the closure
of ${\cal D}^0_{\mbox{\small$\bfn$}}$ in ${\cal H}_\infty$ by
 $ {\cal H}_{\mbox{\small$\bfn$}}$.
As we saw above, to any sequence $ \bfn$  of three-vectors there corresponds a state $|\bfn\rangle$ of the
system. Such a state defines a realization $\pi_{\mbox{\small$\bfn$}}$ of ${\A}_0$
in the Hilbert space ${\cal H}_{\mbox{\small$\bfn$}}$. This
representation is faithful, since the norm completion ${\A}_S$ of ${\A}_0$ is a  simple C*-algebra.
A special basis for $ {\cal H}_{\mbox{\small$\bfn$}}$ is obtained from
the {\em ground} state $|\bfn\rangle$ by {\em flipping} a finite number of spins using the following strategy:\\
Let $\vec n$ be a unit vector in ${{\mathbb R}}^3$, as above, and $ |\vec n\rangle$ the corresponding vector
of ${\mathbb C}^2$. Let us choose two other unit vectors ${\vec n}^1,
{\vec n}^2$ so that $({\vec n}, {\vec n}^1, {\vec n}^2)$
form an orthonormal basis of ${{\mathbb R}}^3$. We put $ {\vec n}_{\pm} = \frac{1}{2} ({\vec n}^1 \pm i{\vec n}^2) $
and define
$ |m,\vec n\rangle := (\vec\sigma \cdot {\vec n}_{-})^m |\vec n\rangle \ \ (m=0,1). $ Then we have
$$
(\vec\sigma \cdot {\vec n}) |m, \vec n\rangle = (-1)^m |m,\vec n\rangle \ \ (m=0,1).
$$
Thus, the set $ \left \{|\mathbf{m}, \mathbf{n}\rangle = {{\atop
\bigotimes }\atop{p}} |m_p, \vec{n}_p\rangle ;\ m_p = 0, 1,\ \
{\displaystyle \sum_p} m_p < \infty \right \}$ forms an orthonormal
basis in ${\cal H}_{\mbox{\small$\mathbf{n}$}}$.

In
this space we define the unbounded self-adjoint operator ${M}_{\mbox{\small$\mathbf{n}$}}$
 by
\begin{equation}
{M}_{\mbox{\small$\mathbf{n}$}} |\mathbf{m}, \mathbf{n}\rangle= (1+\sum_p m_p)|\mathbf{m}, \mathbf{n}\rangle.
\label{opem}
\end{equation}
${M}_{\mbox{\small$\mathbf{n}$}}$ counts the number of the flipped
 spins in $|\mathbf{m}, \mathbf{n}\rangle$ with
respect to the ground state $| \mathbf{n} \rangle$. Now we put
$$
{\DD}_{\mbox{\small$\mathbf{n}$}} = \bigcap_k
{\DD}({M}_{\mbox{\small$\mathbf{n}$}}^k),
$$
 The representation $\pi_{\mbox{\small$\mathbf{n}$}}$ is defined on the basis vectors
 $\{ |\mathbf{m}, \mathbf{n}\rangle \}$ by
 $$
\pi_{\mbox{\small$\mathbf{n}$}} (\sigma^i_p)|\mathbf{m}, \mathbf{n}\rangle= \sigma^i_p \mid m_p, \vec n_p\rangle
  \otimes ({{\atop
\prod }\atop{\scriptstyle{p^{'} \neq p}}} \otimes \mid m_{p ^{'}}, \vec n_{p^{'}}\rangle)\ \ \ (i= 1, 2, 3).
$$
This definition is then extended in obvious way to the whole space
${\cal H}_{\mbox{\small$\mathbf{n}$}}$. It turns out that
$\pi_{\mbox{\small$\mathbf{n}$}}$ is a {\em bounded} representation
of $\A_0$ into ${\cal H}_{\mbox{\small$\mathbf{n}$}}$. For more
details we refer to \cite{trapani2,bagtra3}. Hence, the procedure
outlined above applies, showing that a natural framework for
discussing the BCS model is, indeed, provided by locally convex
quasi C*-normed algebras considered in this paper. We argue that an
analysis similar to that of \cite{bagtra3} can be carried out also
in the present context, so that for suitable finite volume
hamiltonians, the thermodynamical limit of the local dynamics can be
appropriately defined in $\widetilde{\ca_0} [\tau]$.

%%%%%%%%%%%%%%%%%%%%%%%%%%%%%%%%%%%%%%%%%%%%%%%%%%%%%%%%%%%%%%%%%%%%%%%%%%%%%%%%%%%%%%%%%%%%%%%

\begin{flushleft}
Dipartimento di Matematica ed Applicazioni, Fac Ingegneria,
Universita di\\ Palermo, I-90128 Palermo, Italy\\
E-mail address: bagarell@unipa.it
\end{flushleft}

\begin{flushleft}
Department of Mathematics, University of Athens,
Panepistimiopolis,\\ Athens 15784, Greece\\ E-mail address:
fragoulop@math.uoa.gr
\end{flushleft}

\begin{flushleft}
Department of Applied Mathematics, Fukuoka University,\\ Fukuoka 814-0180, Japan\\
E-mail address: a-inoue@fukuoka-u.ac.jp
\end{flushleft}

\begin{flushleft}
Dipartimento di Matematica ed Applicazioni, Universita
di Palermo,\\ I-90123 Palermo, Italy\\
E-mail address: trapani@unipa.it
\end{flushleft}

\end{document}